\newtheorem{theorem}{Theorem}[section]
\newtheorem{proposition}[theorem]{Proposition}
\newtheorem*{proposition*}{Proposition}
\newtheorem{lemma}[theorem]{Lemma}
\newtheorem{corollary}[theorem]{Corollary}
\newtheorem{definition}[theorem]{Definition}
\newcommand{\Eq}[1]{Eq.~(\ref{#1})}
\newcommand{\Fig}[1]{Fig.~\ref{#1}}
\newcommand{\Ref}[1]{Ref.~\cite{#1}}
\newcommand{\cc}[1]{~\cite{#1}}
\newcommand{\beq}{\begin{eqnarray}}
\newcommand{\eeq}{\end{eqnarray}}
\newcommand{\ket}[1]{|#1\rangle}
\newcommand{\bra}[1]{\langle#1|}
\newcommand{\braket}[2]{\langle#1|#2\rangle}
\newcommand{\Qb}{\mathcal{H}}
\newcommand{\EqDef}{:=}
\newcommand{\DL}{\mathop{\rm DL}\nolimits}
\newcommand{\Id}{\ensuremath{\mathbbm{1}}}
\DeclareMathOperator{\poly}{poly}
\newcommand{\bigO}[1]{\ensuremath{O\left(#1\right)}}
\newcommand{\bOmega}[1]{\ensuremath{\Omega\left(#1\right)}}
\newcommand{\bTheta}[1]{\ensuremath{\Theta\left(#1\right)}}
\newcommand{\spg}{\gamma}
\newcommand{\gs}{\Gamma}
\newcommand{\uI}{\underline{I}}
\newcommand{\Span}{\ensuremath{\mathop{\rm{Span}}\nolimits}}
\newcommand{\Trim}{\ensuremath{\mathop{\rm{Trim}}\nolimits}}
\newcommand{\CC}{\ensuremath{\mathbb{C}}}
\newcommand{\N}{\ensuremath{\mathbbm{N}}}
\newcommand{\mH}{\mathcal{H}}
\newcommand{\norm}[1]{\|#1\|}
\newcommand{\eps}{\varepsilon}
\newcommand{\inote}[1]{}
\newcommand{\anote}[1]{}
\newcommand{\unote}[1]{}
\newcommand{\mpr}{\textsc{Merge}}
\newcommand{\low}{\textsc{Low-Space}}
\newcommand{\ip}[2]{\langle #1 , #2 \rangle}  
\newcommand{\ran}[1]{\mbox{range($#1$)}}
\newcommand{\gen}{\textsc{Generate}}
\newcommand{\Time}{\ensuremath{\textsc{M}}}
\newcommand{\close}{\angle_m}
\newcommand{\alone}{e^{\tilde{O}\big(\frac{1}{\gamma} \log ^3 d\big)}}
\newcommand{\altwo}{e^{\tilde{O}\big(\gamma^{-1/4}\log ^{3/4} (\frac{1}{\delta} )\log  d\big)}}
\newcommand{\althree}{e^{\tilde{O}\big(\frac{\log n}{\mu} \log^3 d\big)}}
\newcommand{\aldgs}{e^{\tilde{O}\big(\frac{1}{\gamma} \log ^3 d + \frac{1}{\gamma^{1/4}}\log ^{3/4} (\frac{1}{\delta})\log d\big)}}
\newcommand{\aldge}{e^{\tilde{O}\big(\frac{1}{\gamma} \log ^3 d + \frac{1}{\gamma^{1/4}}\log ^{3/4}(\frac{n}{\delta})\log d\big)}}
\newcommand{\allds}{e^{\tilde{O}\big(\frac{\log n}{\mu} \log ^3 d + (\frac{\log n}{\mu})^{1/4}\log ^{3/4} (\frac{1}{\delta})\log d\big)}}
\newcommand{\allde}{e^{\tilde{O}\big(\frac{\log n}{\mu} \log ^3 d + (\frac{\log n}{\mu})^{1/4}\log ^{3/4}(\frac{n}{\delta})\log d\big)}}
\newcommand{\ddelta}{10^{-5}}
\begin{document}

\title{Rigorous RG algorithms and area laws for \\low energy
eigenstates in~1D}
\author{Itai Arad\thanks{Centre for Quantum Technologies (CQT), National
University of Singapore, Singapore}
  \and Zeph Landau\thanks{Electrical Engineering and Computer Sciences, 
  University of California, Berkeley, CA 94720, U.S.A.}
  \and Umesh Vazirani\thanks{Electrical Engineering and Computer Sciences, 
  University of California, Berkeley, CA 94720, U.S.A.}
  \and Thomas Vidick\thanks{Department of Computing and Mathematical Sciences,
    California Institute of Technology, Pasadena, USA. email:
    \texttt{vidick@cms.caltech.edu}}}
\date{}
\maketitle

\begin{abstract}
One of the central challenges in the study of quantum many-body systems is the complexity of simulating them on a classical computer. A recent advance~\cite{ref:LVV2013-1Dalg} gave a polynomial time algorithm to compute a succinct classical description for unique ground states of gapped 1D quantum systems. Despite this progress many questions remained unsolved, including whether there exist efficient algorithms when the ground space is degenerate (and of polynomial dimension in the system size), or for the polynomially many lowest energy states, or even whether such states admit succinct classical descriptions or area laws.

In this paper we give a new algorithm, based on a rigorously justified RG type transformation, for finding low energy states for 1D Hamiltonians acting on a chain of $n$ particles. In the process we resolve some of the aforementioned open questions, including giving a polynomial time algorithm for
$\poly(n)$ degenerate ground spaces and an $n^{O(\log n)}$ algorithm
for the $\poly(n)$ lowest energy states (under a mild density condition). 
For these classes of systems the existence of a succinct classical
description and area laws were not rigorously proved before this work. The algorithms are natural and efficient, and for the case of finding unique ground states for
frustration-free Hamiltonians the running time is $\tilde{O}(nM(n))$, where $M(n)$ is the time required to multiply two $n\times n$ matrices.
\end{abstract}

\section{Introduction}
\label{sec:intro}

One of the central challenges in the study of quantum systems is
their exponential complexity~\cite{Feynman1982simulating}: the state
of a system on $n$ particles is given by a vector in an
exponentially large Hilbert space, so even giving a classical
description (of size polynomial in $n$) of the state is a 
challenge. The task is not impossible a priori, as the physically
relevant states lie in a tiny corner of the Hilbert space.  To be
useful, the classical description of these states must support the
efficient computation of expectation values of local observables.
The renormalization group formalism~\cite{ref:Wilson1975-RG}
provides an approach to this problem by
suggesting that physically relevant quantum states can be
coarse-grained at different length scales,
thereby iteratively eliminating the ``irrelevant'' degrees of freedom.
Ideally,  by only retaining
physically relevant degrees of freedom such a coarse-graining process successfully doubles the length scale while maintaining the total description size constant. 
This idea lies at the core of
Wilson's numerical renormalization group (NRG) approach that
successfully solved the Kondo problem~\cite{ref:Wilson1975-RG}. The
approach was subsequently improved by White\cc{ref:White1992-DMRG1,ref:White1993-DMRG2}, to obtain the famous Density Matrix
Renormalization Group (DMRG) algorithm\cc{ref:White1992-DMRG1,ref:White1993-DMRG2}, which is widely used for as a numerical heuristic for identifying the
ground and low energy states of 1D systems. 

Formally understanding the success of DMRG (and NRG) has been
extremely challenging, as it touches on deep questions about how
non-local correlations such as entanglement arise from 
 Hamiltonians with local interactions. A major advance in our understanding
of these questions came through the landmark result by Hastings~\cite{ref:Hastings2007-AL} 
bounding entanglement for gapped 1D systems with unique ground state. Hasting's work was followed by a sequence of results substantially strengthening the bounds 
(see e.g. the review article~\cite{ref:ECP2010-ALrev}). In addition to the succinct classical description guaranteed by these results,
 a recent advance~\cite{ref:LVV2013-1Dalg} gave a polynomial 
time algorithm to efficiently compute such a description. 
While the primary goal of this
paper is to present rigorous new results about the nature of
entanglement in low-energy states of 1D systems, along with efficient
classical algorithms for solving such systems, we believe that the techniques
we introduce also shed new light on the Renormalization Group
(RG) framework. 

\medskip

We  let $\mH= (\mathbb{C} ^d )^{\otimes n}$ denote the Hilbert space of $n$ particles of constant dimension $d$ arranged on a line.  We consider the class of local Hamiltonians $H= \sum_i H_i$ where each $H_i$ is a positive semidefinite operator of norm at most $1$ acting on the $i$-th and $(i+1)$-st particles.  The new algorithms apply to the following  classes of 1D Hamiltonians:

  \begin{enumerate}
    \item {\em Hamiltonians with a degenerate gapped ground 
      space (DG):} $H$ has smallest eigenvalue $\eps_0$ with
      associated eigenspace of dimension $r=\poly(n)$, and second
      smallest eigenvalue $\eps_1$ such that $\eps_1- \eps_0\geq
      \gamma$. 
      
    \item {\em Gapless Hamiltonians with a low density of 
      low-energy states (LD):} The dimension of the space of all
      eigenvectors of $H$ with eigenvalue in the range $[\eps_0,
      \eps_0 + \eta]$, for some constant $\eta>0$, is $r=\poly(n)$.
  \end{enumerate}
	
For both classes of Hamiltonians, our results show the existence of succinct representations in the form of matrix product states (MPS; see e.g.~\cite{schollwock2011density,bridgeman2016hand} for background material on MPS and their use in variational algorithms) for a basis of (a good approximation to) the ground space (resp. low energy subspace) of the Hamiltonian. The bond dimension of the MPS is polynomial in $r$ and $n$ and exponential in $\gamma^{-1}$ (under assumption (DG)) or $\eta^{-1}$ (under assumption (LD)).  The algorithms return these MPS representations in polynomial time in case (DG), and quasi-polynomial time in case (LD). For the special case of finding unique ground states for
frustration-free Hamiltonians the algorithm is particularly efficient, with a running time of $\tilde{O}(nM(n))$, where $M(n)$ is the time required to multiply two $n\times n$ matrices. 

Our assumptions are relatively standard in the literature on 1D local Hamiltonians. 
For an example of the first case, where the system has a spectral gap but the ground space is
degenerate with polynomially bounded degeneracy, see e.g.~\cite{de2010ground,bravyi2015gapped}, who consider a wide class of ``natural''
frustration-free local Hamiltonians in 1D for which the dimension of
the ground space scales linearly with the number of particles. It is
also interesting to consider the case of systems which display a
vanishing gap (as the number of particles increases), while still
maintaining a polynomial density of low-energy eigenstates (see for
instance~\cite{keating2015spectra}). The assumption of polynomial
density arises naturally as one considers local perturbations of
gapped Hamiltonians: while conditions under which the existence of a
spectral gap remains stable are known~\cite{bravyi2010topological},
it is expected that as the perturbation reaches a certain constant
critical strength the gap will slowly close; in this scenario it is
reasonable (though unproven) to expect that low-lying eigenstates should remain
amenable to analysis.

Our results should be understood in the context of a substantial
body of prior work studying ground state entanglement in 1D systems.
The techniques employed in this domain typically break down for low energy and degenerate ground states, and few results were known for these questions:
Chubb and Flammia~\cite{ref:CF2015-1Dalg}
extended the approach from~\cite{ref:LVV2013-1Dalg} and subsequent improvements by Huang~\cite{ref:Haung2014-1Dalg} to establish an
efficient algorithm (and area law) for gapped Hamiltonians with a
constant degeneracy in the ground space. Masanes\cc{masanes2009area}
proves an area law with logarithmic correction under a strong
assumption on the density of states, together with an additional assumption on the
exponential decay of correlations in the ground state. 

\medskip

Our algorithm provides a novel perspective on the well known Renormalization Group (RG) formalism within condensed matter physics~\cite{ref:Wilson1975-RG}. 
Our approach is based on the idea that if our goal is to
approximate a subspace $T$ (of low energy states, say) on $n$ qubits, 
the algorithm can make progress by locally maintaining a small dimensional subspace $S \subset \mH_A$ on a set $A$ consisting of $k$ particles, with the property that $T$ is close to $S \otimes \mH_B$, where
$B$ denotes the remaining $n-k$ particles. A major challenge here
is measuring the quality of this partial solution. This is accomplished by a suitable generalization of the
definition of a viable set introduced in~\cite{ref:LVV2013-1Dalg} to the
setting of a target subspace $T$, and is one of the conceptual
contributions of this paper (Section \ref{s:vs}). A viable set has two relevant parameters, its dimension $s$ and approximation quality $\delta$ (called the viability parameter).  
We introduce a number of procedures for manipulating viable sets (see Section \ref{ss:p}). A central procedure is {\it random projection}. This procedure drastically 
cuts down the dimension of a viable set, at the expense of degrading its viability $\delta$. Our analysis shows that to a first order, the procedure of random projection achieves a trade-off between sampled dimension and approximation quality that is such that the ratio of the sampled dimension and the overlap ($1- \delta$) is invariant (see Lemma \ref{l:rs}). A second procedure, {\it error reduction}, improves the quality of 
the viable set at the expense of increasing its dimension.  This procedure is based on the construction of a suitable 
class of approximate ground state projections (AGSPs)~\cite{ref:ALV2012-AL,ref:AKLV2013-AL} --- spectral AGSPs --- and improves the dimension-quality trade-off, at the cost of increasing the complexity of the underlying MPS representations. Setting this last cost aside, the two procedures
can  be combined  to achieve what we call {\it viable set amplification}: a reduction in the dimension of a viable set, while maintaining its viability parameter unchanged (Section \ref{ss:vsa}). Viable set amplification is key to both the area law proofs and the efficient algorithms given in this paper.

In addition to its dimension as a vector space, another important measure of the complexity of a viable set is the maximum bond dimension of MPS representations for its constituent vectors --- this may be thought 
of as a proxy for the space required to actually write out a basis for the viable set. A final procedure of {\it bond trimming} helps us keep this complexity in check (Section \ref{ss:bt}). Bond trimming provides an 
efficient procedure to replace a viable set with another one of the same dimension and similar viability parameter, but composed of vectors with smaller bond dimension, 
{\em provided} that the target subspace $T$ has a spanning set of vectors with small bond dimension --- a fact that will follow from our area laws. 

The basic building block for the algorithms in this paper combines the above procedures into a process called \mpr. \mpr\ starts with viable sets defined on adjacent sets of particles, and combines them 
into a single viable set by first taking their tensor product. This has the effect of squaring the dimension and slightly degrading the quality of the viable set. Applying  viable set 
amplification restores both dimension and quality (for suitably chosen parameters). Thus \mpr\  can be used as a building block, starting with viable sets defined on individual sites and iteratively merging results along a binary tree. Since there are only $O(\log (n))$ iterations, and the bond dimension may grow exponentially with the number of iterations, 
this only yields an $n^{O(\log n)}$ algorithm. To achieve a polynomial time algorithm, each iteration of \mpr\  is modified into a procedure \mpr' which incorporates a step of bond trimming; we refer to Section~\ref{sec:merge-process} for further discussion.

A tensor network picture of \mpr\ is provided in the
figure below.\footnote{We are grateful to Christopher T.
Chubb for originally suggesting these pictures to us.}  Beginning with inputs representing subspaces of
$\ell$ qubits shown on the left, the \mpr\  process (shown on the
right) outputs a representation of a small subspace on $2\ell$
qubits. The result is a partial isometry that is reminiscent
of a MERA~\cite{ref:Vidal2008-MERA,ref:Vidal2009-MERA}, a more complex tensor network than MPS which can in some cases arise as part of a renormalization procedure~\cite{evenbly2015tensor}.
Completing the \mpr\ process into the final algorithm, however, requires an additional step of trimming which complicates the tree-like diagram shown in the figure and results in a more complex tensor network that has no direct analogue in the literature. 
We also note that whereas RG procedures can typically be
realized as a tensor network on a binary tree (where each node
represents the partial isometry associated with selecting only a
small portion of the previous space), the use of the AGSP in our
construction allows for selection of the small subspace that can be outside the
tensor product of the previous two spaces (in this respect it may be interesting to contrast the advantage gained from AGSPs to the use of disentanglers in MERA).  

\begin{center} 
  \includegraphics[scale=.50]{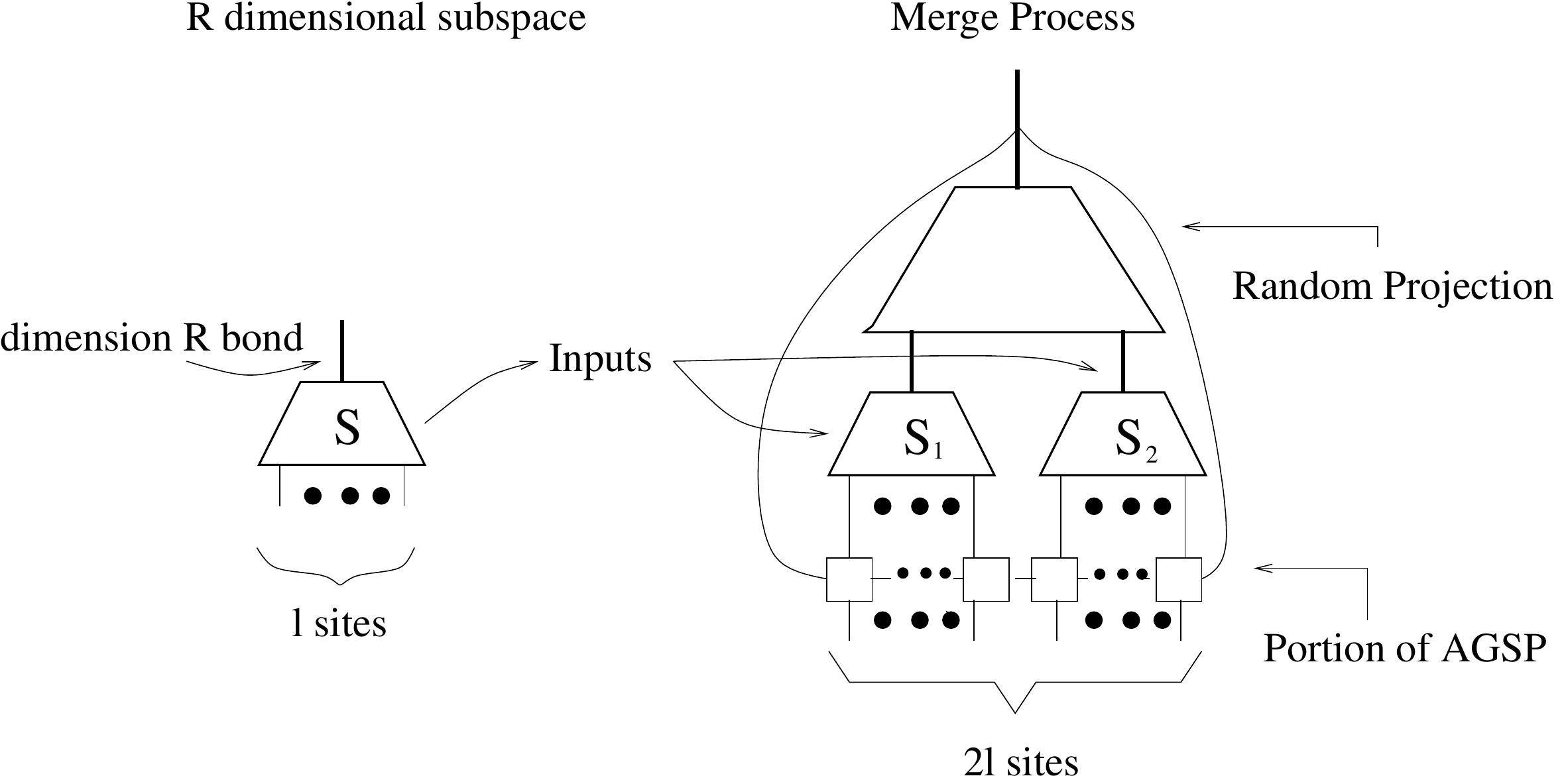} 
\end{center}


A major challenge in making the above sketch effective is the construction of appropriate AGSPs. Our new \emph{spectral AGSPs} simultaneously combine the desirable properties that had been achieved 
previously in different AGSPs. In particular, they are efficiently computable, have tightly controlled bond dimension (the parameter $D$) at two pre-specified cuts, and have bond dimension bounded by a polynomial in $n$ at every other cut. Achieving this requires a substantial amount of technical work,
building upon the Chebyshev construction of~\cite{ref:AKLV2013-AL}, ideas about soft truncation of Hamiltonians (providing efficient means of achieving similar effects to the hard truncation studied in e.g.~\cite{AradKZ14energy}), a series expansion of
$e^{-\beta H}$ known as the \emph{cluster
expansion}~\cite{hastings2006solving,KlieschGKRE14cluster}, as well as a recent
nontrivial efficient encoding of the resulting operator due
to~\cite{MolnarSVC15gibbsmpo}. The constructions of spectral AGSP appear in Section~\ref{sec:arealaws} (non-efficient constructions) and Section~\ref{sec:efficient} (efficient constructions).


\medskip

Our new algorithms could potentially be made very
efficient. The main bottlenecks are the complexity of the AGSP and
the MPS bond dimension that must be maintained. In the case of a
frustration-free Hamiltonian with unique ground state we obtain a
running time of $O(2^{O(1/\gamma^2)}n^{1+o(1)}\Time(n))$, where
$\Time(n)$ is matrix multiplication time. This has an exponentially
better scaling in terms of the spectral gap $\spg$ (due to avoidance
of the $\eps$-net argument) and saves a factor of $n/\log n$ (due to
the logarithmic, instead of linear, number of iterations) as
compared to an algorithm for the same problem considered
in~\cite{ref:Huang-copy}. We speculate that it might further be
possible to limit the bond dimension of all MPS considered to
$n^{o(1)}$ (instead of $n^{1+o(1)}$ currently), which, if true,
would imply a nearly-linear time $O(n^{1 + o(1)})$ algorithm.

Subsequently to the completion of our work, a heuristic variant of the algorithm described in this paper has been implemented numerically~\cite{roberts2017rigorous}. Although this initial implementation typically suffers from a $\sim 5-10\times$ slowdown compared to the well-established DMRG, it provides encouragingly accurate results, matching those of DMRG in ``easy'' cases, but also sometimes outperforming DMRG, e.g. in cases where the ground space degeneracy is high (linear in system size) or for some critical systems.

\paragraph*{Organization} The remainder of the paper is organized as follows. In Section~\ref{s:vs} we start by introducing viable sets, and provide a comprehensive set of procedures to work them; these procedures form the building blocks of our area laws and algorithms. With these procedures in place, in Section~\ref{sec:overview} we  provide an overview of our proof technique; this section may be the best place to start reading the paper for a reader new to our results. The following three sections are devoted to a formal fleshing out of our results. In Section~\ref{sec:arealaws} we prove our area laws by showing the existence of good AGSP constructions. In Section~\ref{sec:efficient} we provide efficient analogues of these AGSP constructions, which are employed in Section~\ref{sec:algorithms} to derive our efficient algorithms. We conclude in Section~\ref{sec:discussion} with a discussion of our results and possible improvements.

\section{Viable sets}
\label{sec:viable}  \label{s:vs}

Our approach starts with the idea that the challenge of finding a solution --- a low-energy state --- within a Hilbert Space $\Qb$ of exponential size can be approached by starting with ``partial solutions'' on small subsystems, and gradually combining those into  ``solutions'' defined on larger and larger subsystems.   To implement this approach we need a formal notion of partial solution, as well as techniques for working with them.  This is done in the next few subsections where we introduce {\it viable sets} to capture ``partial solutions'', and describe  procedures to efficiently work with such viable sets.

\subsection{Definition and basic properties}

Given a subset $A$ of
particles, we may decompose the full Hilbert space of the system as a
tensor product $\mH=\mH_A\otimes\mH_B$, where $\mH_A$ is the Hilbert space
associated with particles in $A$ and $\mH_B$ is the Hilbert space associated with the remaining particles in the system. Our ultimate goal is to compute (an approximation to) some subspace $T\subset \mH$. Towards this we wish to 
measure partial progress made while processing only particles in the subset $A$. This can be expressed through the sub-goal 
of finding a subspace $S\subset \mH_A$ with the guarantee that 
$S_{ext} \EqDef S\otimes \mH_B$ contains $T$.  Since we need to allow the possibility for 
approximation errors, we are led to the following definition:

\begin{definition}[{\bf Viable Set}] \label{d:vs}
  Given $0\leq \delta \leq 1 $ and a subspace $T\subseteq \mH=\mH_A\otimes \mH_B$, a
  subspace $S\subseteq \mH_A$ is \emph{$\delta$-viable} for $T$ if
  \begin{equation}\label{vsdef} 
    P_T P_{S_{ext}} P_T \geq (1- \delta) P_T, 
  \end{equation}
  where $S_{ext}\EqDef S\otimes\mH_B$ and $P_T$ (resp. $P_{S_{ext}}$) is the orthogonal projection onto $T$ (resp. $S_{ext}$). We refer to $\delta$ as the \emph{viability} of the set, and $\mu=1-\delta$ as its \emph{overlap}. 
\end{definition}

This definition  
captures the notion that a reasonable approximation of $T$ can be found within the subspace $S \otimes \mH_B$. It generalizes the definition of a viable set 
from~\cite{ref:LVV2013-1Dalg}, which was specialized to the case where $T$ is a
one-dimensional subspace containing a unique ground state. In~\cite{ref:CF2015-1Dalg,ref:Haung2014-1Dalg} the definition was extended to handle degenerate ground spaces by explicitly requiring that the viable set support orthogonal vectors that are good approximations to orthogonal ground states. Here we avoid making any direct reference to a basis, or families of orthogonal vectors, and instead work directly with subspaces. 
 
While the notion of viable set is quite intuitive for small $\delta$, our arguments  also involve viable sets with parameter $\delta$ close to $1$ (alternatively, $\mu = 1-\delta$ close to $0$, where $\mu$ is a parameter we will refer to as the \emph{overlap} of the viable set), a regime where there is less intuition. A helpful interpretation of the definition is that it formalizes the fact that for a viable set $S$, the image of the unit ball of $S_{ext}$ when projected to $T$ contains the ball of radius $(1-\delta)$. 

\begin{lemma} \label{l:2} 
  If $S$ is $\delta$-viable for $T$ for some $0\leq \delta <1$ then for every $\ket{t} \in T$ of
  unit norm, there exists an $\ket{s}\in S_{ext}$ such that $P_T\ket{s}=\ket{t}$ and $\norm{\ket{s}}\le  \frac{1}{1-\delta}$.
\end{lemma}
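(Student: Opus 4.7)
The plan is to turn the operator inequality defining viability into an invertibility statement on $T$, and then to produce $\ket{s}$ by applying the inverse of the relevant compression.

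First, I would consider the restriction $M \EqDef P_T P_{S_{ext}} P_T\big|_T$, viewed as a positive semidefinite operator on the finite-dimensional subspace $T$. The viability condition (\ref{vsdef}) says exactly $M \geq (1-\delta) \Id_T$ as operators on $T$. Since $\delta<1$, this shows $M$ is invertible on $T$ with operator-norm bound $\|M^{-1}\| \leq (1-\delta)^{-1}$. The role of the hypothesis $\delta<1$ is isolated to this single step.

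Next, given $\ket{t}\in T$ of unit norm, I would define $\ket{t'} \EqDef M^{-1}\ket{t} \in T$ and then set $\ket{s} \EqDef P_{S_{ext}}\ket{t'}$, which lies in $S_{ext}$ by construction. To check $P_T\ket{s}=\ket{t}$, observe that since $\ket{t'}\in T$ we have $P_T\ket{t'}=\ket{t'}$, and so $P_T\ket{s} = P_T P_{S_{ext}} P_T \ket{t'} = M\ket{t'} = \ket{t}$.

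For the norm bound, I would compute
\begin{equation*}
\|\ket{s}\|^2 \;=\; \langle t'|P_{S_{ext}}|t'\rangle \;=\; \langle t'|P_T P_{S_{ext}} P_T|t'\rangle \;=\; \langle t'|M|t'\rangle \;=\; \langle t|M^{-1}|t\rangle \;\leq\; \frac{1}{1-\delta},
\end{equation*}
using $\ket{t'}=M^{-1}\ket{t}$ and $M^{-1}\preceq (1-\delta)^{-1}\Id_T$ on $T$. Taking square roots yields $\|\ket{s}\| \leq (1-\delta)^{-1/2}$, which is in fact stronger than the stated bound $(1-\delta)^{-1}$, so the claim follows a fortiori. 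The only real obstacle is the initial step of recognizing that the viability inequality should be read as a lower bound on a compressed operator; once that is in place the rest is bookkeeping with projections.
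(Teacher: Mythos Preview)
Your proof is correct and follows essentially the same idea as the paper: both arguments rest on the observation that $P_T P_{S_{ext}} P_T$, restricted to $T$, is invertible with inverse bounded by $(1-\delta)^{-1}$, and then produce $\ket{s}$ as $P_{S_{ext}}$ applied to a suitable preimage in $T$. The paper phrases this via a general range-containment lemma followed by a Cauchy--Schwarz estimate, obtaining only $\|\ket{s}\|\le (1-\delta)^{-1}$; your more direct computation $\|\ket{s}\|^2=\langle t'|M|t'\rangle=\langle t|M^{-1}|t\rangle$ yields the sharper bound $\|\ket{s}\|\le (1-\delta)^{-1/2}$, which is a genuine (if minor) improvement over what the paper states.
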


The proof of Lemma \ref{l:2} follows directly from the  following general operator facts:
 
\begin{lemma}
\begin{enumerate}
\item If $X$ and $Y$ are positive operators  and $X \geq Y$ then range ($Y$)$ \subset$ range ($X$).
\item  If $PQP \geq cP$ for projections $P,Q$ and $c>0$ then for every $v \in \ran{P}$ of unit norm, there exists $w \in$ range($Q$), $||w||=1$ such that $Pw=c_v v$ for some constant $c_v$ with $|c_v|\geq c$.
\end{enumerate}
\end{lemma}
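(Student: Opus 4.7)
The plan is to establish the kernel containment $\ker X\subseteq \ker Y$ and then take orthogonal complements. If $Xv=0$ then $\langle v,Xv\rangle =0$; since $0\le Y\le X$ this squeezes $\langle v,Yv\rangle=0$, and writing $Y=Y^{1/2}Y^{1/2}$ with $Y^{1/2}$ the positive square root forces $\|Y^{1/2}v\|^2=0$, hence $Yv=0$. Self-adjointness of $X$ and $Y$ then gives $\ran{Y}=(\ker Y)^\perp\subseteq (\ker X)^\perp=\ran{X}$. I expect no difficulty here.

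\textbf{Part (2).} The idea is to diagonalize $PQP$ on $\ran{P}$ and use the eigenbasis to build an orthonormal ``partner basis'' inside $\ran{Q}$ on which $P$ acts diagonally; an explicit ansatz for $w$ then delivers the bound. Since $PQP\ge cP$, the restriction $PQP|_{\ran{P}}\ge cI$, so there is an orthonormal eigenbasis $\{e_i\}$ of $\ran{P}$ with eigenvalues $\lambda_i\ge c>0$. I would set $f_i:=Qe_i/\sqrt{\lambda_i}\in\ran{Q}$. The key calculation is that these $f_i$ are orthonormal: using $Pe_i=e_i$, one gets $\langle Qe_i,Qe_j\rangle=\langle e_i,PQPe_j\rangle=\lambda_j\,\delta_{ij}$, and as a byproduct $Pf_i=PQPe_i/\sqrt{\lambda_i}=\sqrt{\lambda_i}\,e_i$, so $P$ maps the partner frame diagonally back to $\{e_i\}$.

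Given a unit $v=\sum_i\alpha_i e_i\in\ran{P}$, I would then take the ansatz $w=\sum_i\beta_i f_i\in\ran{Q}$ and impose $Pw=c_v v$, which forces $\beta_i=c_v\alpha_i/\sqrt{\lambda_i}$. By orthonormality of the $f_i$, the condition $\|w\|^2=1$ yields $c_v^2=\bigl(\sum_i|\alpha_i|^2/\lambda_i\bigr)^{-1}\ge c$, using $\lambda_i\ge c$ and $\sum_i|\alpha_i|^2=1$. Finally, $Q\le I$ implies $PQP\le P$, so $c\le 1$, whence $|c_v|\ge\sqrt{c}\ge c$, as required. The only non-routine step is verifying that the $f_i$ form an orthonormal frame inside $\ran{Q}$ tied to $\{e_i\}$ via $Pf_i=\sqrt{\lambda_i}e_i$; once that is in place, both the existence of $w$ and the quantitative bound on $c_v$ follow mechanically from the ansatz. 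Specialising to $P=P_T$, $Q=P_{S_{ext}}$, $c=1-\delta$ and setting $\ket{s}=w/c_v$ then yields Lemma~\ref{l:2}.
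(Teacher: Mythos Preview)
Your proof is correct. Part~1 is essentially the paper's argument rephrased via kernels rather than via decomposing an element of $\ran{Y}$; the core step (that $\langle v,Xv\rangle=0$ forces $\langle v,Yv\rangle=0$ and hence $Yv=0$) is the same.

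For part~2 your route is genuinely different from the paper's. The paper does not diagonalize; instead it invokes part~1 to conclude $\ran{P}\subseteq\ran{PQP}$, takes a preimage $r\in\ran{P}$ with $PQPr=v$, and bounds $\|r\|\le 1/c$ by combining $\langle PQPr,r\rangle\ge c\|r\|^2$ with $\langle PQPr,r\rangle=\langle v,r\rangle\le\|r\|$; the required $w$ is then $Qr/\|Qr\|$. This is shorter and makes direct use of part~1. Your spectral approach is more explicit and constructive: it builds an orthonormal frame $\{f_i\}$ in $\ran{Q}$ dual to the eigenbasis of $PQP$, and in fact yields the sharper bound $|c_v|\ge\sqrt{c}$ before you weaken it using $c\le 1$. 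Either argument suffices for Lemma~\ref{l:2}.
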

\begin{proof}  For 1., suppose $y \in \ran{Y}$ and let $y= x + x_{\perp}$, $x\in \ran{X}$, $x_{\perp} \perp \ran{X}$ be the orthogonal decomposition.  Since $\ip{Xx_{\perp}}{x_{\perp}}=0$ it follows that $\ip{Yx_{\perp}}{x_{\perp}}=0$ and thus $x_{\perp} \perp \ran{Y}$ as well and hence $x_{\perp}=0$ and $y =x \in \ran{X}$.

For 2., it follows from 1.  that if  $PQP \geq c P$ then for any $v \in \ran{P}$ there exists an $r \in \ran{P}$ such that $PQPr =PQr= v$.   So then $\ip{PQPr}{r} \geq c \ip{Pr}{ r}= c||r||^2$.  But $\ip{PQPr}{r}= \ip{v}{r}\leq ||r||||v||$.  Putting these two inequalities together along with the assumption that $||v||=1$ yields $ ||r|| \leq 1/c$.  
\end{proof}

We introduce a notion of proximity between subspaces:

\begin{definition}[\bf Closeness]\label{d:close}
For $0\leq \delta \leq 1$, a subspace $T$ is {\em $\delta$-close} to a subspace $T'$ if 
\[ P_{T'} P_{T} P_{T'} \geq (1-\delta) P_{T'}, \]
where $P_T$ and $P_{T'}$ are the orthogonal projections on $T$ and $T'$ respectively. 
We say that $T$ and $T'$ are {\em mutually $\delta$-close} if each is $\delta$-close to the other, and denote by $\close(T,T')$ the smallest $\delta$ such that $T,T'$ are mutually $\delta$-close. 
\end{definition}


Closeness of subspaces is approximately transitive in the following sense:

\begin{lemma}[{\bf Robustness}] \label{l:close}
If $T$ is $\delta$-close to $T'$ and $T'$ is $\delta'$-close to $T''$ then $T$ is $2(\delta + \delta')$-close to $T''$.  Consequently if  $S$ is $\delta$-viable for $T$ and $T$ is $\delta'$-close to $T'$ then $S$ is $2(\delta + \delta')$-viable for $T'$.
\end{lemma}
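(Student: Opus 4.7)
The plan is to prove both statements by essentially the same argument: reducing approximate transitivity to a triangle-inequality-style bound on norms of orthogonal-complement projections, then simplifying using the AM-GM inequality $2\sqrt{\delta\delta'}\le\delta+\delta'$.

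For the first claim, I would pick an arbitrary unit vector $\ket{v}\in T''$ and try to lower bound $\|P_T\ket{v}\|^2$, since $P_{T''}P_TP_{T''}\ge cP_{T''}$ is equivalent to $\langle v|P_T|v\rangle\ge c$ for all unit $\ket{v}\in T''$. The natural move is to decompose $\ket{v}=\alpha\ket{v'}+\beta\ket{w}$ orthogonally, where $\ket{v'}$ is the normalized projection of $\ket{v}$ onto $T'$ and $\ket{w}\perp T'$. The hypothesis that $T'$ is $\delta'$-close to $T''$ gives $\alpha^2=\|P_{T'}\ket{v}\|^2=\langle v|P_{T'}|v\rangle\ge 1-\delta'$, hence $|\beta|\le\sqrt{\delta'}$. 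The hypothesis that $T$ is $\delta$-close to $T'$ gives $\|(I-P_T)\ket{v'}\|^2\le\delta$ since $\ket{v'}\in T'$. Then by the triangle inequality applied to $(I-P_T)\ket{v}=\alpha(I-P_T)\ket{v'}+\beta(I-P_T)\ket{w}$,
\[
\|(I-P_T)\ket{v}\|\le |\alpha|\sqrt{\delta}+|\beta|\le\sqrt{\delta}+\sqrt{\delta'},
\]
so that $\|(I-P_T)\ket{v}\|^2\le\delta+\delta'+2\sqrt{\delta\delta'}\le 2(\delta+\delta')$ by AM-GM. Squaring and subtracting from $1$ yields $\langle v|P_T|v\rangle\ge 1-2(\delta+\delta')$ for every unit $\ket{v}\in T''$, which is exactly $P_{T''}P_TP_{T''}\ge(1-2(\delta+\delta'))P_{T''}$.

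The second claim is proved by the same strategy with $P_T$ replaced by $P_{S_{ext}}$ and $T''$ replaced by $T'$. For a unit vector $\ket{v}\in T'$, decompose $\ket{v}=\alpha\ket{u}+\beta\ket{w}$ with $\ket{u}\in T$ a unit vector (the normalized projection onto $T$) and $\ket{w}\perp T$. Closeness of $T$ to $T'$ gives $\alpha^2\ge 1-\delta'$, hence $|\beta|\le\sqrt{\delta'}$. Viability of $S$ for $T$ gives $\|P_{S_{ext}}\ket{u}\|^2=\langle u|P_TP_{S_{ext}}P_T|u\rangle\ge 1-\delta$, i.e.\ $\|(I-P_{S_{ext}})\ket{u}\|^2\le\delta$. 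The same triangle-inequality-plus-AM-GM argument then yields $\|(I-P_{S_{ext}})\ket{v}\|^2\le 2(\delta+\delta')$, which is equivalent to $S$ being $2(\delta+\delta')$-viable for $T'$.

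I don't anticipate a substantial obstacle: the one place to be careful is that $\delta$-closeness in Definition~\ref{d:close} is asymmetric, so I need to verify that each of the two hypotheses is used in the correct direction, namely that we use $T$-close-to-$T'$ to bound the error of projecting vectors \emph{in} $T'$ onto $T$, and analogously $T'$-close-to-$T''$ to bound the projection of $T''$-vectors onto $T'$. Once the directions are right, the two proofs are identical in form, and the constant $2$ in the conclusion is simply the price paid by AM-GM for absorbing the cross term $2\sqrt{\delta\delta'}$.
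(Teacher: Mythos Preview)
Your proposal is correct and follows essentially the same approach as the paper: both start from a unit vector in $T''$ (resp.\ $T'$), project to $T'$ (resp.\ $T$) to get a nearby vector, then use the second hypothesis and the triangle inequality to bound the distance to $T$ (resp.\ $S_{ext}$) by $\sqrt{\delta}+\sqrt{\delta'}$, and finally square and apply AM--GM. The only cosmetic difference is that you work directly with $\|(I-P_T)\ket{v}\|$ while the paper tracks the distance to an explicit point $P_T P_{T'}\ket{t''}\in T$; these are equivalent and yield the same constant.
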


 \begin{proof}     Notice that $P_A P_B P_A \geq (1-\delta)P_A$ is equivalent to the statement that $\norm{P_B \ket{a}}^2\geq (1- \delta)$ for all $\ket{a}\in A$ with $\norm{\ket{a}}=1$.  It follows for  $\ket{t''} \in T'$ of unit norm, $\ket{t'}=P_{T'}\ket{t''}$ has the property that $\norm{\ket{t'}}^2 \geq (1 -\delta')$ and thus $\norm{\ket{t'}- \ket{t}} \leq \sqrt{\delta'}$.  Similarly $\ket{t}=P_{T} \ket{t'}$ has the property that $\norm{\ket{t}}^2 \geq \norm{\ket{t'}}^2 (1- \delta)$ and thus $\norm{\ket{t'} -\ket{t}} \leq \norm{\ket{t'}} \sqrt{\delta}$.  By the triangle inequality, $\norm{\ket{t''} - \ket{t}} \leq \sqrt{\delta'} + \sqrt{\delta}$ and since $\ket{t} \in T$, this implies that the distance between $\ket{t''}$ and $T$ is at most $\sqrt{\nu} + \sqrt{\delta}$, i.e. $\norm{P_{S_{ext}} \ket{t}}^2 \geq 1 - (\sqrt{\nu} + \sqrt{\delta})^2 \geq 1- 2(\nu + \delta)$.  As mentioned, this last statement is equivalent to $T$ being a $2(\nu + \delta)$ close to $T''$.
  \end{proof}

\subsection{Procedures}  \label{ss:p}

We introduce a set of procedures that can be performed on viable sets.  These procedures will allow us to build viable sets on larger and larger subsystems, while keeping the complexity and size of the viable sets small.  They will serve as the core operations for  both our area law proofs and our algorithms.  

\subsubsection{Tensoring}

The next lemma summarizes the effect of tensoring two viable sets supported on disjoint subsystems. 

\begin{lemma}[{\bf Tensoring}]\label{lem:merging} \label{l:t}
  Suppose $S_1$, $S_2$ are $\delta_1$-viable and $\delta_2$-viable
   for $T$ respectively, defined on disjoint subsystems. Then the set $S\EqDef S_1\otimes
  S_2$  is   $(\delta_1+\delta_2)$-viable for $T$.
  \end{lemma}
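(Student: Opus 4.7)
The plan is to exploit the fact that $S_1$ and $S_2$ are supported on disjoint subsystems, so the extended projections $P_{S_{1,ext}}$ and $P_{S_{2,ext}}$ act nontrivially on disjoint tensor factors and therefore commute. As a consequence, $P_{S_{ext}} = P_{S_{1,ext}} P_{S_{2,ext}}$ is itself an orthogonal projection, namely onto $S_1 \otimes S_2 \otimes \mH_B$ where $B$ denotes the complement of the union of the two subsystems supporting $S_1$ and $S_2$.

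The key step is a clean decomposition of the complementary projection. Using the identity $I - PQ = (I-P) + P(I-Q)$, valid whenever $P,Q$ commute, I would write
\[ I - P_{S_{ext}} \;=\; (I-P_{S_{1,ext}}) \;+\; P_{S_{1,ext}}(I - P_{S_{2,ext}}). \]
The two summands on the right are themselves orthogonal projections whose ranges are mutually orthogonal, since $(I-P_{S_{1,ext}}) \cdot P_{S_{1,ext}} = 0$. Applying the Pythagorean theorem to an arbitrary unit vector $\ket{t}\in T$ then yields
\[ \|(I-P_{S_{ext}}) \ket{t}\|^2 \;=\; \|(I-P_{S_{1,ext}}) \ket{t}\|^2 \;+\; \|P_{S_{1,ext}}(I-P_{S_{2,ext}}) \ket{t}\|^2. \]

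Bounding each term is now immediate. The first is at most $\delta_1$ by $\delta_1$-viability of $S_1$. For the second, I would drop the prefactor $P_{S_{1,ext}}$ (which has operator norm at most $1$) and invoke $\delta_2$-viability of $S_2$ to bound it by $\delta_2$. Summing gives $\|(I-P_{S_{ext}})\ket{t}\|^2 \leq \delta_1 + \delta_2$, equivalently $\bra{t}P_{S_{ext}}\ket{t} \geq 1 - \delta_1 - \delta_2$, which rearranges to $P_T P_{S_{ext}} P_T \geq (1-\delta_1-\delta_2) P_T$ and is precisely the claimed $(\delta_1+\delta_2)$-viability.

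I don't anticipate any real obstacle; the argument is essentially elementary once commutativity is exploited. The one subtlety worth flagging is that a naive approach via the triangle inequality on amplitudes (rather than squared norms) would only produce the weaker bound $(\sqrt{\delta_1}+\sqrt{\delta_2})^2$, which is loose by a cross-term $2\sqrt{\delta_1\delta_2}$. Working at the level of squared norms and exploiting the Pythagorean decomposition above is what delivers the tight additive scaling stated in the lemma.
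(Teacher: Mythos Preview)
Your proof is correct and follows essentially the same approach as the paper. Both arguments rest on the commutativity $P_{S_{ext}} = P_{S_{1,ext}} P_{S_{2,ext}}$ and the identity $PQ = P - P(\Id - Q)$; the paper phrases the resulting bound as an operator inequality $P_T P_{S_{1,ext}}(\Id - P_{S_{2,ext}}) P_T \le \delta_2 P_T$, while you phrase it vector-wise via Pythagoras, but the content is identical.
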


  \begin{proof}
  Since $S_1, S_2$ are defined on disjoint subsystems, it
  follows that $P_{S^{(ext)}} = P_{S_1^{(ext)}} 
  P_{S_2^{(ext)}}$, and so
  \begin{align*}
    P_T P_{S^{(ext)}} P_T = P_T P_{S_1^{(ext)}} P_{S_2^{(ext)}} P_T
     = P_T P_{S_1^{(ext)}} P_T 
       - P_T P_{S_1^{(ext)}} \big(\Id-P_{S_2^{(ext)}}\big)P_T \,.
  \end{align*}
  The definition of a viable set implies that
  $P_T P_{S_1^{(ext)}} P_T  \ge (1-\delta_1) P_T$. In addition,
  \begin{align*}
    P_T P_{S_1^{(ext)}} \big(\Id-P_{S_2^{(ext)}}\big)P_T 
      \le P_T \big(\Id-P_{S_2^{(ext)}}\big)P_T
      \le \delta_2 P_T \,.
  \end{align*}
  Therefore, $P_T P_{S^{(ext)}} P_T \ge (1-\delta_1-\delta_2)P_T$.
  \end{proof}

\subsubsection{Random sampling}
	
	The following lemma establishes how viability of a set is affected when sampling a random subset. 
	
	
	\begin{lemma}[{\bf Random sampling}]\label{lem:viable-sample} \label{l:rs}
	Let  $T\subseteq \mH = \mH_L \otimes \mH_M \otimes \mH_R$ be an  $r$-dimensional subspace, and  $W$ a  $q$-dimensional subspace of $\mH_M$ that is $\delta$-viable for $T$.  
	Then a random $s$-dimensional subspace $W'$ of $W$ is $(1-\delta')$-viable for $T$ with probability
	$1- \eta$, where
	$$\delta' = \frac{(1-\delta)}{8}\frac{s}{q}\qquad\text{and}\qquad \eta =\Big(1+4\sqrt{\frac{q}{(1-\delta) s}}\Big)^r q e^{-s/16} .$$
	\end{lemma}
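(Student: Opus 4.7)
The plan is to establish the equivalent statement that, with probability at least $1-\eta$, every unit vector $\ket{t}\in T$ satisfies $\|P_{W'_{ext}}\ket{t}\|^2\ge\delta'$; this is equivalent to the operator inequality $P_T P_{W'_{ext}} P_T \ge \delta' P_T$ and hence to the claimed $(1-\delta')$-viability of $W'$. I will proceed in two phases: (i) a concentration bound for a single fixed $\ket{t}\in T$, then (ii) an $\epsilon$-net argument over the unit sphere of~$T$.

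For phase (i), fix unit $\ket{t}\in T$ and let $\rho_t=\Tr_{LR}(\ket{t}\bra{t})$ be its reduced density matrix on $\mH_M$. A short computation using the Schmidt decomposition of $\ket{t}$ across the cut $M:LR$ gives
\[
\|P_{W'_{ext}}\ket{t}\|^2 = \Tr(\rho_t P_{W'}),\qquad \|P_{W_{ext}}\ket{t}\|^2 = \Tr(\rho_t P_W)\ge 1-\delta,
\]
the second inequality using the $\delta$-viability of $W$. Diagonalize $P_W \rho_t P_W=\sum_{j=1}^q\tilde\lambda_j\ket{e_j}\bra{e_j}$ in an orthonormal basis $\{\ket{e_j}\}$ of $W$ with $\sum_j\tilde\lambda_j\ge 1-\delta$, and use $P_{W'}P_W=P_{W'}$ (as $W'\subseteq W$) to get
\[
\Tr(\rho_t P_{W'}) = \sum_{j=1}^q\tilde\lambda_j\,\|P_{W'}\ket{e_j}\|^2 \;\ge\; (1-\delta)\min_{j}\|P_{W'}\ket{e_j}\|^2.
\]
The core concentration estimate is then: for any fixed unit $\ket{e}\in W$, the random variable $Y=\|P_{W'}\ket{e}\|^2$ has mean $s/q$ and satisfies $\Pr[Y<s/(4q)]\le e^{-s/16}$. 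I would prove this by representing $W'$ as the column span of a $q\times s$ matrix $G$ with i.i.d.\ $\mathcal{CN}(0,1)$ entries, so that $Y=h^{*}(G^{*}G)^{-1}h$ with $h=G^{*}\ket{e}\sim\mathcal{CN}(0,I_s)$ independent of the remaining rows of $G$, and then applying standard Chernoff-type tails for $\|h\|^2$ and for the extremal eigenvalues of the Wishart matrix $G^{*}G$. A union bound over $j=1,\dots,q$ yields $\min_j\|P_{W'}\ket{e_j}\|^2\ge s/(4q)$ with probability at least $1-qe^{-s/16}$, whence for the fixed $\ket{t}$,
\[
\|P_{W'_{ext}}\ket{t}\|^2 \;\ge\; \frac{(1-\delta)s}{4q} \;=\; 2\delta'
\]
with that failure probability.

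For phase (ii), the map $\ket{t}\mapsto\|P_{W'_{ext}}\ket{t}\|$ is $1$-Lipschitz on the unit sphere of~$T$. Choosing $\epsilon=\sqrt{\delta'}=\tfrac{1}{2}\sqrt{(1-\delta)s/q}$ and building a standard $\epsilon$-net $\mathcal N$ of this sphere of cardinality at most $(1+2/\epsilon)^{r}=(1+4\sqrt{q/((1-\delta)s)})^{r}$, the union bound combined with the single-vector estimate of phase (i) gives total failure probability at most $\eta$. When every net point $\ket{t_i}$ satisfies $\|P_{W'_{ext}}\ket{t_i}\|\ge\sqrt{2\delta'}$, the Lipschitz property upgrades this to $\|P_{W'_{ext}}\ket{t}\|\ge\sqrt{2\delta'}-\epsilon\ge\sqrt{\delta'}$, hence $\|P_{W'_{ext}}\ket{t}\|^2\ge\delta'$, for every unit $\ket{t}\in T$. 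The main technical obstacle is the single-vector tail bound: one has to balance the Chernoff estimates on the smallest singular value of $G^{*}G$ against those on $\|h\|^2$ tightly enough to land simultaneously on the constant $s/(4q)$ and the exponent $e^{-s/16}$. Once the single-vector bound is in hand, the $\epsilon$-net and Lipschitz upgrade are routine, with only constants to be verified to match the form of $\eta$ stated.
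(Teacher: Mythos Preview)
Your approach is essentially the paper's: establish a concentration bound for a single fixed $\ket{t}\in T$, then extend to all of $T$ via an $\epsilon$-net and union bound. The paper invokes Johnson--Lindenstrauss directly on $P_{W_{ext}}\ket{t}$, whereas you unpack the tensor structure through the reduced density matrix and a union bound over its eigenvectors in $W$; this makes the origin of the factor $q$ in the failure probability more explicit, but is otherwise the same argument.

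There is one arithmetic slip in your Lipschitz upgrade: $\sqrt{2\delta'}-\sqrt{\delta'}=(\sqrt 2-1)\sqrt{\delta'}<\sqrt{\delta'}$, so going from $2\delta'$ on the net to $\delta'$ everywhere does not close as written. The easy fix is to aim for the sharper single-vector bound $\|P_{W'}\ket{e}\|^2\ge s/(2q)$ (which the standard JL lower tail supports with the same exponent $e^{-s/16}$), giving $(1-\delta)s/(2q)=4\delta'$ on net points and hence $\sqrt{4\delta'}-\sqrt{\delta'}=\sqrt{\delta'}$ everywhere; this is precisely the constant the paper uses.
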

	
	
	\begin{proof}
	Let $\ket{v}$ in $T$ such that $\|\ket{v}\|=1$, and $\ket{w} = P_{W_{ext}} \ket{v}\in W_{ext}$. Using that $W$ is $\delta$-viable for ${T}$ it follows that $\norm{\ket{w}}^2 \geq 1-\delta$. Since $W'_{ext} \subseteq W_{ext}$, $P_{W'_{ext}} \ket{v}=P_{W'_{ext}} \ket{w}$.  
By a standard concentration argument based on the Johnson-Lindenstrauss lemma (see e.g.~\cite[Theorem 2.1]{ref:DG2003-elementaryJL}) it holds that  $\norm{P_{W'_{ext}}\ket{v}}^2 \geq (1-\delta) \frac{s}{2q}$  with probability at least $1- q e^{-s/16}$. Let $\nu= \sqrt{(1-\delta)s/8q}$. By a volume argument (see e.g.~\cite[Lemma~5.2]{ref:Ver2010-randMat}), there exists a subset $S$ of the Euclidean unit ball of $T$ such that $|S|\leq(1+ 2/\nu)^r$ and for any unit $\ket{t}\in T$, there is an $\ket{v}\in S$ such that $\|\ket{s}-\ket{t}\|\leq \nu$. Applying the preceding argument to each $\ket{v}$ in the net, by the union bound the choice of $\eta$ made in the theorem is with probability at least $1-\eta$,  $\norm{P_{W'_{ext}}\ket{v}}^2 \geq (1-\delta)s/(2q)$ for all $\ket{v}$ in the net; hence $\norm{P_{W'_{ext}}\ket{v}}^2 \geq (1-\delta) s/(8q)$ for all $\ket{v}$ in the unit ball of $T$.
\end{proof}
	
\subsubsection{Error reduction using Approximate Ground State Projections}
\label{sec:agsp-def}

We address the question of how to improve the viability parameter $\delta$ for a given viable set.    In previous work this question was addressed for the case of the target space $T$ being one dimensional by introducing the key tool of Approximate Ground State Projections (AGSPs)~\cite{ref:ALV2012-AL,ref:AKLV2013-AL}. AGSPs have been used in the context of proofs of the 1D area law for Hamiltonians with a unique ground state as well as  in algorithms for finding the ground state of a
gapped 1D system~\cite{ref:LVV2013-1Dalg}.   

Whereas in previous works an AGSP was primarily constructed to
approximate the projector on a \emph{unique} ground state, here our
main focus is on the case of a degenerate ground space and low-energy states. 
We therefore introduce a
more general definition of an AGSP as a local
operator that increases the norm of eigenvectors in the low part of the spectrum of
$H$, while decreasing the norm of eigenvectors in the high energy part of the
spectrum. We refer to this object as a
\emph{spectral AGSP}.

\begin{definition}[{\bf Spectral AGSP}]\label{def:spectral-AGSP}
  Given $\mH = \mH_L\otimes \mH_M\otimes \mH_R$, $H$ a Hamiltonian on $\mH$ and $\eta_0<\eta_1$, a positive semidefinite operator
  $K$ on $\mH$ is a $(D,\Delta)$-spectral AGSP for $(H,\eta_0,\eta_1)$ if the
  following conditions hold:
  \begin{itemize}  
  \item $K$ has a decomposition $K= \sum_{i=1}^{D^2} L_i \otimes A_i \otimes R_i$,
		\item $H$ and $K$ have the same eigenvectors,
    \item Eigenvalues of $H$ smaller than $\eta_0$ correspond to eigenvalues of $K$ that are larger than or equal to $1$,
    \item Eigenvalues of $H$ larger than $\eta_1$ correspond to eigenvalues of $K$ that are smaller than $\sqrt{\Delta}$. 
  \end{itemize}
	\end{definition}

The advantage of an AGSP, compared to an exact projection operator, lies in the fact that 
one can often construct a much more \emph{local} operator, i.e., an operator
with a much smaller Schmidt rank compared to the exact projector. 
The existence of an AGSP of small Schmidt rank which greatly shrinks the high energy part
of the spectrum can be viewed as a strong characterization of
the locality properties of the low-energy space.  A favorable scaling
between these two competing aspects (in the case of unique ground states) was the key feature in recent
proofs of the 1D area law~\cite{ref:ALV2012-AL,ref:AKLV2013-AL} via the
bootstrapping lemma. 
The following lemma establishes a lower bound on the quantitative improvement in viability that a spectral AGSP can achieve on a viable set.


\begin{lemma}[{\bf Error reduction --- Spectral AGSP}]\label{lem:error-spectral} \label{l:er}
Let $\mH=\mH_L\otimes \mH_M\otimes \mH_R$, $H$ a Hamiltonian on $\mH$,  $\eta_0<\eta_1$, and $K= \sum_{i=1}^{D^2} L_i \otimes A_i \otimes R_i$ a $(D,\Delta)$-spectral AGSP for $(H,\eta_0,\eta_1)$ where $H$ has ground state energy $\eps_0$ and has no eigenvalues in the interval $(\eta_0, \eta_1)$. Let $S\subseteq \mH_M$ be a $\delta$-viable set for $T = H_{[\eps_0,\eta_0]}$ of dimension $s$.
 Then the space $V= \Span \{ A_{i} S: 1\leq i \leq D^2 \}$ has dimension at most $D^2s$ and is $\delta'$-viable for $T$ with 
$$\delta'= \frac{\Delta}{(1-\delta)^2}.$$ 
\end{lemma}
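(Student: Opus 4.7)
My plan is to establish the dimension bound by inspection, and to obtain the viability bound by exhibiting, for each unit $\ket{t}\in T$, an explicit vector $\ket{w}\in V_{ext}$ with $|\langle w|t\rangle|^2/\|w\|^2 \geq 1-\delta'$. The dimension bound is immediate: if $\{\ket{s_j}\}_{j=1}^s$ is a basis of $S$, then $\{A_i\ket{s_j}\}_{i,j}$ spans $V$ and has at most $D^2 s$ elements.

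The key observation for viability is that since $K$ shares eigenvectors with $H$ and $H$ has no spectrum in $(\eta_0,\eta_1)$, the operator $K$ commutes with $P_T$ and decomposes as a block-diagonal sum $K = K_T \oplus K_{T^\perp}$, where $K_T \geq I$ on $T$ (hence invertible with $\|K_T^{-1}\|\leq 1$) and $\|K_{T^\perp}\|^2 \leq \Delta$ on $T^\perp$. A naive approach would apply $K$ to a preimage $\ket{s}\in S_{ext}$ of $\ket{t}$ guaranteed by Lemma~\ref{l:2}, but then $P_T K\ket{s} = K_T\ket{t}$, which is generally not aligned with $\ket{t}$. My plan is instead to apply Lemma~\ref{l:2} (after normalization and rescaling) to the vector $K_T^{-1}\ket{t}\in T$, which has norm at most $1$, to obtain $\ket{s'}\in S_{ext}$ with $P_T\ket{s'} = K_T^{-1}\ket{t}$ and $\|s'\|\leq 1/(1-\delta)$. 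Setting $\ket{w} := K\ket{s'}$ then yields $P_T\ket{w}=K_T K_T^{-1} \ket{t} = \ket{t}$ exactly, while $\|P_{T^\perp}\ket{w}\|^2\leq \Delta \|P_{T^\perp}\ket{s'}\|^2 \leq \Delta/(1-\delta)^2$.

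Membership $\ket{w}\in V_{ext}$ follows directly from the tensor-product structure $K=\sum_i L_i\otimes A_i \otimes R_i$ together with $\ket{s'}\in \mH_L\otimes S\otimes \mH_R$, since each $A_i$ maps $S$ into $V$. The bounds above give $\langle w|t\rangle = 1$ and $\|w\|^2 \leq 1+\Delta/(1-\delta)^2$, so lower-bounding $\|P_{V_{ext}}\ket{t}\|^2$ by its projection onto the line spanned by $\ket{w}$ and using $1/(1+x)\geq 1-x$ yields $\|P_{V_{ext}}\ket{t}\|^2 \geq 1-\Delta/(1-\delta)^2 = 1-\delta'$. Since $\ket{t}$ was arbitrary, $P_T P_{V_{ext}} P_T \geq (1-\delta')P_T$. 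The main conceptual step — and the only nonroutine part of the argument — is the choice to precompose with $K_T^{-1}$: without it, the bound would degrade to one depending on the eigenvalue spread of $K_T$ on $T$, rather than collapsing cleanly to $\Delta/(1-\delta)^2$.
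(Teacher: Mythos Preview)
Your proof is correct and follows essentially the same approach as the paper's: both precompose with $K^{-1}$ restricted to $T$ (the paper writes it as the pseudo-inverse $K^{-1}$, you as $K_T^{-1}$), invoke Lemma~\ref{l:2} to obtain a preimage in $S_{ext}$, apply $K$ to land in $V_{ext}$ with $T$-component exactly aligned with the target, and then bound the $T^\perp$-component using $\|K|_{T^\perp}\|^2\leq\Delta$. Your bookkeeping (working with an unnormalized preimage of norm $\leq 1/(1-\delta)$) is slightly cleaner than the paper's (which normalizes and tracks an overlap coefficient $\alpha\geq 1-\delta$), but the substance is identical.
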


 \begin{proof}
The bound on the dimension of $V$ is straightforward.  To show $V$ is $\delta'$-viable for $T$, begin with an arbitrary unit norm vector $\ket{v}\in T$.  Set $\ket{v'} = \frac{1}{\norm{K^{-1}\ket{v}}} K^{-1} \ket{v}$, where $K^{-1}$ is the pseudo-inverse. Then $\ket{v'}$ is also an element of $T$.  Since $S$ is $\delta$-viable for $T$, applying Lemma~\ref{l:2} there exists an $\ket{u}\in \mH_L\otimes S \otimes \mH_R$ whose projection onto $T$ is, up to scaling, precisely $\ket{v'}$; thus $\ket{u} = \alpha \ket{v'} + \sqrt{1-\alpha^2}\ket{v^\perp}$ for some $\alpha \geq 1-\delta$ and unit $\ket{v^\perp}$ that is orthogonal to $T$. In particular $\ket{v^\perp}$ is supported on the span of all eigenvectors of $H$ with eigenvalue outside of $[\eps_0,\eta_1)= [\eps_0, \eta_0) \cup [\eta_0, \eta_1)$ and thus by the property of $K$, $\|K\ket{v^\perp}\|^2 \leq \Delta$. 

Applying $K$ to $\ket{u}$ yields $K\ket{u}= \alpha'\ket{v} + K\ket{v^\perp}$ with $\alpha'=\alpha \frac{1}{\norm{K^{-1}\ket{v}}} \geq \alpha$ (since $\ket{v}$ is supported on eigenvectors of $K$ with corresponding eigenvalue at least $1$). Thus

\begin{align*}
\Big|\Big\langle \frac{ K u}{\| K \ket{u}\|}\Big|v\Big\rangle\Big|^2 &\geq \frac{\alpha'^2 }{\alpha'^2+ (1-\alpha'^2)\Delta}\\
& \geq 1-\frac{1}{ (1-\delta)^2}\Delta.
\end{align*}
\end{proof}

\subsubsection{Complexity reduction using trimming} \label{ss:bt}

For a viable set to be efficiently represented it must not only have small dimension but also a basis of states that can be efficiently described, say by polynomial-bond matrix product states.  A natural question is, assuming that the target subspace $T$ has a basis of vectors of small bond dimension, whether it is possible to efficiently ``trim'' any sufficiently good viable set for $T$ into another almost-as-good viable set specified by vectors with comparably small bond dimension.  

To achieve this goal we introduce a modified trimming procedure to that of \cite{ref:LVV2013-1Dalg}.  There the trimming procedure is based on the observation that given a
good approximation to a target vector $\ket{v}$ of low bond
dimension, trimming the approximating vector by dropping Schmidt
vectors associated with the smallest Schmidt coefficients at each
cut yields an almost-as-good approximation to $\ket{v}$ with lower
bond dimension. In the present scenario the approximating vector is
not known: instead we are given a basis for a subspace that contains
the approximating vector. A natural idea would be to trim the MPS
representations for the basis vectors in a way that guarantees that
$\ket{v}$ is still closely approximated by some vector in the span
of the resulting set. However, it is not clear if independently trimming each
of the basis vectors, as done in~\cite{ref:LVV2013-1Dalg}, works --
indeed, the basis vectors themselves could a priori have a very flat
distribution of Schmidt coefficients, so that trimming could induce large changes.

We provide a modified procedure which starts with a basis
for the viable set and trims the basis vectors collectively at each
cut, from the leftmost to the rightmost, as follows (informally): for each
cut, project each element of the basis onto the span of the left
Schmidt vectors of \emph{any} basis element that is associated with
a large Schmidt coefficient.

\begin{definition}[{\bf Trimming}]\label{def:trimming}
  Let $S\subseteq \mH_A$ be a $\delta$-viable set for $T\subseteq
  \mH_A\otimes\mH_B$ specified by an orthonormal basis
  $\{\ket{u_i},\,i=1,\ldots s\}$. Suppose $\mH_A =
  \mH_A^1\otimes\cdots\otimes\mH_A^\ell$ for some $\ell\geq 2$. Let $\ket{\psi}
  = \sum_i \ket{u_i}\ket{i} \in \mH_A \otimes \CC^s$. For $j$ from $1$ to $(\ell-1)$ define $P_{\geq
  \xi}^j$ inductively as the projection on the subspace of $P_{\geq \xi}^{j-1} \otimes \Id_{\mH_A^j}$ spanned by the left Schmidt
  vectors of $P_{\geq
  \xi}^{j-1}\otimes \Id_{\mH_A^j \cdots\mH_A^{\ell}} \otimes \Id_{\CC^s} \ket{\psi}$ across the cut $(j:j+1)$ with associated
  Schmidt coefficient at least $\xi$.\footnote{Note that we do not re-normalize vectors.} Then the \emph{$\xi$-trimmed
  set} is
  \begin{equation}\label{eq:def-viable-trim}
    \Trim_\xi(S) \EqDef \Span\Big\{\big(( P^1_{\geq\xi}\otimes \Id_{\mH_A^2\otimes\cdots\otimes \mH_A^k})\cdots (P^{\ell-1}_{\geq \xi}\otimes\Id_{\mH_A^\ell})\big)\ket{u_i},\,i=1,\ldots,s\Big\}.
  \end{equation}
\end{definition}

With this notion of trimming,  we show that if a set $S$ is a good viable set for a set $T$ whose elements are guaranteed to have low bond dimension then the result of trimming the set $S$ does not degrade the quality of the viable set too much.  

\begin{lemma}[{\bf Trimming}]\label{lem:trimming}
Let $S\subseteq \mH_A$ be a $\delta$-viable set of dimension $s$ for $T\subseteq \mH_A\otimes\mH_B$. Suppose $\mH_A = \mH_A^1\otimes\cdots\otimes\mH_A^{\ell}$ for some ${\ell\geq 2}$. Let $b$ be an upper bound on the Schmidt rank of any vector in $T$ across any cut $(j:j+1)$ for $j=1,\ldots,{\ell}-1$. Then the $\xi$-trimmed set $\Trim_\xi(S)$ is a $\delta'$-viable set for $T$ for $\delta'\leq \delta + \sqrt{{\ell}bs}\xi$. 

Furthermore, a spanning set for $\Trim_\xi(S)$ containing at most $s$ vectors of Schmidt rank at most $s\xi^{-2}$ across any cut can be computed in time $O(\ell \Time(dsq))$, where $q$ is an upper bound on the bond dimension of MPS representations for a basis of $S$ and $\Time(\cdot)$ denotes matrix multiplication time.
\end{lemma}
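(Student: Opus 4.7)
The plan is to establish the viability bound by constructing, for every unit $\ket{t}\in T$, an approximation $\ket{u'}\in \Trim_\xi(S)_{ext}$ close to $\ket{t}$. My candidate is $\ket{u'}=(\hat{P}\otimes I_B)\, P_{S_{ext}}\ket{t}$, where $\hat{P}$ is the composed trimming projection. Since the projections $\tilde{P}^j := P^j_{\geq\xi}\otimes I_{\mH_A^{j+1}\cdots\mH_A^\ell}$ are nested ($\tilde{P}^j\leq \tilde{P}^{j-1}$), the composition defining $\Trim_\xi(S)$ collapses to $\hat P=\tilde{P}^{\ell-1}$. Writing $P_{S_{ext}}\ket{t}=\sum_i\ket{u_i}\ket{w_i}$ for some $\ket{w_i}\in\mH_B$, we get $\ket{u'}=\sum_i(\hat{P}\ket{u_i})\ket{w_i}\in \hat{P}(S)\otimes\mH_B = \Trim_\xi(S)_{ext}$. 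Viability of $S$ gives $\|\ket{t}-P_{S_{ext}}\ket{t}\|^2\leq\delta$, and a triangle inequality then yields $\|\ket{t}-\ket{u'}\|\leq \|(I-\hat{P}\otimes I_B)\ket{t}\|+\sqrt{\delta}$, reducing the task to controlling $\|(I-\hat{P}\otimes I_B)\ket{t}\|$.

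Using the same nesting, I write $I-\hat{P}=\sum_{j=1}^{\ell-1}Q^j$ as an \emph{orthogonal} sum of projections $Q^j=\tilde{P}^{j-1}-\tilde{P}^j=Q_L^j\otimes I_{\mH_A^{j+1}\cdots\mH_A^\ell}$, so that $\|(I-\hat{P}\otimes I_B)\ket{t}\|^2=\sum_j\|Q^j\otimes I_B\ket{t}\|^2=\sum_j\text{tr}(Q_L^j\rho_L^{(t,j)})$, where $\rho_L^{(t,j)}$ is the reduced density matrix of $\ket{t}\bra{t}$ on the left of cut $(j:j+1)$; by hypothesis it has rank at most $b$. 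The heart of the argument is the per-cut inequality $\text{tr}(Q_L^j\rho_L^{(t,j)})\leq b s \xi^2$. The ingredients are: (i) the spectral fact $Q_L^j\rho_L^{(\psi_{j-1})}Q_L^j\leq \xi^2 Q_L^j$, which encodes that $Q_L^j$ projects onto the small-Schmidt part of $\ket{\psi_{j-1}}:=(\tilde{P}^{j-1}\otimes I_{\CC^s})\ket{\psi}$; (ii) the trace bound $\text{tr}(\rho_L^{(\psi_{j-1})})=\|\ket{\psi_{j-1}}\|^2\leq s$, encoding the size of the viable set; and (iii) the consequence $\text{tr}((I-P_S)\rho_A^{(t)})\leq \delta$ of viability, which localizes the support of $\rho_L^{(t,j)}$ relative to that of $\rho_L^{(\psi)}$. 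Combining (i)--(iii) with the rank-$b$ constraint on $\rho_L^{(t,j)}$ via an operator Cauchy--Schwarz argument yields the per-cut bound; this is the step I expect to be the main obstacle, because bounds based solely on $\rho_L^{(\psi_{j-1})}$ degrade with its (possibly exponential) Schmidt rank, and only by playing the rank-$b$ structure of $\rho_L^{(t,j)}$ against the spectral localization of $Q_L^j$ can one avoid this dependence. Summing the per-cut bounds gives $\|(I-\hat{P}\otimes I_B)\ket{t}\|\leq\sqrt{\ell b s}\,\xi$, and the triangle inequality above then delivers $\delta'\leq\delta+\sqrt{\ell b s}\,\xi$ (after absorbing the cross term, which is dominated in the regime $\sqrt{\ell b s}\xi\leq 1$ where the bound is nontrivial).

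For the algorithmic part, $\Trim_\xi(S)=\hat P(S)$ has dimension at most $s$, and any vector in it has Schmidt rank at most $\dim\ran{P^j_{\geq\xi}}\leq s\xi^{-2}$ across cut $(j:j+1)$, since $\sum_l (\sigma_l^{(j)})^2 \leq s$ forces at most $s\xi^{-2}$ Schmidt coefficients to exceed $\xi$. To actually compute $\Trim_\xi(S)$, I would process the cuts in order $j=1,\ldots,\ell-1$: maintain MPS representations of the currently projected basis vectors, assemble a block-diagonal MPS for $\ket{\psi_{j-1}}$ of bond dimension $O(sq)$, perform an SVD at bond $(j:j+1)$, retain only the left singular vectors with singular value at least $\xi$, and apply the resulting projection $P^j_{\geq\xi}$ to each basis vector. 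Each iteration is dominated by an SVD on a matrix of size $O(dsq)\times O(dsq)$, giving total runtime $O(\ell\,\Time(dsq))$.
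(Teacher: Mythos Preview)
Your per-cut inequality $\text{tr}(Q_L^j\rho_L^{(t,j)})\leq bs\xi^2$ is false, and this is not a technical obstacle that ingredients (i)--(iii) can overcome: the quantity on the left can be as large as $\delta$, independently of $\xi$. Take $\ell=2$, $\mH_A^1=\CC^3$, $\mH_A^2=\CC^2$, $\mH_B$ trivial, $S=\Span\{\ket{00},\ket{11}\}$ (so $s=2$), and $\ket{t}=\sqrt{1-\epsilon^2}\,\ket{00}+\epsilon\,\ket{20}$ (so $b=1$ and $\delta=\epsilon^2$). The Schmidt coefficients of $\ket{\psi}=\ket{00}\ket{1}+\ket{11}\ket{2}$ at the single cut are both $1$, hence for any $\xi<1$ one has $Q_L^1=\ket{2}\bra{2}$ and $\text{tr}(Q_L^1\rho_L^{(t)})=\epsilon^2$, which exceeds $bs\xi^2=2\xi^2$ whenever $\epsilon>\sqrt{2}\,\xi$. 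The conceptual issue is that you are bounding the \emph{norm} $\|(I-\hat P\otimes I_B)\ket{t}\|$; but the $\sqrt{\delta}$-component of $\ket{t}$ orthogonal to $S_{ext}$ is completely unconstrained and may sit entirely inside $\ran{Q_L^j}$. Your ingredients (i)--(ii) concern $\rho_L^{(\psi_{j-1})}$, while (iii) controls only the $\mH_A$-marginal of $\ket{t}$, not its $\mH_A^1\cdots\mH_A^j$-marginal; there is no operator inequality linking these that yields the claimed per-cut bound.

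The paper avoids this by never bounding a norm. It fixes $\ket{u}\in S_{ext}$ with $|\braket{u}{t}|\geq\sqrt{1-\delta}$ and controls the \emph{inner product} $|\braket{u'_{j-1}-u'_j}{t}|$ directly. The key point is that $(Q^j\otimes I_B)\ket{u}$ has all Schmidt coefficients at most $\xi$ across cut $(j{:}j{+}1)$ (since it equals $(I\otimes W)(Q^j\otimes I_{\CC^s})\ket{\psi}$ for a contraction $W$, and local contractions cannot increase Schmidt coefficients), while $\ket{t}$ has Schmidt rank at most $b$ there; the inner product of a vector with Schmidt coefficients $\leq\xi$ and a rank-$b$ unit vector is at most $\xi\sqrt{b}$ (with an extra factor $\|Q^j\otimes I_B\ket{t}\|$ if one first projects $\ket{t}$). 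Telescoping and using the orthogonality of the $Q^j$ then gives $|\braket{u}{t}-\braket{u'}{t}|\leq\xi\sqrt{\ell b s}$, hence $\delta'\leq\delta+O(\xi\sqrt{\ell b s})$. In your example this gives $|\braket{u}{t}-\braket{u'}{t}|=0$ (since $\hat P\ket{00}=\ket{00}$), matching the truth that trimming costs nothing here. Your algorithmic paragraph is fine and essentially matches the paper's.
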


\begin{proof}
Let $\{\ket{u_i},\, i =1,\ldots,s\}$ denote an orthonormal basis for $S$, and $\ket{v}\in T$ a unit vector. Let $\ket{u}=\sum_i \mu_i\ket{a_i}\ket{b_i} \in \mH_A \otimes \mH_B$ be a unit vector such that $|\bra{u} v\rangle|^2 \geq 1-\delta$. For $j=0,\ldots,{\ell}$, let 
$$\ket{u'_j} = (P^1_{\geq\xi}\otimes \Id_{\mH_A^{2}\cdots \mH_A^{\ell}}\otimes\Id_{\mH_B}) \cdots (P^j_{\geq\xi}\otimes \Id_{\mH_A^{j+1}\cdots \mH_A^{\ell}}\otimes\Id_{\mH_B}) \ket{u},$$
and for $i\in\{1,\ldots,s\}$,
$$\ket{a_i^j} = (P^1_{\geq\xi}\otimes \Id_{\mH_A^{2}\cdots \mH_A^{\ell}}\otimes\Id_{\mH_B}) \cdots (P^j_{\geq\xi}\otimes \Id_{\mH_A^{j+1}\cdots \mH_A^{\ell}}\otimes\Id_{\mH_B}) \ket{a_i}.$$
By definition of the $P^j_{\geq\xi}$ (Definition~\ref{def:trimming}), the Schmidt coefficients of the vector 
$$(P^1_{\geq\xi}\cdots P_{\geq
  \xi}^{j-1} (\Id-P^j_{\geq\xi})\otimes \Id)\ket{\psi},$$ where $\ket{\psi} = \sum \ket{a_i}\ket{i}$, across the cut $(j,j+1)$ are all at most $\xi$. Since acting with a local projection (here, $\ket{i}\bra{i}$ on $\mH_B$) cannot increase the largest Schmidt coefficient, the same holds of the vector $((\Id-P^j_{\geq\xi})\otimes \Id)\ket{a_i^{j-1}}$. Based on these observations we may upper bound, for any $i,j$, and unit $\ket{c} \in \mH_A^1 \otimes \cdots \otimes \mH_A^j$ and $\ket{d} \in \mH_A^{j+1} \otimes \cdots \otimes \mH_A^\ell\otimes \mH_B$,
\begin{align*}
 \big|\bra{a_i^{j-1}}\bra{v_i} ((\Id-P^j_{\geq\xi})\otimes \Id_{\mH_A^{j+1}\otimes\cdots\otimes \mH_A^{\ell}}\otimes \Id_{\mH_B})\ket{c}\ket{d}\big|
&\leq \xi,
\end{align*}
where the inequality follows since we are taking the inner product of a vector with largest Schmidt coefficient at most $\xi$ with another vector of Schmidt rank $1$. Using the promised bound on the Schmidt rank of $\ket{v}$ we deduce 
\begin{align*}
\big|(\bra{u'_j}-\bra{u'_{j-1}})|v\rangle \big| &= \big|\bra{a_i^{j-1}}\bra{b_i} ((\Id-P^j_{\geq\xi})\otimes \Id |v\rangle \big|\\
&\leq \xi \sqrt{bs}\big\|(\Id-P^j_{\geq\xi})\otimes \Id\ket{v}\big\|.
\end{align*}
Using a telescopic sum, and orthogonality of the projections $(\Id-P^j_{\geq\xi})\otimes \Id$ for different values of $j$, we get
\begin{align*}
\big|(\bra{u'_1}-\bra{u'_{\ell}})\ket{v}\big|^2 &\leq \xi^2 bs \Big(\sum_{j=1}^\ell \big\|(\Id-P^j_{\geq\xi})\otimes \Id\ket{v}\big\|\Big)^2 \\
&\leq \xi^2 \ell bs,
\end{align*}
 and the claimed bound on $\delta'$ follows. 

For the ``furthermore'' part, note that $\ket{\psi}$ has at most $s/\xi^2$ Schmidt coefficients larger than $\xi$ across any cut $(j:j+1)$. Thus  each $P^{j}_{\geq \xi}$ has rank at most $s/\xi^2$, so that its application reduces the Schmidt rank across the cut $(j:j+1)$ to at most $s/\xi^2$, while not increasing it to a larger value at any of the previously considered cuts. The left Schmidt vectors of
$$ ( P^1_{\geq\xi}\otimes \Id_{\mH_A^2\otimes\cdots\otimes \mH_A^{\ell}})\cdots (P^{{\ell}-1}_{\geq \xi}\otimes\Id_{\mH_A^{\ell}}) \big)\ket{\psi}$$
across the cut specified by the division $\mH = \mH_A\otimes \mH_B$ form a spanning set for $\Trim_\xi(S)$. 

In order to compute canonical MPS representations for a basis of $\Trim_\xi(S)$ we first create an MPS representation for $\ket{\psi}$ and reduce it to canonical
form (we refer to e.g. the survey~\cite{verstraete2008matrix} for a discussion of basic operations on MPS and their computational efficiency). This costs $O(\ell \Time(dsq))$ operations, where $\Time(\cdot)$ is matrix multiplication time, and $\Time(dsq)$ is the time required to perform required basic operations on tensors of bond dimension $O(dsq)$, such as singular value decompositions. Proceeding from the cut $({\ell}-1,{\ell})$ to the $(1,2)$ cut from right to left, we then set the coefficients of the diagonal tensor matrices $\Lambda_j$ from the MPS representation  that are smaller than $\xi$ to zero. The resulting re-normalized state is automatically
given in canonical MPS form, and a spanning set for $\Trim_\xi(S)$ can be obtained by cutting the last bond.
\end{proof}

\section{Overview}
\label{sec:overview}

In this section we provide an outline of how the procedures introduced in the two previous sections can be put together to yield area laws and efficient algorithms.  Our results hinge on our ability to construct AGSPs with good trade-offs between $D$ and $\Delta$. Our goal in this section is to provide a high level picture of how the pieces fit together. For this we assume a very simple, approximate picture of an AGSP. The rigorous results are more intricate, and will be described in the remaining sections of the paper. 

Let $H$ be a Hamiltonian with ground state energy $\eps_0$ and no eigenvalues in the interval $(\eta_0, \eta_1)$. We assume that $H$ comes with an associated spectral AGSP $K$ (Definition \ref{def:spectral-AGSP}) that satisfies the conditions of Lemma~\ref{l:er}. We further assume that the parameters $(D,\Delta$) associated with  $K$ satisfy a sufficiently good trade-off between $D$ and $\Delta$.\footnote{For our purposes, a tradeoff of the form $D^{c} \Delta < \frac{1}{2}$, for a large enough constant $c$, will suffice; we refer to later sections for concrete parameters.}  Our goal is to approximate the low-energy subspace $T = H_{[\eps_0,\eta_0]}$, assumed to be of polynomially bounded dimension.

\subsection{Viable set amplification and area laws}
\label{sec:viable-amplification}

As a first step we compose the procedures of random sampling (Lemma \ref{l:rs}) and error reduction (Lemma \ref{l:er}) to obtain a procedure that improves the quality of a viable set without increasing its dimension:

\medskip

\noindent{\bf{Viable Set Amplification:}} \label{ss:vsa}\\
Given is a $\delta$-viable set $W$ of dimension $q$.
\begin{enumerate}
\item Generate a {\bf random sample}, as in Lemma \ref{l:rs}, to obtain $S\subset W$ of dimension $s$ with viability parameter $\delta' > \delta$. 
\item Apply {\bf error reduction} to $S$, as in Lemma \ref{l:er}, using the AGSP $K$, to produce a $\delta$-viable set $W'$ of dimension $q'$.
\end{enumerate}

For a $\delta$-viable set $W$, we refer to $\mu = 1-\delta$ as its overlap. Random sampling reduces the dimension of the viable set but also proportionately reduces its overlap. The second step (AGSP) increases the overlap at the cost of a comparatvely smaller increase in dimension --- a favorable trade-off due to the favorable $D-\Delta$ trade-off of the AGSP. With proper setting of parameters, the viable set amplification procedure above reduces the dimension of the viable set while leaving the overlap (and $\delta$) unchanged, as long as the viable set dimension $q> q_0$, for some $q_0$ determined by $\delta$ as well as the parameter $D$ of the AGSP (itself related to parameters of the initial Hamiltonian, including the spectral gap above the low-energy space $T$). 

Reasoning by contradiction, the argument implies the existence of a $\delta$-viable set $W_0$ for $T$ of dimension at most $q_0$.  The existence of such a $W_0$ in turn implies that any element of $T$ has a $\delta$-approximation by a vector with entanglement rank no larger than $q_0$. An area law follows easily using standard amplification arguments; we give the details in Section~\ref{sec:arealaws}.

\subsection{Merge process and algorithms}
\label{sec:merge-process}

In the argument described in the previous section the  parameters were chosen such that a $\delta$-viable set of dimension $q$ was ``amplified'' to a $\delta$-viable set of dimension $q'<q$.  With a slightly more demanding choice of parameters viable set amplification can be made to reduce both the dimension $q \rightarrow q'=\sqrt{q}$ and the viability parameter $\delta \rightarrow \delta'=\frac{\delta}{2}$.  This only requires a slightly more stringent condition on the $D-\Delta$ trade-off provided by the underlying AGSP.   

We now explain how viable set amplification can be folded within a larger procedure that we call \mpr. Assume given a decomposition
$\mH=\mH_L\otimes (\mH_A \otimes \mH_B )\otimes \mH_R$ of the $n$-particle Hilbert space. \mpr\  starts with two  viable sets $V_1\subseteq \mH_A$ and $V_2 \subseteq \mH_B$ and returns a viable set $V\subseteq \mH_A \otimes \mH_B$.  It does so in a way such that all parameters of the viable set $V$, namely the viability $\delta$, the dimension, and its description complexity, are comparable to those of the original two sets. We proceed to describe \mpr; for expository purposes we set aside considerations on the complexity of representing elements of the viable sets (these will be made formal in subsequent sections). 

\medskip

\noindent{\bf{\mpr}:}\\
Given are two $\delta$-viable sets $V_1 \subset \mH_A$ and $V_2 \subset \mH_B$ of dimension $q$. \begin{enumerate}
\item {\bf Tensor} the two sets, as in Lemma \ref{l:t}, to obtain a $2\delta$-viable set  $W= V_1 \otimes V_2$ of dimension at most~$q^2$.
\item Perform {\bf viable set amplification}  to yield a $\delta$-viable set $V \subset \mH_1 \otimes \mH_2$ of dimension at most $q$.
\end{enumerate}

Our algorithm starts with (easily generated) viable sets defined over small subsets of particles, and iterates \mpr\ in a tree-like fashion to eventually generate a single viable set defined over the entire space.  With this final viable set in hand, it is not difficult to find low-energy states within the viable set, {\it provided} we are able to describe its elements using low-complexity representations (e.g. low bond dimension matrix product states).  This will not be the case unless explicit constraints are enforced on the complexity of the operators used in the error reduction step of viable set amplification, where the complexity can blow up rapidly due to the application of the AGSP $K$.  

To maintain the desired low complexity MPS representations and complete the algorithm we make two modifications to \mpr.  The first is within the AGSP construction, where  a procedure of soft truncation (Section~\ref{sec:soft-truncation}) leads to the operators used in error reduction having matrix product operator (MPO) representations with polynomial bond dimension. Since these operators are applied a large number of times, however, the complexity of the MPS representations manipulated could still increase to super-polynomial. In order to keep that complexity under control we perform a second modification, which decomposes the viable set amplification procedure into smaller steps of viable set amplification followed by a trimming procedure.  The result is the following modified procedure: 

\medskip

\noindent{\bf{\mpr'} (informal):}\\
Given are $\delta$-viable sets $V_1 \subset \mH_A$ and $V_2 \subset \mH_B$ of dimension $q$, each specified by MPS with polynomial bond dimension.
\begin{enumerate}
\item {\bf Tensor} the two sets, as in Lemma \ref{l:t}, to obtain a $2\delta$-viable set  $W= V_1 \otimes V_2$ of dimension at most~$q^2$.
\item Perform {\bf viable set amplification}  followed by {\bf trimming} on the viable set to produce a $\delta$-viable set of smaller dimension, again specified by MPS with polynomial bond dimension.  Repeat this step until the resulting $\delta$-viable set has dimension $q$.
\end{enumerate}

We note that the correctness of the trimming procedure employed in the second step of \mpr' relies on the area law established using \mpr, as described in the previous section. 

The overview given in this section provides an accurate outline of how viable sets can be put together into an efficient algorithm for mapping out the low-energy subspace of a local Hamiltonian. The most important technical ingredient that we have set aside so far is the creation of AGSP with the required parameter trade-off between $D$ and $\Delta$. In Section~\ref{sec:arealaws} we establish \emph{existence} of the desired AGSP, which lets us formally implement the first part of our results, area laws for local Hamiltonians satisfying assumptions (DG) and (LD) described in the introduction. In order to obtain algorithms we will need to make the AGSP constructions \emph{efficient}: this is achieved in Section~\ref{sec:efficient}, with the resulting algorithms described in Section~\ref{sec:algorithms}.

\section{Area laws}
\label{sec:arealaws}

In this section we establish area laws for the ground space and low-energy space of Hamiltonians satisfying assumptions (DG) and (LD) respectively. The proofs are based on the non-constructive bootstrapping argument outlined in Section~\ref{sec:viable-amplification}, which relies on a sufficiently good construction of AGSP. We first review the general Chebyshev-based AGSP construction from~\cite{ref:AKLV2013-AL} in Section~\ref{sec:Cheby}. We introduce a scheme of hard truncation for the norm of a Hamiltonian in Section~\ref{sec:hard-truncation}. In section~\ref{sec:agsp-constructions} we apply the Chebyshev construction to the truncated Hamiltonian to obtain our main AGSP constructions. The AGSP are applied to the proof of the area law under assumption (DG) in Section~\ref{s:bootstrap-dg} and assumption (LD) in Section~\ref{s:bootstrap-ld}.

\subsection{The Chebyshev polynomial AGSP}
\label{sec:Cheby}

Given a Hamiltonian $H$ with ground energy $\eps_0$ and a gap parameter $\spg$, a natural way to define an approximate ground state
projection is by setting $K\EqDef P_k(H)$, where $P_k$ is a
polynomial that satisfies $P_k(\eps_0)=1$ and $|P_k(x)|^2\le \Delta$
for every $\eps_0+\spg\le x\le \norm{H}$. Clearly, $K$ preserves the
ground space and reduces the norm of any eigenstate $\ket{\phi}$ of $H$ with eigenvalue at least $\eps_0+\spg$ as $\norm{K\ket{\phi}}^2\le \Delta$. Moreover, 
the lower the degree of $P_k$, the lower the Schmidt rank of
$K$ at every cut. Following~\cite{ref:AKLV2013-AL} we
construct such a polynomial based on the use of Chebyshev
polynomials. The construction is summarized in the following
definition.

\begin{definition}[The Chebyshev-based AGSP]
\label{def:Cheby-AGSP}
  Let $H$ be a Hamiltonian and $\eta_0<\eta_1$ two parameters.\footnote{$\eta_0$ and $\eta_1$ may be chosen as the ground state energy and first excited energy of $H$ respectively, but they need not.} For any integer $k>0$, let $T_k$ be the $k$-th Chebyshev polynomial of the
  first kind, and $P_k$ the following rescaling of
  $T_k$:
  \begin{align}
  \label{def:Pn-Cheby}
    P_k(x) \EqDef \frac{1}{\tilde{P}_k(\eta_0)}\tilde{P}_k(x)\,,
    \quad\text{where} \quad
    \tilde{P}_k(x) 
      \EqDef T_k\left(2\frac{x-\eta_1}{\norm{H}-\eta_1}
        -1\right) \,.    
  \end{align}
  The \emph{Chebyshev AGSP of degree $k$} for $H$ is $K \EqDef P_k(H)$.
\end{definition}

The properties of the Chebyshev AGSP are given in the following
theorem. Here and throughout we use the convention that a 1D local Hamiltonian on $n$ qudits numbered $1,\ldots,n$ decomposes as $H= \sum_{i=1}^{n-1} h_i$, where $0\leq h_i \leq \Id$ is the local term acting on qudits $\{i,i+1\}$.

\begin{theorem}
\label{thm:Cheby-AGSP} Let $H$ be a Hamiltonian on $n$ qudits, $\eta_0<\eta_1$ two parameters and $\spg=\eta_1-\eta_0$. Suppose that for
  some $i_1<i_2\in \{1, \ldots, n\}$ and $3\leq \ell \leq (i_2-i_1)/2$, $H$ can be written
  as 
  \begin{align}
  \label{eq:H-mid-cut}
    &H_L + h_{i_1-\ell} + \ldots + h_{i_1} + \ldots + h_{i_1+\ell-1} \notag\\
		&\qquad+ H_M + h_{i_2-\ell} + \ldots + h_{i_2} + \ldots + h_{i_2+\ell-1} +
    H_R \,,	
  \end{align}
  where each $h_i$ is a 2-local operator on qudits $\{i,i+1\}$ and
  $H_L$, $H_M$ and $H_R$ are defined on qudits $J_L=\{1,\ldots, i_1-\ell\}$, $J_M=\{i_1+\ell,\ldots, i_2-\ell\}$ and
  $J_R=\{i_2+\ell, \ldots, n\}$ respectively. For any integer $k>0$ let
		$$  \Delta\EqDef 4e^{-4k\sqrt{\frac{\spg}{\norm{H}-\eta_0}}}.$$
	Then the degree-$k$ Chebyshev AGSP $K$ is a $(D,\Delta$) spectral AGSP for $(H,\eta_0,\eta_1)$ such that:
\begin{enumerate}
\item\label{item:preserve-ev} For any eigenvector $\ket{\psi}$ of $H$ with associated eigenvalue $\lambda$, $\ket{\psi}$ is an eigenvector of $K$ with associated eigenvalue $P_k(\lambda)$. 
\item\label{item:preserve-gs} If $\lambda \leq \eta_0$ then $P_k(\lambda) \geq 1$,  $P_k(\eta_0)=1$,  and if  $\lambda\leq \eta_0+\spg/k$ then 
$$ P_k(\lambda)\geq 1- O\Big(\frac{k|\lambda-\eta_0|}{\spg\|H\|} \sqrt{\Delta}\Big).$$
\item\label{item:orth-shrink} If $\lambda \geq \eta_1$ then $P_k(\lambda) \leq \sqrt{\Delta}$.
\item\label{item:large-sr} The Schmidt rank of $K$ at all cuts in the region $J_M$ (resp. $J_L$, $J_R$) satisfies $B \leq \tilde{B}^{O(k)}$, where $\tilde{B}$ is an upper bound on the
      Schmidt rank of $H_M$ (resp. $H_L$, $H_R$) at every cut. 
						  \item\label{item:small-sr} The Schmidt rank of $K$ with respect to the cuts $(i_1,i_1+1)$ and $(i_2,i_2+1)$ satisfies
      $D \leq (dk)^{\bigO{\ell + k/\ell}}$.
\end{enumerate}
\end{theorem}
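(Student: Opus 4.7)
The proof splits naturally into five claims: the first three are analytic properties of the rescaled Chebyshev polynomial $P_k$, and the last two are Schmidt-rank bounds on the operator $P_k(H)$. For Part 1, $K=P_k(H)$ is a polynomial in $H$, so the two share an eigenbasis with eigenvalue assignment $\lambda\mapsto P_k(\lambda)$. For Parts 2 and 3, I exploit that the affine map inside $\tilde P_k$ sends $[\eta_1,\|H\|]$ to $[-1,1]$ (where $|T_k|\leq 1$) and sends $\eta_0$ to $-1-2\gamma/(\|H\|-\eta_1)$, which lies outside $[-1,1]$. Using $T_k(\cosh\theta)=\cosh(k\theta)$ and $\operatorname{arccosh}(1+z)\geq \sqrt{2z}$ one obtains
$$|\tilde P_k(\eta_0)|\geq \tfrac{1}{2}\exp\bigl(2k\sqrt{\gamma/(\|H\|-\eta_1)}\bigr);$$
normalizing and squaring gives $|P_k(\lambda)|^2\leq \Delta$ on $[\eta_1,\|H\|]$, where I use $\|H\|-\eta_1\leq \|H\|-\eta_0$ to match the bound in the statement. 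This is Part 3. For $\lambda\leq\eta_0$, monotonicity of $|T_k|$ outside $[-1,1]$ gives $|P_k(\lambda)|\geq 1$; the local bound near $\eta_0$ in Part 2 follows from a Taylor expansion of $\tilde P_k$ at the image point together with a Bernstein-type derivative estimate for $T_k$ just beyond $-1$.

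For Part 4, at any cut inside $J_M$ I decompose $H=H_L'+H_M+H_R'$, where $H_L'$ absorbs $H_L$ together with all the near-$i_1$ terms (entirely left of the cut), and $H_R'$ analogously absorbs everything to the right. The Schmidt rank of $H$ at the cut is therefore at most $\tilde B+2$, and polynomial subadditivity gives the Schmidt rank of $P_k(H)=\sum_{m\leq k}c_m H^m$ at most $(k+1)(\tilde B+2)^k=\tilde B^{O(k)}$. The cases of cuts in $J_L$ and $J_R$ are symmetric.

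Part 5 is the technical heart. At the cut $(i_1,i_1+1)$, I will split $H=A+B$ with $A=\sum_{j=i_1-\ell}^{i_1+\ell-1}h_j$ supported on the window $W=\{i_1-\ell,\ldots,i_1+\ell\}$ and $B=H-A$ supported on $W^c$ together with the two boundary qudits $i_1\pm\ell$. Two local-geometric observations drive the bound: (i) $B$ has Schmidt rank at most $2$ across the cut, since all its local terms lie strictly on one side; and (ii) the commutator $[A,B]$ collects contributions only from the overlaps $[h_{i_1-\ell},h_{i_1-\ell-1}]$ (supported on $\{i_1-\ell-1,i_1-\ell,i_1-\ell+1\}$, strictly left of the cut) and $[h_{i_1+\ell-1},h_{i_1+\ell}]$ (strictly right), so $[A,B]$ itself has Schmidt rank at most $2$ across the cut. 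The plan is to combine a direct expansion of $T_\ell(H)$ --- which incurs Schmidt rank at most $(dk)^{O(\ell)}$ since each monomial of degree at most $\ell$ spreads support by at most $\ell$ sites beyond the window --- with the Chebyshev composition identity $T_{k}=T_{k/\ell}\circ T_\ell$ (with a small adjustment when $\ell\nmid k$), expressing $P_k(H)$ as a polynomial of degree $O(k/\ell)$ in $T_\ell(H)$. The outer polynomial application contributes a further factor of $(dk)^{O(k/\ell)}$, yielding the claimed $(dk)^{O(\ell+k/\ell)}$ bound; the cut $(i_2,i_2+1)$ is handled identically.

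The main obstacle is pushing the composition argument through quantitatively. A naive Schmidt-rank product bound on the $O(k/\ell)$-fold composition would multiply full Schmidt ranks and give the useless $d^{O(k)}$; the key technical work is to show that each outer multiplication by $T_\ell(H)$ increases the Schmidt rank by only a $(dk)^{O(1)}$ factor, using the near-locality of $[A,B]$ and of the iterated commutators $[A,[A,B]],\ldots$ to control growth in the spirit of the iterated-polynomial bookkeeping of AKLV. The polynomial weights $|c_m|$ contribute only a $\mathrm{poly}(k)$ factor (absorbed into the $(dk)^{O(\cdot)}$ base) via standard estimates on Chebyshev coefficients.
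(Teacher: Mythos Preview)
Your treatment of Parts~1--4 is correct and matches the paper's proof: the eigenvalue mapping is immediate, the Chebyshev growth estimates (via $T_k(\cosh\theta)=\cosh k\theta$) give exactly the bounds the paper cites, and the crude $\tilde B^{O(k)}$ Schmidt-rank bound from degree counting is what the paper uses for Item~4.

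Part~5 is where your proposal has a genuine gap. The paper does not use the composition identity $T_k=T_{k/\ell}\circ T_\ell$; it defers to Proposition~\ref{prop:Cheby-AGSP-eff}, which follows the argument of~\cite[Lemma~4.2]{ref:AKLV2013-AL}. That argument is combinatorial, not a commutator argument: one expands $H^k$ in monomials in the individual summands $H_L,h_{i_1-\ell},\ldots,h_{i_1+\ell-1},H_M,\ldots$, and for each monomial picks a position $u$ in the window at which $h_u$ appears the fewest times. Since the total number of window occurrences in a degree-$k$ monomial is at most $k$ and there are $2\ell$ positions, some $u$ has $j\le k/(2\ell)$ occurrences; cutting at $u$ costs Schmidt rank $d^{O(j)}=d^{O(k/\ell)}$, and shifting the cut from $u$ to $i_1$ costs another $d^{O(\ell)}$. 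A polynomial-interpolation device (the formal variables $Z$) lets one reassemble these pieces into at most $(dk)^{O(\ell+k/\ell)}$ tensor factors.

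Your composition route, by contrast, rests on the unproved claim that each multiplication by $T_\ell(H)$ raises the Schmidt rank by only $(dk)^{O(1)}$ rather than by its full Schmidt rank $(dk)^{\Theta(\ell)}$. I do not see how the locality of $[A,B]$ helps here: even granting that $[A,B]$ has Schmidt rank~$2$, your operator $A$ itself contains $h_{i_1}$, which crosses the cut, so powers of $T_\ell(A+B)$ accumulate up to $k$ factors of $h_{i_1}$ just as $H^k$ does. The ``AKLV bookkeeping'' you invoke is precisely the pigeonhole-over-$u$ mechanism above, and it requires keeping the window terms $h_{i_1-\ell},\ldots,h_{i_1+\ell-1}$ separate rather than lumping them into a single $A$; your decomposition $H=A+B$ discards exactly the resolution that makes that argument work.
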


\begin{proof}
  Item~\ref{item:preserve-ev}. follows from the definition of $K=T_k(H)$ as a polynomial in $H$ (see Definition~\ref{def:Cheby-AGSP}). 
	For item~\ref{item:preserve-gs}. and item.~\ref{item:orth-shrink} we use the following properties of $T_k$ (see e.g.~\cite{ref:AKLV2013-AL} and~\cite[Lemma~B.1]{ref:Kuwahara2015-LR} for a proof): 
	\begin{align}
	|T_k(x)|&\le 1& \text{for }|x|\le 1,\label{eq:cheb-1a}\\
	|T_k(x)|&\ge\frac{1}{2}\exp\left(2k\sqrt{\frac{|x|-1}{|x|+1}}\right)&\text{for }|x|\ge 1,
	\label{eq:cheb-1b}\\
	T_k(x) &= \frac{1}{2}\big(x+\sqrt{x^2-1}\big)^k + \frac{1}{2}\big(x-\sqrt{x^2-1}\big)^k&\text{for }|x|\ge 1.\label{eq:cheb-1c}
	\end{align}
	The fact that eigenvectors with eigenvalue $\eta_0$ are mapped to fixed points of $K$ follows from $P_k(\eta_0)=1$. Next suppose $\ket{\psi}$ is an eigenvector of $H$ with eigenvalue $\eta_0+\delta$ where $|\delta|<\eta_1-\eta_0$. From~\eqref{eq:cheb-1c} we see $|T_k(x+\delta)-T_k(x)| = O(k\delta/\min(x^2-1,x\pm\sqrt{x^2-1}))$ as long as $x,x+\delta\leq -1$. Taking into account the scaling used to define $P_k$, 
	$$|P_k(\eta_0+\delta)-P_k(\eta_0)| = O\Big( \frac{1}{\tilde{P}_k(\eta_0)} \frac{k\delta}{\spg \|H\|}\Big) = O\Big(\frac{\delta\,k}{\spg\|H\|}\Big)  e^{-2k \sqrt{\frac{\spg}{\|H\|-\eta_0}}},$$
	where the last inequality uses~\eqref{eq:cheb-1b}. Item~\ref{item:orth-shrink} follows by combining~\eqref{eq:cheb-1a} and~\eqref{eq:cheb-1b}. 
  
	Item~\ref{item:large-sr}. is immediate since $K$ is computed as a linear combination of $j$-th powers of $H$ for $j\leq k$. 
	
	Finally, for a proof of item~\ref{item:small-sr} we refer to Proposition~\ref{prop:Cheby-AGSP-eff} in Section~\ref{sec:cheby-efficient} below.
\end{proof}

Theorem~\ref{thm:Cheby-AGSP} provides us with a powerful recipe for
constructing good AGSP. To minimize the Schmidt rank at a cut $(i,i+1)$ for $i\in\{i_1,i_2\}$ we should take $k=\bTheta{\ell^2}$, which gives a bound of $D \leq (dk)^{O(\sqrt{k})}$, a much better bound than the naive
$d^{O(k)}$. To guarantee a small $\Delta$ we should choose $k$ large enough
to ensure that $e^{-4k\sqrt{\spg/\norm{H}}}$ remains small, which
requires the Hamiltonian to have a small norm. This is the
role of the truncation scheme presented in the
following section.

\subsection{Hard truncation}
\label{sec:hard-truncation}

We introduce a scheme of \emph{hard truncation} that is appropriate (though not efficient) for truncating the norm of an arbitrary local Hamiltonian in a certain region $J$, while preserving its low-energy eigenspace $H_{[\eps_0,\eps_0+\eta]}$. The basic idea is to replace $H\mapsto H \Pi_{\le \eps_0+t} +
(\eps_0+t) \Pi_{ >\eps_0+t}$, where $\Pi_{\le t}$ projects onto the span of eigenvectors of $H$ with eigenvalue less than $t$, $\Pi_{>\eps_0+t} \EqDef \Id-\Pi_{\le
\eps_0+t}$, and $t$ is chosen to be large enough with respect to $\eta$. 

\begin{definition}[Hard truncation]\label{def:hard-trunc}
Let $t>0$,  $H = H_J + H_{\overline{J}}$ where $H_J = h_{j_0} + h_{j_0+1} +
  \ldots + h_{j_1-1}$ is a local Hamiltonian acting on a contiguous set of  
  qudits $J= \{j_0, j_0+1,\ldots, j_1\}$, and
  let $\eps_J$ be the ground energy of $H_J$. Let $\Pi_-$ be the
  projector onto the span of all eigenvectors of $H_J$ with eigenvalue less than $\eps_J+t$,
  and $\Pi_+\EqDef \Id-\Pi_-$.
  Then the \emph{hard truncation} of $H_J$ is given by
  \begin{align} \label{e:ht}
    \tilde{H}_J \EqDef H_J \Pi_- + (t+\eps_J)\Pi_+   
  \end{align}
	and the \emph{hard-truncated Hamiltonian} $\tilde{H}_t$ associated to the region $J$ is 
	$$ \tilde{H}_t = \tilde{H}_J + H_{\overline{J}}.$$
\end{definition}

We now show that truncating a $n$-qubit Hamiltonian on a subset $J$ of the qubits leads to a truncated Hamiltonian whose low-energy space is close to that of the original Hamiltonian. The main tool in proving this result is
Theorem~2.6 of~\cite{AradKZ14energy}, a
generalization and strengthening of the truncation result that
appeared in~\cite{ref:AKLV2013-AL}. Adapted to the current
setting it can be formulated as follows. 

\begin{proposition}[Adapted from Theorem~2.6 in~\cite{AradKZ14energy}]
\label{prop:truncation}
  For any  $\eta>0$ let $\Pi_{\le \eta}$ denote the
  projector on the span of all eigenvectors of $H$ with eigenvalue at most $
  \eta$, and similarly $\tilde{\Pi}_{\le \eta}$ for $\tilde{H}_t$.
  Let
  $\eps_0\le\eps_1\le\eps_2\le\ldots$ and
  $\tilde{\eps}_0\le\tilde{\eps}_1\le\tilde{\eps}_2\ldots$ be the
  sorted eigenvalues of $H$ and $\tilde{H}_t$ respectively, where eigenvalues appear with multiplicity. For any $\eta>0$, let 
	\begin{equation}\label{eq:def-xi}
	\xi = e^{(t-\eta)/8+24}.
	\end{equation}
  Then the following hold:
  \begin{enumerate}
    \item $\norm{(H-\tilde{H}_t)\Pi_{\le\eps_0+\eta}} \le \xi$ and 
      $\norm{(H-\tilde{H}_t)\tilde{\Pi}_{\le\eps_0+\eta}} \le \xi$,
    \item For all $j$ for which $\eps_j\le \eps_0+\eta$, we have
      $\eps_j-\xi \le \tilde{\eps}_j \le \eps_j$.
  \end{enumerate}   
\end{proposition}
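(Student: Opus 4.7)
The plan is to reduce the proposition to Theorem~2.6 of~\cite{AradKZ14energy}, which provides the key exponential overlap bound between low-energy eigenvectors of $H$ and the high-energy subspace of a local subregion; the only work beyond invoking that theorem is to translate its conclusions through the hard-truncation definition and, for part~2, to apply a standard variational argument.

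First I would record the operator identity
\[
H - \tilde{H}_t \;=\; H_J - \tilde{H}_J \;=\; \bigl(H_J - (t + \eps_J)\bigr)\Pi_+,
\]
which follows directly from Definition~\ref{def:hard-trunc} together with the spectral decomposition of $H_J$ into its $\Pi_-,\Pi_+$ parts. In particular $H-\tilde H_t$ vanishes on $\ran{\Pi_-}$, so
\[
\|(H-\tilde H_t)\Pi_{\le\eps_0+\eta}\| \;\le\; \|(H_J-(t+\eps_J))\Pi_+\| \cdot \|\Pi_+\Pi_{\le\eps_0+\eta}\|,
\]
and the task becomes bounding the ``leakage'' $\|\Pi_+\Pi_{\le\eps_0+\eta}\|$.

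Next I would invoke the Arad--Kuwahara--Landau bound: the overlap between the span of $H$-eigenvectors of energy at most $\eps_0+\eta$ and the high-energy subspace $\ran{\Pi_+}$ of $H_J$ is suppressed as $\exp\bigl(-(t-\eta)/8+O(1)\bigr)$. Their proof builds a low-degree polynomial $p$ such that $p(H_J)$ nearly fixes $H$-eigenvectors of energy at most $\eps_0+\eta$ (using locality of $H-H_J$ and an exponential tail estimate for $H_J$-energies conditioned on low $H$-energy), while simultaneously satisfying $p(H_J)\Pi_+\approx 0$; the explicit $1/8$ slope and the $+24$ additive constant in the definition of $\xi$ fall out of their Chebyshev construction. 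The $H_J-(t+\eps_J)$ prefactor in the operator identity above contributes only a factor of $\|H_J\|$, which is polynomial in $|J|$ and is absorbed into the constant $24$ (after choosing $t$ at least logarithmic in $|J|$, which is where the specific additive term originates). Exactly the same argument goes through with $\tilde H_t$ in place of $H$, since $\tilde H_t$ retains the ``local Hamiltonian outside $J$ plus bounded operator on $J$'' structure required by their theorem, yielding the symmetric bound on $\|(H-\tilde H_t)\tilde\Pi_{\le\eps_0+\eta}\|$.

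For part~2, the upper bound $\tilde\eps_j \le \eps_j$ is immediate from the operator inequality $\tilde H_J \le H_J$: on $\ran{\Pi_-}$ the two agree, and on $\ran{\Pi_+}$ all eigenvalues of $H_J$ are at least $t+\eps_J$, so $\tilde H_J\Pi_+ = (t+\eps_J)\Pi_+ \le H_J\Pi_+$. Hence $\tilde H_t \le H$, and the Courant--Fischer min--max principle delivers $\tilde\eps_j\le\eps_j$ for every $j$. The lower bound $\eps_j - \xi \le \tilde\eps_j$ then follows from part~1 by a Weyl-type argument: on the $(j{+}1)$-dimensional subspace $\ran{\tilde\Pi_{\le\tilde\eps_j}} \subseteq \ran{\tilde\Pi_{\le\eps_0+\eta}}$ (the inclusion uses the already-established $\tilde\eps_j\le\eps_j\le\eps_0+\eta$), the Rayleigh quotients of $H$ and $\tilde H_t$ differ by at most $\|(H-\tilde H_t)\tilde\Pi_{\le\eps_0+\eta}\|\le \xi$, so min--max applied to the former yields $\eps_j \le \tilde\eps_j + \xi$.

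The main obstacle is the leakage estimate $\|\Pi_+\Pi_{\le\eps_0+\eta}\| \lesssim \xi$; this is the technical heart of Theorem~2.6 of~\cite{AradKZ14energy} and rests on their Chebyshev-based ``operator tail'' analysis for the $H_J$-energy distribution of low-energy $H$-eigenvectors. Since we are explicitly adapting that theorem, I would invoke it as a black box rather than reproduce its proof, using only the packaging steps above to deliver the two conclusions in the stated form.
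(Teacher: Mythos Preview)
Your approach is essentially the same as the paper's: both reduce directly to Theorem~2.6 of~\cite{AradKZ14energy} as a black box, with the paper's proof being even terser---it simply records the parameter choices $\lambda=1/8$ and $|\partial L|=2$ and the crude estimate $\eps_0\le\tilde\eps_0+2$ needed to match the notation there. Your additional packaging (the operator identity $H-\tilde H_t=(H_J-(t+\eps_J))\Pi_+$, the min--max argument for part~2) is correct; the one imprecise remark is the claim that the $\|H_J\|$ prefactor is absorbed ``after choosing $t$ at least logarithmic in $|J|$''---no such hypothesis is present, and in fact the cited theorem bounds $\|(H-\tilde H_t)\Pi_{\le\eps_0+\eta}\|$ directly rather than via your product decomposition, so this intermediate step is unnecessary.
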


\begin{proof} The proposition follows from
  Theorem~2.6 in \Ref{AradKZ14energy} by using $\lambda=\frac{1}{8}$
  and the fact that $\eps_0\le \tilde{\eps}_0+2$ to bound
  $\Delta\tilde{\eps}$ by $\Delta\eps+2$. Here we can take $|\partial L|=2$
  since there are two boundary terms connecting the truncated region $J$
  and the rest of the system. 
    \end{proof}

The following lemma summarizes the approximation properties of the hard truncation procedure that will be important for us. 

\begin{lemma}
\label{lem:hard-trunc} 
 For any $\eta>0$, let $T_\eta =H_{[\eps_0,\eps_0+\eta]}$ be the low-energy eigenspace of $H$,  $J=\{j_0, \ldots, j_1\}$ a contiguous
  subset of qudits and $\tilde{H}_t$ the associated hard-truncated
  Hamiltonian, with corresponding low-energy eigenspace $\tilde{T}_\eta=\tilde{H}_{[\tilde{\eps}_0,\tilde{\eps}_0+\eta]}$.
	Let $\xi$ be as defined in~\eqref{eq:def-xi}. 
	Then the following hold for any $t>\eta$:
  \begin{enumerate}
    \item The ground energy $\tilde{\eps}_0$ of $\tilde{H}_t$ satisfies
      $\eps_0- C e^{-c(t-\eta)} \le \tilde{\eps}_0 \le \eps_0$ for some universal constants $C,c$.
      
    
    \item For any $\delta>0$ there is 
		$$\eta' = \eta+\sqrt{\frac{\eta}{\delta}}\, e^{-\Omega(t-\eta)}$$
		such that the subspace $\tilde{T}_{\eta'}$ is $\delta$-close to ${T}_\eta$, and $T_{\eta'}$ is $\delta$-close to $\tilde{T}_\eta$.

  \end{enumerate}
\end{lemma}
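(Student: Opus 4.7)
My plan uses Proposition~\ref{prop:truncation} directly for Part~1 and via a Davis--Kahan $\sin\Theta$ argument for Part~2.

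For Part~1, the upper bound $\tilde{\eps}_0 \le \eps_0$ follows from the operator inequality $\tilde{H}_t \le H$ (indeed $H_J\Pi_+ \ge (t+\eps_J)\Pi_+$ implies $\tilde{H}_J \le H_J$), together with the min--max characterization of eigenvalues. For the lower bound I would apply Item~2 of Proposition~\ref{prop:truncation} with $j=0$ and the $\eta$ given in the statement, yielding $\tilde{\eps}_0 \ge \eps_0 - \xi$ where $\xi \asymp e^{-(t-\eta)/8}$; this is of the claimed form $\eps_0 - Ce^{-c(t-\eta)}$ for appropriate absolute constants $c,C$.

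For Part~2, I would prove each of the two closeness statements via a Davis--Kahan-type argument that exploits the restricted perturbation bound of Proposition~\ref{prop:truncation} (note that $\|\tilde{H}_t - H\|$ may be large globally, while it is controlled on the relevant spectral subspace). To show that $\tilde{T}_{\eta'}$ is $\delta$-close to $T_\eta$, it suffices to upper bound the operator norm of $X := \tilde{\Pi}_{>\tilde{\eps}_0+\eta'}\, P_{T_\eta}$. Since $H$ commutes with $P_{T_\eta}$ and $\tilde{H}_t$ with $\tilde{\Pi}_{>\tilde{\eps}_0+\eta'}$, one has the Sylvester-type identity
\[
  \tilde{H}_t X - X H \;=\; \tilde{\Pi}_{>\tilde{\eps}_0+\eta'}\,(\tilde{H}_t - H)\,P_{T_\eta},
\]
whose right-hand side has operator norm at most $\xi$ by Item~1 of Proposition~\ref{prop:truncation}. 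The spectrum of $\tilde{H}_t$ on $\mathrm{range}(\tilde{\Pi}_{>\tilde{\eps}_0+\eta'})$ lies above $\tilde{\eps}_0+\eta' \ge \eps_0 + \eta + (\eta'-\eta-\xi)$ by Part~1, whereas $H$ on $\mathrm{range}(P_{T_\eta})$ has spectrum in $[\eps_0,\eps_0+\eta]$. The standard Sylvester-equation estimate then yields $\|X\| \le \xi/(\eta'-\eta-\xi)$, so that $\|\tilde{\Pi}_{>\tilde{\eps}_0+\eta'}\,v\|^2 \le \xi^2/(\eta'-\eta-\xi)^2$ for every unit $v \in T_\eta$. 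Choosing $\eta' - \eta$ of the claimed form $\sqrt{\eta/\delta}\,e^{-\Omega(t-\eta)}$ (with a suitably slower exponential-decay rate than that of $\xi$ itself, so that $1/\sqrt{\eta}$ can be absorbed) then ensures that this is at most $\delta$. The reverse direction, that $T_{\eta'}$ is $\delta$-close to $\tilde{T}_\eta$, follows by the symmetric argument, using the bound $\|(\tilde{H}_t - H)\tilde{\Pi}_{\le\eps_0+\eta}\| \le \xi$ from the other half of Item~1.

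The main obstacle I anticipate is to carefully verify that the Sylvester/Davis--Kahan estimate goes through with only the restricted perturbation bound rather than a global operator-norm bound on $\tilde{H}_t - H$, and to track the constants in the exponent of $e^{-\Omega(t-\eta)}$ carefully enough to recover the precise dependence of $\eta'$ on $\eta$ and $\delta$ claimed in the statement. A secondary check is to confirm the spectral-gap hypothesis of the Sylvester estimate is actually met using the lower bound on $\tilde{\eps}_0$ from Part~1, which is what allows one to compare the spectral subspaces of $H$ and $\tilde{H}_t$ at the shifted threshold $\tilde{\eps}_0 + \eta'$.
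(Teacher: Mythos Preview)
Your proof is correct and, for Part~2, takes a genuinely different route from the paper. The paper slices $T_\eta$ into $O(\eta/h)$ narrow energy bands of width $2h$: on each band it bounds $\|(\tilde H_t-\lambda)\ket\psi\|\le\xi+h$ via Proposition~\ref{prop:truncation}, applies Markov's inequality to control the weight of $\ket\psi$ above the threshold $\lambda+(\eta'-\eta)$, and then recombines the bands by Cauchy--Schwarz; the factor $\sqrt\eta$ in the stated $\eta'$ is exactly the cost of this recombination. Your Sylvester/Davis--Kahan argument bypasses the band decomposition entirely and yields directly $\|X\|\le\xi/(\eta'-\eta-\xi)$, hence an $\eta'$ of the form $\eta+e^{-\Omega(t-\eta)}/\sqrt\delta$ with no $\sqrt\eta$ at all. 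This is not literally the expression in the statement, but it is an equally usable bound (and tighter when $\eta\ge1$); your worry about absorbing $1/\sqrt\eta$ into the exponent is therefore unnecessary. The obstacle you anticipate---that only the \emph{restricted} bound $\|(H-\tilde H_t)\Pi_{\le\eps_0+\eta}\|\le\xi$ is available rather than a global one---is indeed harmless here, since the right-hand side of your Sylvester identity already carries $P_{T_\eta}$ on the right. For Part~1 your argument coincides with the paper's.
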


\begin{proof}
  The first item follows directly from  the second item in Proposition~\ref{prop:truncation}. 
	For the second item, we prove that $\tilde{T}_{\eta'}$ is $\delta$-close to ${T}_\eta$, the proof of the second relation being identical. Fix a small width parameter $h$ (to be specified later) and let $\ket{\psi} = \sum_i \beta_i \ket{\psi_i}$ be supported on eigenvectors $\ket{\psi_i}$ of $H$ with eigenvalue $\mu_i \in [\lambda-h,\lambda+h]$ with $\lambda \leq \eps_0+\eta$. Then $\|H\ket{\psi} - \lambda \ket{\psi}\| \leq h$. 
	Decompose $\ket{\psi} = \sum \alpha_i \ket{\phi_i}$, where for each $i$, $\ket{\phi_i}$ is an eigenvector of $\tilde{H}_t$ with associated eigenvalue $\tilde{\lambda}_i$. Using the first item in  Proposition~\ref{prop:truncation}, 
	\begin{align*}
\sum_i |\alpha_i|^2|\lambda -  \tilde{\lambda}_i|^2 &\leq \big(\| (H-\tilde{H})\ket{\psi}\|  + \|(H-\lambda\Id)\ket{\psi}\|\big)^2\\
	& \leq \big(e^{-\Omega(t-\eta)} + h\big)^2. 
	\end{align*}
By Markov's inequality it follows that for any $\delta>0$
	$$\big\|\tilde{\Pi}_{> \lambda+\delta}  \ket{\psi}\big\| \leq \frac{ e^{-\Omega(t-\eta)} + h}{\delta}.$$
 Any $\ket{\psi}$ in $T_\eta$ can be written as a linear combination $\ket{\psi} = \sum_j \beta_j \ket{h_j}$ with each $\ket{h_j}$ supported on eigenvectors of $H$ with eigenvalue in a small window of width $2h$, and the number of terms at most $\lceil \frac{\eta-\eps_0}{2h}\rceil$. Thus
\begin{align*}
\big\|\tilde{\Pi}_{> {\eps}_0+ \eta'}  \ket{\psi}\big\| &\leq \sum_j |\beta_j|\big\|\tilde{\Pi}_{> {\eps}_0+ \eta'}  \ket{h_j}\big\| \\
&\leq \sqrt{\frac{\eta}{2h}}\frac{ e^{-\Omega(t-\eta)} + h}{\eta'-\eta} .
\end{align*}
Choosing $h  = e^{-\Theta(t)}$, we see that the choice of $\eta'$ made in the statement of the lemma suffices to ensure that this quantity is at most $\sqrt{\delta}$, as desired. 
\end{proof}

\subsection{The AGSP constructions}
\label{sec:agsp-constructions}

The combination of Theorem \ref{thm:Cheby-AGSP}, Proposition \ref{prop:truncation}, and Lemma \ref{lem:hard-trunc} yield a construction that starts with a local Hamiltonian $H$, produces a truncated Hamiltonian $\tilde{H}$ with low energy space close to that of $H$ along with a spectral AGSP $K$ for $\tilde{H}$ with a good trade-off between the parameters $D$ and $\Delta$.

\begin{corollary} \label{c:goodagsphardtrunc}
Let $H$ be a 1D local Hamiltonian with ground energy $\eps_0$, and $\mH=\mH_L\otimes \mH_M\otimes \mH_R$ a decomposition of the $n$-qudit space in contiguous regions. For any integer $\ell\geq 1$ and $t>0$ there exists a Hamiltonian $\tilde{H}$ such that for any $\eps_0<\eta_0<\eta_1$ there is a $(D,\Delta)$ spectral AGSP $K$ for $(\tilde{H}, \eta_0, \eta_1)$ with the following properties.
\begin{enumerate}
\item $D = (d\ell)^{O(\ell)}$ and $\Delta = e^{-\Omega (\frac{\ell^2}{\sqrt{t+ \ell}}\sqrt{\eta_1-\eta_0})}$,
\item There are universal constants $C,c >0$ such that for $i\in\{0,1\}$
\begin{equation} \label{e:specclose}
0 \leq \eps_i - \tilde{\eps}_i \leq C e^{-c(t - \eps_0)}
\end{equation} 
where $\eps_i$, $\tilde{\eps}_i$ are the $i$-th smallest (counted with multiplicity) eigenvalues of $H$, $\tilde{H}$ respectively.
\item The space $H_{[\eps_0, \eta_1]}$ is $\delta$-close to $\tilde{H}_{[\tilde{\eps}_0, \eta_0]}$ and  $\tilde{H}_{[\tilde{\eps}_0, \eta_1]}$  is $\delta$-close to $H_{[\eps_0, \eta_0 ]}$, for 
\begin{equation} \label{e:setclose}
\delta = \Theta\Big(\frac{\eta_0- \eps_0}{(\eta_1-\eta_0)^2}\Big) e^{-\Omega(t- (\eta_0-\eps_0))}.
\end{equation}

\end{enumerate}
\end{corollary}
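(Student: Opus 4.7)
The plan is to assemble the three ingredients developed earlier in the section: Chebyshev AGSPs (Theorem \ref{thm:Cheby-AGSP}), hard truncation (Definition \ref{def:hard-trunc}), and the approximation properties of the latter (Proposition \ref{prop:truncation} and Lemma \ref{lem:hard-trunc}).

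The first step is to define $\tilde H$ as a hard truncation of $H$ at level $t$, arranged so that on the one hand $\tilde H$ still admits a decomposition of the form~\eqref{eq:H-mid-cut}, with the $2\ell$-wide strips of two-local terms flanking the cuts $(i_1,i_1+1)$ and $(i_2,i_2+1)$ left untouched, and on the other hand $\|\tilde H\| - \tilde\eps_0 = O(t+\ell)$. Intuitively, truncating the three large blocks $H_L,H_M,H_R$ at level $t$ caps the contribution of each to the norm at $O(t)$, while the preserved two-local strips near the cuts contribute the additional $O(\ell)$. The spectral closeness~\eqref{e:specclose} is then immediate from item~2 of Proposition \ref{prop:truncation} with $\eta$ a fixed constant: the bound $\xi = e^{-(t-\eta)/8+O(1)}$ yields $0 \le \eps_i - \tilde\eps_i \le Ce^{-c(t-\eps_0)}$ for $i\in\{0,1\}$. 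For the subspace closeness~\eqref{e:setclose} I would apply item~2 of Lemma \ref{lem:hard-trunc} with $\eta = \eta_0 - \eps_0$, set $\eta' = \eta_1 - \eps_0$ (the mismatch $\tilde\eps_0 \leftrightarrow \eps_0$ being absorbed into additive $O(e^{-c(t-\eps_0)})$ terms), and solve $\eta' - \eta = \sqrt{\eta/\delta}\,e^{-\Omega(t-\eta)}$ for $\delta$, which gives exactly $\delta = \Theta\bigl((\eta_0-\eps_0)/(\eta_1-\eta_0)^2\bigr)\,e^{-\Omega(t-(\eta_0-\eps_0))}$.

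For the AGSP itself, I would apply Theorem \ref{thm:Cheby-AGSP} to $\tilde H$ with Chebyshev degree $k = \Theta(\ell^2)$. Since the decomposition~\eqref{eq:H-mid-cut} has been preserved by our choice of truncation, item~5 of the theorem gives the Schmidt-rank bound $D \leq (dk)^{O(\ell+k/\ell)} = (d\ell)^{O(\ell)}$ at the two designated cuts, while items~2 and~3 together certify the spectral-AGSP property relative to the thresholds $\eta_0,\eta_1$. Substituting $k=\Theta(\ell^2)$ and $\|\tilde H\|-\tilde\eps_0 = O(t+\ell)$ into the Chebyshev shrinkage bound $\Delta \leq 4 e^{-4k\sqrt{(\eta_1-\eta_0)/(\|\tilde H\|-\tilde\eps_0)}}$ yields the announced $\Delta = e^{-\Omega(\ell^2\sqrt{(\eta_1-\eta_0)/(t+\ell)})}$.

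The main obstacle is the design of the truncation and the resulting verification that $\|\tilde H\|-\tilde\eps_0 = O(t+\ell)$: since Definition \ref{def:hard-trunc} only treats a single contiguous region, one must apply hard truncation to each of the three blocks $H_L,H_M,H_R$ separately (or invoke a mild generalization), and then check that the approximation guarantees of Proposition \ref{prop:truncation} compose across the non-overlapping boundary strips without damaging either the spectral-closeness estimates or the two-local structure needed for item~5 of Theorem \ref{thm:Cheby-AGSP}. Once this is in place, everything else amounts to tracking the three pairs of parameters $(t,\eta,\xi)$, $(\eta_0-\eps_0,\eta_1-\eta_0,\delta)$, and $(k,\ell,\Delta)$ through the earlier lemmas.
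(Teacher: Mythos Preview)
Your proposal is correct and follows essentially the same route as the paper: truncate the three blocks $J_L,J_M,J_R$ (leaving the $2\ell$-wide two-local strips around each cut intact) so that $\|\tilde H\|=O(t+\ell)$, invoke Lemma~\ref{lem:hard-trunc} (applied thrice in sequence) for items~2 and~3, and then build $K$ as the degree-$k=\ell^2$ Chebyshev AGSP for $\tilde H$ via Theorem~\ref{thm:Cheby-AGSP}. The ``main obstacle'' you flag is exactly what the paper does, and it is handled simply by iterating the single-region truncation three times on disjoint regions.
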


\begin{proof}
Let $L = \{1,\ldots,i_1\}$, $M=\{i_1+1,\ldots,i_2\}$ and $R=\{i_2+1,\ldots,n\}$ be the set of qudits contained in $\mH_L$, $\mH_M$ and $\mH_R$ respectively. We define the truncated Hamiltonian $\tilde{H}$ by applying the hard truncation transformation described in Definition~\ref{def:hard-trunc} thrice, to the regions $J_L = \{1,\ldots,i_1-\ell-1\}$, $J_M=\{i_1+\ell+1,\ldots,i_2-\ell-1\}$ and $J_R=\{i_2+\ell+1,\ldots,n\}$ respectively (provided each region is non-empty). The resulting truncated Hamiltonian $\tilde{H}=\tilde{H}_t$ has norm $O(\ell+t)$. 

Applying Lemma~\ref{lem:hard-trunc} thrice in sequence, for the three truncations performed, it follows that the sorted eigenvalues of $\tilde{H}$ satisfy~\eqref{e:specclose}. Eq.~\eqref{e:setclose} similarly follows from item 2. in Lemma~\ref{lem:hard-trunc} 

Finally we define the AGSP $K$ by applying the Chebyshev polynomial construction from Definition~\ref{def:Cheby-AGSP} to $\tilde{H}$ with a choice of $k=\ell^2$. The bounds on $\Delta$ and $D$ follow directly from item 3. and 5. from Theorem~\ref{thm:Cheby-AGSP} respectively. 
\end{proof}

From the corollary follow our two main AGSP contructions, which hold under assumptions (DG) and (LD) respectively. 

\begin{theorem}[Existence of AGSP, (DG)] \label{t:ais} Let $H$ be a local Hamiltonian satisfying Assumption (DG), and $\mH = \mH_L \otimes \mH_M \otimes \mH_R$ a decomposition of the $n$-qudit space in three contiguous blocks. There exists a collection of $D^2$ operators $\{A_i\}_{i=1}^{D^2}$ acting on $\mH_M$ along with a subspace $\tilde{T}\subseteq \mH$ such that: 
\begin{itemize}
\item $H_{[\eps_0,\eps_0+\eta_0]}$ and $\tilde{T}$ are mutually $.005$-close;
\item $D=\alone$,
\item There is $\Delta>0$ such that $D^{12} \Delta \leq \ddelta$ and whenever $S\subseteq \mH_M$ is $\delta$-viable for $\tilde{T}$ then $S'=\Span\{\cup_i A_i S\}$ is $\delta'$-viable for $\tilde{T}$, with $\delta'= \frac{\Delta}{(1-\delta)^2}$.
\end{itemize}
\end{theorem}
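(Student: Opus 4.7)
The plan is to apply Corollary~\ref{c:goodagsphardtrunc} to the decomposition $\mH = \mH_L \otimes \mH_M \otimes \mH_R$ with a carefully chosen truncation width $t$ and Chebyshev parameter $\ell$; the operators $\{A_i\}_{i=1}^{D^2}$ will be read directly off the guaranteed decomposition $K = \sum_i L_i \otimes A_i \otimes R_i$ of the resulting spectral AGSP, and $\tilde{T}$ will be taken as the low-energy subspace of the truncated Hamiltonian $\tilde{H}$ supplied by the corollary. Property~3 of the theorem --- viability amplification by the $A_i$ --- then reduces to a direct application of Lemma~\ref{lem:error-spectral}.

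Concretely, I would invoke Corollary~\ref{c:goodagsphardtrunc} with the corollary's AGSP energy parameters set to $\eta_0 := \eps_0$ and $\eta_1 := \eps_0 + \gamma/2$, using the gap $\gamma$ guaranteed by assumption (DG), and define $\tilde{T} := \tilde{H}_{[\tilde{\eps}_0,\, \eps_0]}$. Choosing $t = \Theta(\log(1/\gamma))$ makes the prefactor $\Theta((\eta_0-\eps_0)/(\eta_1-\eta_0)^2) = O(1/\gamma)$ appearing in~\eqref{e:setclose} more than compensated by the exponential tail, yielding $\delta$-closeness strictly below $0.005$ between $H_{[\eps_0,\eps_0+\eta_0]}$ (which coincides with the ground space since $\eta_0 < \gamma$) and $\tilde{T}$, in both directions, and hence mutual $0.005$-closeness via Definition~\ref{d:close}. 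The key quantitative choice is $\ell = \tilde{\Theta}(\log^2 d / \gamma)$: it makes $D = (d\ell)^{O(\ell)} = \alone$, matching the second bullet, and simultaneously forces $\Delta = e^{-\Omega(\ell^2 \sqrt{\gamma}/\sqrt{\ell+t})} = e^{-\tilde{\Omega}(\log^4 d/\gamma^{3/2})}$ small enough that $D^{12}\Delta \leq 10^{-5}$; I would inflate $\ell$ by a further polylogarithmic factor if needed to swamp the $12\log D$ overhead while remaining within the budget $\alone$. The viability amplification $\delta \mapsto \Delta/(1-\delta)^2$ then follows from Lemma~\ref{lem:error-spectral} applied to $K$ with target subspace $\tilde{T}$.

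The main obstacle I anticipate is in verifying the hypothesis of Lemma~\ref{lem:error-spectral} that $\tilde{H}$ has no eigenvalues in $(\eta_0,\eta_1) = (\eps_0, \eps_0 + \gamma/2)$, and that $\tilde{T}$ as defined really equals the ground space of $\tilde{H}$ below $\eta_1$. The three successive hard truncations performed on $J_L, J_M, J_R$ to build $\tilde{H}$ each perturb the sorted eigenvalues by at most $\xi = e^{-\Omega(t)}$ (Proposition~\ref{prop:truncation}, item~2), so I would compose these three perturbations and verify that, for $t = \Theta(\log(1/\gamma))$ chosen large enough, the input gap of $H$ between $\eps_0$ and $\eps_0+\gamma$ transfers to $\tilde{H}$ having no eigenvalue in $(\eps_0,\eps_0+\gamma/2)$. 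Balancing $t$ against $\ell$ and the target $D^{12}\Delta \leq 10^{-5}$ is mostly bookkeeping, but needs to be done carefully to avoid inflating the $\alone$ exponent; this is where I expect the technical care to be concentrated.
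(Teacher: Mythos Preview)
Your proposal is correct and follows essentially the same approach as the paper: invoke Corollary~\ref{c:goodagsphardtrunc}, read off the $\{A_i\}$ from the decomposition $K=\sum_i L_i\otimes A_i\otimes R_i$, set $\tilde{T}$ to be the low-energy space of the truncated $\tilde{H}$, use the eigenvalue perturbation bound~\eqref{e:specclose} to transfer the spectral gap to $\tilde{H}$, and conclude via Lemma~\ref{lem:error-spectral}. The paper's parameter choices differ only cosmetically from yours --- it takes $\eta_0=\eps_0+\spg/10$, $\eta_1=\eps_0+9\spg/10$, $\ell=\Theta(\spg^{-1}\log\spg^{-1})$, and $t=\tilde{O}(\ell)$ --- and, like you, identifies the transfer of the gap to $\tilde{H}$ as the point requiring care.
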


\begin{proof}
Let $\eta_0 = \eps_0 + \spg/10$ and $\eta_1 = \eps_0+9\spg/10$. Provided the implied constants are chosen large enough, setting $\ell = \Theta(\spg^{-1} \log \spg^{-1})$, $t=\tilde{O}(\ell)$ and $t> \tilde{O}(\frac{1}{\gamma} \log^2 (d/ \gamma))$ in Corollary~\ref{c:goodagsphardtrunc} gives $D^{12}\Delta < \ddelta$.  Due to the gap assumption it holds that $T=H_{[\eps_0, \eta_0]}= H_{[\eps_0, \eta_1]}$.  The choice of $t$ above  also ensures that the right-hand side of~(\ref{e:specclose}) is smaller than $\frac{1}{10}\gamma$ and the right hand side of~(\ref{e:setclose}) is smaller than $.005$, in which case $\tilde{H}$ has a spectral gap between $\eta_0$ and $\eta_1$, so that $\tilde{H}_{[\tilde{\eps}_0, \eta_0 ]}= \tilde{H}_{[\tilde{\eps}_0, \eta_1 ]}$. Then item 2 in the corollary implies that $\tilde{H}_{[\tilde{\eps}_0, \eta_0 ]}$ and $T$ are mutually $.005$-close, giving the first condition in the theorem with $\tilde{T}= \tilde{H}_{[\tilde{\eps}_0, \tilde{\eps}_0 + \frac{1}{10}\spg]}$.

The operators $\{A_i\}$ are defined from a decomposition $K= \sum_{i=1}^{D^2} L_i \otimes A_i \otimes R_i$ associated to the factorization $\mH=\mH_L\otimes\mH_M\otimes\mH_R$ of the AGSP from Corollary~\ref{c:goodagsphardtrunc}. Lemma~\ref{lem:error-spectral} gives the desired quantitative tradeoff between the increase in dimension of a viable set
and its increase in overlap, when acted upon by the $\{A_i\}$. 
\end{proof}

\begin{theorem}[Existence of AGSP, (LD)] \label{t:aisgapless} Let $\mu>0$ be a constant, $H$  a local Hamiltonian satisfying Assumption (LD), and $\mH = \mH_L \otimes \mH_M \otimes \mH_R$ a decomposition of the $n$-qudit space in three contiguous blocks. For any $\eta \geq \eta_1\geq 2\frac{\mu}{\log n}$ there exists a collection of $D^2$ operators $\{A_i\}_{i=1}^{D^2}$ acting on $\mH_M$ along with two subspaces $\tilde{T}_{-}\subseteq \tilde{T} \subseteq \mH$ such that:
\begin{itemize}
\item  $H_{[\eps_0, \eps_0 + \eta_1]}$ is $.005$-close to $\tilde{T}$, 
\item $\tilde{T}_{-}$  is  $.005$-close to $H_{[\eps_0, \eps_0 + \eta_1- \frac{\mu}{\log n}]} $,
\item  $D=\althree$,
\item There is a $\Delta>0$ such that $D^{12} \Delta < \ddelta$ and for any $S\subseteq \mH_M$ that is $\delta$-viable for $\tilde{T}$ it holds that $S'=\Span\{\cup_i A_i S\}$ is $\delta'$ -viable for $\tilde{T}_{-}$ with 
$\delta'= \frac{\Delta}{(1-\delta)^2}$.
\end{itemize}
\end{theorem}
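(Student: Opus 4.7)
The plan is to mirror the proof of Theorem~\ref{t:ais}, adapting parameters to the gapless (LD) regime. The central difference is that, in the absence of a true spectral gap in $H$, we manufacture an artificial ``gap'' of width $\mu/\log n$ in the AGSP by picking thresholds $\eta_0^\star < \eta_1^\star$ both close to $\eps_0 + \eta_1$, with $\eta_1^\star - \eta_0^\star = \Theta(\mu/\log n)$. Applying Corollary~\ref{c:goodagsphardtrunc} with these parameters yields a truncated Hamiltonian $\tilde{H}$ together with a spectral AGSP $K$ and the operator decomposition $K = \sum_i L_i \otimes A_i \otimes R_i$. I would then define $\tilde{T} = \tilde{H}_{[\tilde{\eps}_0, \eta_1^\star]}$ and $\tilde{T}_- = \tilde{H}_{[\tilde{\eps}_0, \eta_0^\star]}$ so that $\tilde{T}_- \subseteq \tilde{T}$.

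For the parameter choice, I would set $\ell = \tilde{\Theta}(\frac{\log n}{\mu}\log^2 d)$ and $t = \Theta(\ell)$, with constants chosen large enough to ensure: (i)~$D = (d\ell)^{O(\ell)} = \althree$ by item~1 of the corollary; (ii)~$\Delta = e^{-\Omega(\ell^{3/2}\sqrt{\mu/\log n})}$ is small enough that $D^{12}\Delta < \ddelta$, which working backward dictates the scaling of $\ell$; and (iii)~the right-hand sides of (\ref{e:specclose}) and (\ref{e:setclose}) are much smaller than $\mu/\log n$, so that item~3 of the corollary produces the two $.005$-closeness statements of the theorem, possibly using Lemma~\ref{l:close} to absorb small slack between the AGSP thresholds and the actual subspace edges. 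The assumption $\eta_1 \geq 2\mu/\log n$ ensures $\eta_0^\star > \eps_0$ so that $\tilde{T}_-$ is non-trivial.

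The viability reduction requires a small variant of Lemma~\ref{l:er}, which assumed the absence of eigenvalues in $(\eta_0, \eta_1)$. Here that assumption is dropped, at the cost of having the input and output viability refer to \emph{different} subspaces: given a $\delta$-viable set $S$ for the \emph{larger} space $\tilde{T}$, the claim is that $V = \Span\{A_i S\}$ is $\delta'$-viable for the \emph{smaller} space $\tilde{T}_-$, with $\delta' = \Delta/(1-\delta)^2$. The proof follows Lemma~\ref{l:er} almost verbatim: for unit $\ket{v} \in \tilde{T}_-$ set $\ket{v'} = K^{-1}\ket{v}/\|K^{-1}\ket{v}\|$ (well-defined since the $K$-eigenvalues on $\tilde{T}_-$ are $\geq 1$), then apply Lemma~\ref{l:2} to the $\delta$-viability of $S$ for $\tilde{T}$ to write $\ket{u} = \alpha\ket{v'} + \sqrt{1-\alpha^2}\ket{v^\perp}$ in $\mH_L \otimes S \otimes \mH_R$ with $\alpha \geq 1-\delta$ and $\ket{v^\perp} \perp \tilde{T}$. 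The crucial point is that $\ket{v^\perp} \perp \tilde{T} = \tilde{H}_{[\tilde{\eps}_0, \eta_1^\star]}$ directly implies $\ket{v^\perp}$ is supported on eigenvectors of $\tilde{H}$ with eigenvalue $> \eta_1^\star$, so $\|K\ket{v^\perp}\|^2 \leq \Delta$ purely from the spectral AGSP property --- no spectral gap hypothesis is needed. The remainder of the calculation in Lemma~\ref{l:er} produces the stated $\delta'$.

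The main obstacle is the calibration of parameters subject to the three competing constraints above: the small artificial gap of size $\mu/\log n$ is the bottleneck, forcing $\ell$ to scale as $\tilde{\Theta}(\frac{\log n}{\mu}\log^2 d)$ and producing $D = \althree$, which is larger than the (DG) bound by an additional $\log n$ factor in the exponent. A secondary conceptual point is the $\tilde{T}$-vs-$\tilde{T}_-$ asymmetry: the AGSP has uncontrolled behavior on eigenvectors in the transition window $(\eta_0^\star, \eta_1^\star)$, so information about these eigenvectors in the input viable set may be lost after applying $K$, which is precisely why the output viability refers only to the smaller $\tilde{T}_-$.
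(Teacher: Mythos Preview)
Your proposal is correct and follows essentially the same approach as the paper's proof: replace the gap parameter $\gamma$ by $\mu/\log n$, invoke Corollary~\ref{c:goodagsphardtrunc} with correspondingly scaled $\ell,t$, and for the viability reduction observe that since $S$ is $\delta$-viable for the larger $\tilde{T}$, Lemma~\ref{l:2} produces an approximant whose error component is orthogonal to $\tilde{T}$ (hence lies above $\eta_1^\star$ in the spectrum of $\tilde{H}$), so the spectral AGSP property alone suffices to shrink it. Your treatment of the modified error-reduction step is in fact more explicit than the paper's, which simply notes that the situation ``brings us in line with the proof of Lemma~\ref{lem:error-spectral}'' and defers the details.
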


\begin{proof}
The main difference with the proof of Theorem~\ref{t:ais} is that the parameter corresponding to the gap $\gamma$ is replaced by the quantity $\frac{\mu}{\log n}$. The proof of the first two items claimed in the theorem then closely mirrors that of Theorem~\ref{t:ais}.

It only remains to verify the third item. Despite having the desired AGSP, unlike in the gapped case we cannot hope to improve the quality of the viable set $S$ for all of $\tilde{T}=\tilde{H}_{[0, \eta' - \frac{\mu}{3\log n}]}$ by the application of the AGSP $K^k$.  However, if we view $S$ as a viable set for the smaller $\tilde{T}_{-}=\tilde{H}_{[0, \eta'  - \frac{2\mu}{3 \log n}]} \subseteq \tilde{T}$,  we now have an effective AGSP with respect to $\tilde{T}_{-}$ and the orthogonal complement of the larger $\tilde{T}$ and we can proceed as if in the presence of a small spectral gap of $ \frac{\mu}{3 \log n}$.  To see this, fix any vector $\ket{\psi} \in \tilde{T}_{-}$. Lemma \ref{l:2} shows that there exists a $\ket{w}\in S$ such that $\ket{w}= c \ket{\psi} + \ket{\psi^\perp}$ for some $\ket{\psi^\perp}$ orthogonal to $\tilde{T}$, and $c\geq (1-\delta)$. This brings us in line with the proof of  Lemma~\ref{lem:error-spectral} and we can use the same analysis to show that applying $K$ improves the parameter of the viable set $S$ from $\delta$ to the desired $\delta'= \frac{\Delta}{(1-\delta)^2}$. 
\end{proof}

\subsection{Area law for degenerate Hamiltonians} 
 \label{s:bootstrap-dg}

\begin{theorem}[Area law for degenerate gapped Hamiltonians]\label{thm:al-dg}
Let $H$ be a 1D local Hamiltonian acting on $n$ qudits of local dimension $d$ such that $H$ satisfies Assumption (DG). For any fixed cut and any $\delta=\poly^{-1}(n)$, for every unit $\ket{\psi}\in T$ there is an approximation $\ket{\psi'}$ such that $|\braket {\psi}{\psi'}|\geq 1- \delta$ and $\ket{\psi'}$ has Schmidt rank no larger than 
$$s(\delta)= r\, \aldgs$$
at that cut, and an MPS representation with bond dimension bounded by 
$$r \,\aldge.$$
Moreover, every state $\ket{\psi}\in T$ has entanglement entropy
\[ S(\ket{\psi}\bra{\psi}) \leq   \ln r + \tilde{O}\Big(\frac{1}{\gamma}\log^3 d\Big). \] 
\end{theorem}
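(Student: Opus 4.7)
My approach is to instantiate the viable-set-amplification / bootstrapping scheme outlined in Section~\ref{sec:viable-amplification}, using the spectral AGSP from Theorem~\ref{t:ais} as the error-reduction engine. Applying Theorem~\ref{t:ais} to a fixed three-block decomposition $\mH=\mH_L\otimes\mH_M\otimes\mH_R$ whose middle cut is the one we care about yields a truncated target subspace $\tilde T$ that is mutually $0.005$-close to $T=H_{[\eps_0,\eps_0+\spg]}$ (so by robustness, Lemma~\ref{l:close}, any $\delta$-viable set for $\tilde T$ is $O(\delta)$-viable for $T$), and a collection $\{A_i\}_{i=1}^{D^2}$ on $\mH_M$ with $D=\alone$ and $D^{12}\Delta\le\ddelta$.

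Next I bootstrap. Start from the trivial viable set $W_0=\mH_M$, which has dimension $d^{|M|}$ and viability $0$ for $\tilde T$, and let $r=\dim\tilde T$. Given a $\delta$-viable set of dimension $q$, one step of amplification: (i) randomly sample an $s$-dimensional subspace, which by Lemma~\ref{l:rs} is $\delta'$-viable for $\delta'\approx 1-(1-\delta)s/(8q)$ provided $s\gtrsim r\log(q/\eta)$, and (ii) apply the AGSP via Lemma~\ref{l:er} to get a set of dimension $D^2 s$ with viability $\Delta/(1-\delta')^2\le 64\Delta q^2/((1-\delta)^2 s^2)$. Choosing $s$ as the smallest integer with $64\Delta q^2/((1-\delta)^2 s^2)\le \delta$ and using $D^{12}\Delta\ll 1$ makes $D^2 s< q$ as long as $q$ exceeds a threshold of the form $q_*=r\cdot D^{O(1)}\cdot\bigl(\Delta/\delta\bigr)^{O(1)}$; iterating until the process can no longer shrink yields a $\delta$-viable set $W$ for $\tilde T$ of dimension at most $q_*$. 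Plugging the value of $D$ and optimizing the exponents in $\Delta/\delta$ (this is where the $\gamma^{-1/4}\log^{3/4}(1/\delta)$ factor enters, via $\log(1/\Delta)\propto \ell^2/\sqrt{t+\ell}$ in Corollary~\ref{c:goodagsphardtrunc}) gives $q_*\le r\cdot\aldgs$. Converting back to $T$ via Lemma~\ref{l:close} shows every $|\psi\rangle\in T$ has a $\delta$-approximation of Schmidt rank at most $r\cdot\aldgs$ at the chosen cut.

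To upgrade a single-cut Schmidt-rank bound to an MPS bond-dimension bound, I apply the above bound at every cut, but with the failure probability $\eta$ in Lemma~\ref{l:rs} set to $1/(3n)$ so that by the union bound a single random choice of sampling subspaces works simultaneously at all $n-1$ cuts --- this is where the extra factor $\log n$ appears inside the $\log^{3/4}(n/\delta)$ exponent of $\aldge$. Using the Schmidt-rank bound at every cut I construct an MPS approximation for each basis vector of $W$ (the ambient AGSP has polynomial Schmidt rank at all cuts by item~\ref{item:large-sr} of Theorem~\ref{thm:Cheby-AGSP}, which suffices) and bound its bond dimension by $r\cdot\aldge$. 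Finally, for the entropy bound I apply the Schmidt-rank bound with $\delta=1/\poly(n)$ small enough that the truncation contributes at most $O(1/\poly(n))$ to the Schmidt spectrum, and invoke the standard Fannes-type continuity of entropy together with the fact that the $\delta$-approximation truncates onto $s(\delta)$ Schmidt vectors, so that $S(|\psi\rangle\langle\psi|)\le\log s(\delta)+O(\delta\log s(\delta))= \ln r+\tilde O(\gamma^{-1}\log^3 d)$.

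\textbf{Main obstacle.} The delicate point is the bootstrap arithmetic: the choice of the sampling dimension $s$ at each step, and especially at the \emph{final} step where $\delta$ must be driven down to the target value, must be tracked carefully to yield the stated $\aldgs$ exponent rather than a weaker bound. Balancing the two contributions to the exponent (the $\gamma^{-1}\log^3 d$ term from $D$ and the $\gamma^{-1/4}\log^{3/4}(1/\delta)\log d$ term from the final push down to viability $\delta$) requires running the amplification in two regimes --- a geometric-shrinkage regime that ignores $\delta$ and a final refinement regime that does not --- and carefully verifying that $D^{12}\Delta\le\ddelta$ in Theorem~\ref{t:ais} leaves enough slack for both.
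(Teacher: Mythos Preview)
Your overall scheme --- bootstrap to a constant-viability viable set using the fixed AGSP of Theorem~\ref{t:ais}, then drive the viability down to~$\delta$ with a second, $\delta$-tuned AGSP from Corollary~\ref{c:goodagsphardtrunc} --- is exactly the paper's approach. The paper packages the first phase as Proposition~\ref{prop:albootstrap-dg} (a minimal-counterexample version of your forward iteration), applies it separately on each side of the cut, \emph{tensors} the two resulting viable sets (so that every element already has Schmidt rank~$\le q$), and then applies a \emph{single} $\delta$-tuned AGSP once rather than iterating. Your ``two-regime'' picture in the obstacle section is correct; your initial description of a single bootstrap targeting~$\delta$ directly with the fixed-parameter AGSP of Theorem~\ref{t:ais} would only give a $\poly(1/\delta)$ factor, not the $e^{\tilde O(\gamma^{-1/4}\log^{3/4}(1/\delta)\log d)}$ term, so be sure the write-up actually invokes Corollary~\ref{c:goodagsphardtrunc} with $t=\Theta(\gamma^{-1/4}\log(1/\delta))$, $\ell=\Theta(t^{3/4})$ for the final push. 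For the MPS bound, the paper simply replaces $\delta$ by $\delta/n$ and adds errors across cuts; your union-bound variant is equivalent but unnecessary (the Schmidt-rank statement is existential, not probabilistic).

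There is one genuine gap: your entropy argument. Fannes--Audenaert continuity compares $S(\psi)$ to $S(\psi')$ with a correction of order $\|\rho-\rho'\|_1\cdot\log(\dim\mH)$, and here $\dim\mH=d^{\Theta(n)}$, so a single choice $\delta=\poly^{-1}(n)$ yields
\[
S(\ket{\psi}\bra{\psi})\;\le\;\log s(\delta)+O(\sqrt{\delta}\,n\log d)\;=\;\ln r+\tilde O\big(\gamma^{-1}\log^3 d\big)+\tilde O\big(\gamma^{-1/4}\log^{3/4}(n)\,\log d\big),
\]
which is \emph{not} $n$-independent. The claimed bound $S\le\log s(\delta)+O(\delta\log s(\delta))$ is not a valid form of Fannes: the tail of the Schmidt spectrum has mass~$O(\delta)$ but may be supported on up to~$d^n$ values, so its entropy contribution is not controlled by~$\log s(\delta)$. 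The paper instead uses the standard dyadic decomposition
\[
S(\ket{\psi}\bra{\psi})\;\le\;\ln\big(r\,\alone\big)+\sum_{i\ge 3}2^{-i}\log s\big(2^{-(i+1)}\big),
\]
whose convergence relies precisely on the sublinear growth $\log s(\delta)=\tilde O(\log^{3/4}(1/\delta))$ established in the first part; summing the series gives the $n$-independent $\ln r+\tilde O(\gamma^{-1}\log^3 d)$.
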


The proof of the theorem proceeds in two steps. First we use a ``bootstrapping argument'' to show the existence of a viable set of constant error for the ground space, such that all states in the viable set have low Schmidt rank. The existence of arbitrarily good approximations with increasing Schmidt rank, as well as the bound on the entanglement entropy, follow by the application of a suitable AGSP. We state the bootstrapping step as the following proposition. The proposition can be understood as an analysis of the effect of a single application of the \mpr\ procedure introduced in Section~\ref{sec:merge-process} with the initial tensoring step omitted. (The connection will be made more formal once we analyze algorithms in Section~\ref{sec:algorithms}.) 

\begin{proposition}\label{prop:albootstrap-dg} 
Let $H$ be a local Hamiltonian satisfying assumption (DG), and $J\subseteq \{1,\ldots,n\}$. Then there exists a subspace  $W\subseteq \mH_J$ of dimension $q = r\alone$ that is $.015$-viable for the ground space $T$ of $H$.
\end{proposition}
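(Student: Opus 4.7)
The plan is to execute the bootstrapping strategy outlined in Section~\ref{sec:viable-amplification}, instantiated with the AGSP supplied by Theorem~\ref{t:ais}. First, I decompose $\mH = \mH_L \otimes \mH_M \otimes \mH_R$ so that $\mH_M = \mH_J$; for the purpose of exposition I assume $J$ is contiguous. Theorem~\ref{t:ais} then supplies a subspace $\tilde{T}\subseteq\mH$ that is $.005$-close to $T$, together with $D^2$ operators $\{A_i\}$ on $\mH_M$ with $D=\alone$ and a $\Delta>0$ satisfying $D^{12}\Delta \leq \ddelta$, such that acting by $\{A_i\}$ on any $\delta$-viable set $S$ for $\tilde{T}$ produces a set $\Span\{\cup_i A_i S\}$ of dimension at most $D^2 \dim(S)$ that is $\Delta/(1-\delta)^2$-viable for $\tilde{T}$.

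Next I run an iterative viable-set-amplification loop, maintaining the invariant that the current set is $\delta_0$-viable for $\tilde{T}$ with $\delta_0 = .0025$. The initial set $W_0 = \mH_J$ is trivially $0$-viable. Given a current $\delta_0$-viable set $W_i$ of dimension $q_i$, I show that whenever $q_i$ is sufficiently large there exists a $\delta_0$-viable set $W_{i+1}$ of dimension at most $q_i/2$. To obtain $W_{i+1}$, I first invoke Lemma~\ref{l:rs} to choose a random subspace $S \subseteq W_i$ of dimension $s \approx q_i/(2D^2)$, which is $(1 - \mu)$-viable for $\tilde{T}$ with $\mu = \Omega(1/D^2)$. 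I then apply the AGSP operators $\{A_i\}$ to produce $W_{i+1}$ of dimension at most $D^2 s \leq q_i/2$ with viability parameter $\Delta/\mu^2 = O(D^4 \Delta)$. Crucially, the tradeoff $D^{12}\Delta \leq \ddelta$ from Theorem~\ref{t:ais} easily gives $O(D^4 \Delta) \leq \delta_0$, preserving the invariant. The probability bound in Lemma~\ref{l:rs} guarantees such an $S$ exists as long as $q_i = \Omega(r D^2 \log D)$, which fixes the termination threshold.

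Starting from $q_0 = d^{|J|}$ and halving at each step, the loop terminates once $q_i$ falls to $O(r D^2 \log D) = r\, \alone$. A final application of Lemma~\ref{l:close} (robustness) then upgrades $\delta_0$-viability for $\tilde{T}$ to $2(\delta_0 + .005) = .015$-viability for $T$, using the $.005$-closeness between $\tilde{T}$ and $T$ asserted by Theorem~\ref{t:ais}.

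The main technical issue is the parameter bookkeeping at each iteration. Three inequalities must hold simultaneously: (i) a dimension-reduction inequality, so that the viable set actually shrinks geometrically; (ii) a viability-preservation inequality keeping $\delta_0$ constant --- this is where the $D^{12}\Delta \leq \ddelta$ tradeoff of Theorem~\ref{t:ais} is essential; and (iii) a random-sampling success bound, which fixes the $O(r D^2 \log D)$ stopping dimension and hence the final bound once $D = \alone$ is substituted. Theorem~\ref{t:ais} is calibrated precisely so that all three can be met with absolute-constant viability targets.
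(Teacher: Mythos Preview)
Your proposal is correct and follows essentially the same approach as the paper: random sampling (Lemma~\ref{l:rs}) followed by AGSP error reduction (Theorem~\ref{t:ais}), with the $D^{12}\Delta\le\ddelta$ tradeoff ensuring viability is preserved and the sampling success condition fixing the stopping dimension at $r\alone$. The only cosmetic differences are that the paper phrases the argument as a single contradiction with a minimal-dimension $.015$-viable set (rather than your explicit halving iteration), and that it maintains viability for $T$ throughout and transfers to $\tilde T$ via Lemma~\ref{l:close} before each AGSP application, whereas you maintain viability for $\tilde T$ and transfer once to $T$ at the end.
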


\begin{proof}
Let  $W\subseteq \mH_J$ be a subspace of minimal dimension $q$ among all $.015$-viable subspaces for $T$. Let $\{A_i\}_{i=1}^{D^2}$ be AGSP operators guaranteed by Theorem~\ref{t:ais} for the Hamiltonian $H$ and region $M=J$, and $\tilde{T}$ the associated subspace. The first condition in the theorem together with Lemma~\ref{l:close} establishes that $W$ is $.04$-viable for $\tilde{T}$.
Let $s=q/(2D^2)$ and $W'\subseteq W$ a random subspace of dimension $s$. By Lemma~\ref{lem:viable-sample}, $W'$ is $(1-\delta')$-viable for $T$ with  $\delta'=s/(16q)=1/(32D^2)$ with positive probability provided 
\begin{equation}\label{eq:bootstrap-cond-dg}
s\,=\, \Omega\Big(\log q + r\log\Big(\frac{q}{s}\Big)\Big)
\end{equation}
for a large enough implied constant, as this will suffice to guarantee that $\eta$ stated in the lemma is strictly less than $1$. 

Let $S=  \cup_{i=1}^{D^2} A_i W'$. Then given our choice of $s$, $S$ has cardinality at most $q/2$, and by Lemma~\ref{lem:error-spectral} is $(32D^2)^2 \Delta$-viable for $\tilde{T}$. The condition $D^{12}\Delta < \ddelta$ implies $(32D^2)^2 \Delta + 0.005 \leq 0.1 \leq 0.15$, giving a contradiction with the minimality of $q$. The contradiction holds as long as the condition~\eqref{eq:bootstrap-cond-dg} holds, which given the bound on $D$ from Theorem~\ref{t:ais} will be the case as long as $q = r e^{\tilde{\Omega}(\gamma^{-1}\log^3 d)}$ for a large enough implied constant in the exponent.
 \end{proof}

Given the proposition, the proof of Theorem~\ref{thm:al-dg} follows by application of an AGSP derived from Corollary~\ref{c:goodagsphardtrunc}.

\begin{proof}[of Theorem~\ref{thm:al-dg}]
Fix a cut $\mH = \mH_L\otimes \mH_R$ as in the theorem. Let $V_L$ and $V_R$ be $0.015$-viable sets of minimal dimension for regions $J=L$ and $J=R$ respectively, and let $q$ be an upper bound on their dimension. Proposition~\ref{prop:albootstrap-dg} guarantees that we may take $q = r\alone$. By Lemma \ref{lem:merging} the set $W=V_L\otimes V_R$ is $.03$-viable for $T$.  The tensor product structure ensures that every element of $W$  has Schmidt rank no larger than $q$. Apply Corollary~\ref{c:goodagsphardtrunc} to $H$, with $\eta_0 = \eps_0+\spg/10$, $\eta_1 = \eps_0+9\spg/10$, $t = \Theta(\gamma^{-1/4}\log \delta^{-1})$ and $\ell= \Theta (t^{3/4})$. This gives a spectral AGSP $K$ with 
$$D=\altwo \quad\text{and}\quad \Delta \leq \delta/2,$$
for a Hamiltonian $\tilde{H}$ such that $\tilde{T}= \tilde{H}_{[\tilde{\eps}_0, \tilde{\eps}_0 + \frac{1}{10}\spg]}$ and $T$ are mutually  $(\delta/2)$-close.
Applying Lemma~\ref{lem:error-spectral} the set $W' = K W$ is $(\delta/2)$-viable for $\tilde{T}$ and every element within it has Schmidt rank no larger than $qD$.  Since $\tilde{T}$ and $T$ are $(\delta/2)$-close, $W'$ is $\delta$-viable for $T$. 

This proves the first statement in the theorem. The second follows by setting $\delta = \delta'/n$ in the above and noticing that the error made at each cut will add up linearly. The proof of the last statement is standard and follows from the bound on $s(\delta)$ as in~\cite{ref:AKLV2013-AL}:  we bound 
$$S\big(\ket{\psi}\bra{\psi}\big)\leq \ln \Big(r\alone\Big) + \sum_{i=3}^{\infty} 2^{-i} \log \big(s(2^{-(i+1)})\big),$$
which is dominated by the first term.   
\end{proof}

\subsection{Area law for low-density Hamiltonians} 
 \label{s:bootstrap-ld}

\begin{theorem}[Area law for low-density Hamiltonians]\label{thm:al-ld}
Let $H$ be a 1D local Hamiltonian  acting on $n$ qudits of local dimension $d$ such that $H$ satisfies Assumption (LD), $\mu< \eta \log n$ any positive constant and $T=H_{[\eps_0, \eps_0 + \eta - \mu/\log n]}$. For any fixed cut and any $\delta=\poly^{-1}(n)$, for every unit $\ket{\psi}\in T$ there is an approximation $\ket{\psi'}$ such that $|\braket {\psi}{\psi'}|>1- \delta$ and $\ket{\psi'}$ has Schmidt rank no larger than 
$$s(\delta)= r\, \allds$$
at that cut, and $\ket{\psi'}$ has an MPS representation with bond dimension bounded by 
$$r \,\allde.$$
Moreover, every state $\ket{\psi}\in T$ has entanglement entropy
\[ S(\ket{\psi}\bra{\psi}) \leq   \ln r + \tilde{O}\Big(\frac{\log n}{\mu}\log^3 d\Big) \]
\end{theorem}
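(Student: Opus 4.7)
The plan is to follow the two-step outline of the proof of Theorem~\ref{thm:al-dg}: first establish an analogue of Proposition~\ref{prop:albootstrap-dg} giving a constant-viable set of small dimension for $T$; second apply a high-quality AGSP from Corollary~\ref{c:goodagsphardtrunc} to refine the viability to $\delta$ while controlling Schmidt rank and bond dimension. Throughout, the role of the spectral gap $\gamma$ is played by the ``effective gap'' $\mu/\log n$ provided by assumption (LD), and the AGSP of Theorem~\ref{t:aisgapless} replaces that of Theorem~\ref{t:ais}.

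For the bootstrap I would prove that for any contiguous region $J$ there exists a subspace $W\subseteq\mH_J$ of dimension $q_0 = r\,\althree$ that is $0.015$-viable for $T$. The argument mirrors Proposition~\ref{prop:albootstrap-dg}: let $W$ be $0.015$-viable for $T$ of minimal dimension $q$, use Lemma~\ref{l:rs} to sample $W'\subseteq W$ of dimension $s=q/(2D^2)$ with overlap $1/(32D^2)$, apply the AGSP operators $\{A_i\}$ from Theorem~\ref{t:aisgapless} to obtain $S=\Span\{\cup_i A_i W'\}$ of dimension at most $q/2$, then use the closeness relations of Theorem~\ref{t:aisgapless} together with Lemma~\ref{l:close} to conclude that $S$ is $0.015$-viable for $T$, contradicting minimality once $q \gtrsim r\,\althree$. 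The main obstacle is that in the LD setting the AGSP only maps $\delta$-viability for $\tilde{T}$ to $\delta'$-viability for the strictly smaller $\tilde{T}_{-}$, so the ``sliding window'' inherent to the gapless setting must be absorbed into the bookkeeping. I would handle this by choosing $\eta_1=\eta$ in the invocation of Theorem~\ref{t:aisgapless} so that $\tilde{T}_{-}$ is $0.005$-close to the target $T$ and $\tilde{T}$ is $0.005$-close to the slightly larger $H_{[\eps_0,\eps_0+\eta]}$ (whose dimension is still at most $r$ under (LD)); with $\ell=\Theta((\log n/\mu)\log(\log n/\mu))$ and $t=\tilde{O}(\ell)$ one ensures $D^{12}\Delta<\ddelta$, leaving enough slack in the constant $0.015$ to absorb the closeness losses through iterations of Lemma~\ref{l:close}.

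The remainder of the argument is mechanical. Apply the bootstrap to the left and right halves of the chosen cut and tensor (Lemma~\ref{l:t}) to produce a $0.03$-viable set $W$ for $T$ of dimension at most $q_0^2$ whose vectors have Schmidt rank at most $q_0$ across the cut. Apply Corollary~\ref{c:goodagsphardtrunc} with $\eta_0=\eps_0+\eta-\mu/\log n$, $\eta_1=\eps_0+\eta$, $t=\Theta((\log n/\mu)^{1/4}\log\delta^{-1})$, and $\ell=\Theta(t^{3/4})$ so that $\Delta\le\delta/2$; Lemma~\ref{lem:error-spectral} then yields a $\delta$-viable set with Schmidt rank at most $q_0 D=r\,\allds$, where $D=\alsix$. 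Applying this high-quality AGSP simultaneously at every cut with per-cut target error $\delta/n$ (errors adding linearly, as in Theorem~\ref{thm:al-dg}) gives the bond-dimension bound $r\,\allde$ for the full MPS representation. Finally, the entanglement-entropy bound follows by the standard dyadic summation $S(\ket{\psi}\bra{\psi}) \le \ln(r\,\althree) + \sum_{i\ge 3} 2^{-i}\log s(2^{-(i+1)})$ as in~\cite{ref:AKLV2013-AL}, which is dominated by its leading term $\ln r + \tilde{O}((\log n/\mu)\log^3 d)$.
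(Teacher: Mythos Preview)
Your overall two-step plan matches the paper's, but the bootstrap step has a genuine gap that the paper's Proposition~\ref{prop:albootstrap-ld} is specifically designed to overcome. You correctly identify the obstacle: the AGSP of Theorem~\ref{t:aisgapless} takes a set that is $\delta$-viable for $\tilde{T}$ and returns one that is $\delta'$-viable for the strictly smaller $\tilde{T}_-$. Your proposed fix, however, does not close the loop. You start with $W$ minimal $0.015$-viable for $T=H_{[\eps_0,\eps_0+\eta-\mu/\log n]}$ and then invoke the AGSP with $\eta_1=\eta$, so that $\tilde{T}$ is $.005$-close to the \emph{larger} space $H_{[\eps_0,\eps_0+\eta]}$ and $\tilde{T}_-$ is $.005$-close to $T$. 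But viability for $T$ does not imply viability for the larger $H_{[\eps_0,\eps_0+\eta]}$, and hence not for $\tilde{T}$; so your sampled set $W'$ cannot be fed into the AGSP. Conversely, if you had started with $W$ minimal-viable for $H_{[\eps_0,\eps_0+\eta]}$, the output of the AGSP would only be viable for $T$, and you could not contradict minimality at the original scale. In either direction the single-scale minimality argument of Proposition~\ref{prop:albootstrap-dg} does not transfer.

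The paper resolves this by a two-scale argument: it introduces $q(\mu)$, the minimal dimension of a $0.015$-viable set for $T_\mu=H_{[\eps_0,\eps_0+\eta-\mu/\log n]}$, together with a target function $r(\mu)$, and lets $i_0$ be the least index with $q(\mu/2^{i_0})\le r(\mu/2^{i_0})$. At the critical scale $\mu_0=\mu/2^{i_0-1}$ one takes $W$ viable for $T_{\mu_0/2}$ (this \emph{is} the larger space playing the role of $\tilde{T}$), samples down to dimension $q(\mu_0)/D^2$, and applies the AGSP to land in a set of dimension at most $q(\mu_0)$ that is viable for $T_{\mu_0}$. The contradiction is then with the minimality of $i_0$, not with the minimality of a single $q$; the ratio $r(\mu_0/2)/r(\mu_0)$ enters the overlap bound~\eqref{eq:delta-bound-1} and is controlled by choosing the constant in $r(\mu)$ appropriately relative to $D$. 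This cross-scale bookkeeping is the missing idea in your sketch. Once Proposition~\ref{prop:albootstrap-ld} is in place, your second step is essentially correct, with the same caveat: to reach $\delta$-viability for $T=T_\mu$ you should first run the bootstrap at a slightly smaller parameter (e.g.\ $\mu/2$) so that the constant-viable set is for the larger $T_{\mu/2}$, and then the final AGSP with window $[\eps_0+\eta-\mu/\log n,\eps_0+\eta-\mu/(2\log n)]$ lands you in $T$.
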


As for Theorem~\ref{thm:al-dg}, the proof of Theorem~\ref{thm:al-ld} follows from a bootstrapping argument. We establish the analogue of Proposition~\ref{prop:albootstrap-dg} below. Just as for Theorem~\ref{thm:al-dg}, the theorem then follows by application of a suitable AGSP, and we omit that part of the proof. 

\begin{proposition}\label{prop:albootstrap-ld} 
Let $H$ be a local Hamiltonian satisfying assumption (LD), for some $\eta>0$. Let $J\subseteq \{1,\ldots,n\}$ and $\mu>0$. Then there exists a subspace  $W\subseteq \mH_J$ of dimension $q = r\althree$ that is $.015$-viable for the low-energy space $T_\mu = H_{[\eps_0,\eps_0+\eta-\mu/\log n]}$.
\end{proposition}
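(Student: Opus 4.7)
My plan is to reproduce the contradiction-based bootstrap from the proof of Proposition~\ref{prop:albootstrap-dg}, substituting Theorem~\ref{t:aisgapless} for Theorem~\ref{t:ais} and tracking the additional structure of the nested pair $\tilde{T}_-\subseteq\tilde{T}$ that arises in the gapless setting. First, I would assume for contradiction that every $0.015$-viable subspace for $T_\mu$ inside $\mH_J$ has dimension exceeding $q^\ast = r\,\althree$, let $W$ be one of minimal dimension $q>q^\ast$, and apply Theorem~\ref{t:aisgapless} with $\mH_M=\mH_J$ and $\eta_1=\eta$ to obtain AGSP operators $\{A_i\}_{i=1}^{D^2}$ with $D=\althree$, along with subspaces $\tilde{T}_-\subseteq\tilde{T}$ satisfying the two closeness conditions stated there. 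Lemma~\ref{l:close} then promotes the $0.015$-viability of $W$ for $T_\mu$ into $0.04$-viability for $\tilde{T}_-$.

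Next I would random-sample (Lemma~\ref{lem:viable-sample}) a subspace $W'\subseteq W$ of dimension $s=q/(2D^2)$, inheriting the degraded viability $(1-1/(32D^2))$ for $\tilde{T}_-$ with positive probability whenever the Johnson--Lindenstrauss net condition $s = \Omega(\log q + r\log(q/s))$ holds; after substituting $q/s=2D^2$ this becomes $q = \Omega(rD^2\log D) = r\,\althree$, which is precisely the quantitative floor that pins down the claimed bound. I would then set $S = \Span\{\cup_i A_i W'\}$, so that $\dim S \leq D^2 s = q/2$, and invoke the amplification clause of Theorem~\ref{t:aisgapless} to deduce $\delta'$-viability of $S$ for $\tilde{T}_-$ with $\delta' = (32D^2)^2\Delta$; since $D^{12}\Delta<10^{-5}$ this gives $\delta' < 0.01$, and transferring back through the closeness of $\tilde{T}_-$ and $T_\mu$ (Lemma~\ref{l:close}) delivers a $0.015$-viable subspace for $T_\mu$ of dimension at most $q/2<q$, contradicting the minimality of $q$ and establishing $q \leq r\,\althree$.

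The main obstacle is an asymmetry built into Theorem~\ref{t:aisgapless}: its amplification map formally requires the input subspace to be viable for the \emph{larger} space $\tilde{T}$, whereas closeness only hands us viability of $W$ for the smaller $\tilde{T}_-$. The resolution I have in mind is the ``effective spectral gap'' observation already invoked inside the proof of Theorem~\ref{t:aisgapless} itself: because $\tilde{T}_-$ is separated from $\tilde{T}^\perp$ by the effective gap $\mu/\log n$, the argument of Lemma~\ref{lem:error-spectral} can be rerun with $\tilde{T}_-$ treated as the operational target throughout, with any small extra viability loss absorbed into the $0.005$ slack already built into the closeness parameters. Every other parameter balance mirrors the DG calculation under the substitution $\gamma^{-1}\mapsto \log n/\mu$, which is what converts the DG bound $r\,\alone$ into the desired $r\,\althree$.
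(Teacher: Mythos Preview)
Your proposal correctly identifies the central asymmetry in Theorem~\ref{t:aisgapless} but the resolution you sketch does not work, and this is a genuine gap. If $S$ is viable only for $\tilde{T}_-$, then for $\ket{\psi}\in\tilde{T}_-$ the orthogonal component $\ket{\psi^\perp}$ guaranteed by Lemma~\ref{l:2} is orthogonal only to $\tilde{T}_-$, not to $\tilde{T}$. In particular $\ket{\psi^\perp}$ may lie partly in $\tilde{T}\cap\tilde{T}_-^{\perp}$, the ``mid-energy'' slab. The spectral AGSP does \emph{not} shrink vectors in this slab (it only guarantees shrinking above $\eta_1$), so the Lemma~\ref{lem:error-spectral} calculation cannot be rerun with $\tilde{T}_-$ as the input target. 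The ``effective spectral gap'' observation goes in exactly the opposite direction: it lets you \emph{start} with viability for the larger $\tilde{T}$ and \emph{conclude} viability for the smaller $\tilde{T}_-$, not vice versa.

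The paper's proof handles this by bootstrapping across a \emph{hierarchy} of target spaces rather than at a single fixed $\mu$. One defines $q(\mu)$ as the minimal dimension of a $.015$-viable set for $T_\mu$, and looks for the smallest dyadic level $\mu/2^{i_0}$ at which $q(\mu/2^{i_0})$ drops below the claimed bound $r(\mu/2^{i_0})$. If $i_0>0$, set $\mu_0 = \mu/2^{i_0-1}$ and start from a minimal viable set $W$ for $T_{\mu_0/2}$ (the \emph{larger} low-energy space). Invoking Theorem~\ref{t:aisgapless} with $\eta_1 = \eta - \mu_0/(2\log n)$ makes $\tilde{T}$ close to $T_{\mu_0/2}$ and $\tilde{T}_-$ close to $T_{\mu_0}$, so Lemma~\ref{l:close} gives viability of $W$ for $\tilde{T}$ (now in the correct direction), random sampling plus the AGSP produces a smaller set viable for $\tilde{T}_-$, and one transfers back to $T_{\mu_0}$. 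The contradiction is then with the minimality of $i_0$, not of $q$. The ratio $r(\mu_0/2)/r(\mu_0)$ entering the overlap estimate in~\eqref{eq:delta-bound-1} is what forces the precise constant in the exponent; your single-level argument has no analogue of this and cannot close.
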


\begin{proof}
For fixed $d$ and $n$, let $C=C(d,n)$ be a constant such that the bound  $D\leq e^{C\frac{\log n}{\mu}\log^c (\frac{\log n}{\mu})}$ holds in Theorem~\ref{t:aisgapless} for all $\mu>0$, where $c>0$ is a universal constant implied by the $\tilde{O}$ notation. 
For any $\mu>0$ let $q(\mu)$ be the smallest dimension of a subspace $W_\mu\subset \mH_J$ that is $.015$-viable for $T_\mu$. Note that $q(\mu)$ is a non-increasing function of $\mu$. For $\mu>0$, let $r(\mu) = r e^{C' \frac{\log n}{\mu}\log(\log n/\mu)}$, where $C'=3C$. For any $\mu$, let $i_0$ be the smallest power of two such that $q(\mu/2^{i_0}) \leq r(\mu/2^{i_0})$. Note that $i_0$ is finite as $q(\mu) \leq d^n$ for all $\mu>0$. If $i_0 = 0$ then the proposition is proven. Suppose $i_0>0$, and let $\mu_0 = \mu/2^{i_0-1}$. Let $W=W_{\mu_0/2}$ be a subspace of dimension $q=q(\mu_0/2)$ that is $0.15$-viable for $T_{\mu_0/2}$. Let $\{A_i\}_{i=1}^{D^2}$ be AGSP operators guaranteed by Theorem~\ref{t:aisgapless} for the Hamiltonian $H$, region $M=J$, and parameters $\eta_1 = \eta - \mu_0 /(2\log n)$ and $\mu = \mu_0/2$. Let $\tilde{T}$ and $\tilde{T}_-$ be the resulting subspaces. The first condition in the theorem, together with Lemma~\ref{l:close}, establishes that $W$ is $.04$-viable for $\tilde{T}$.

Let $s= q(\mu_0)/D^2$ and $W'\subseteq W$ a random subspace of dimension $s$. By Lemma~\ref{lem:viable-sample} and the definition of $i_0$, $W'$ is $(1-\delta')$-viable for $\tilde{T}$ with  
\begin{equation}\label{eq:delta-bound-1}
\delta'=\frac{s}{16q}=\frac{q(\mu_0)}{16 q(\mu_0/2) D^2} \geq\frac{r(\mu_0)}{16 r(\mu_0/2) D^2} 
\end{equation}
with positive probability provided 
\begin{equation}\label{eq:bootstrap-cond-ld}
s\,=\, \Omega\Big(\log q + r\log\Big(\frac{q}{s}\Big)\Big)
\end{equation}
for a large enough implied constant, as this will suffice to guarantee that $\eta$ stated in the lemma is strictly less than $1$. 

Let $S=  \cup_{i=1}^{D^2} A_i W'$. Then $S$ has cardinality at most $q(\mu_0)$, and by Lemma \ref{lem:error-spectral} is $ \Delta/(\delta')^2$-viable for $\tilde{T}_{-}$, itself $0.005$-close to $W_{\mu_0}$. The condition $D^{12}\Delta < \ddelta$, together with~\eqref{eq:delta-bound-1} implies 
$$\frac{\Delta}{(\delta')^2} < 16^2 \cdot \ddelta D^{-8} \Big(\frac{r(\mu_0/2)}{r(\mu_0)}\Big)^2  \leq 10^{-2} e^{(2C'-8C) \frac{\log n}{\mu_0}\log^c\big(\frac{\log n}{\mu_0}\big)}.$$
Provided $C'\leq 4C$ this is at most $10^{-2}$, leading to a contradiction with the minimality of $i_0$. The contradiction holds as long as the condition~\eqref{eq:bootstrap-cond-ld} holds, which will be the case provided $C'>2C$. Choosing $C'=3C$ satisfies both conditions. 
 \end{proof}

\section{Efficient AGSP constructions}
\label{sec:efficient}


This section is devoted to the construction of efficiently computable, and efficiently implementable (as polynomial-size matrix product operators (MPO)), analogues of the existential AGSP constructions obtained in Section~\ref{sec:arealaws}. The first step in obtaining efficient constructions consists in replacing the method of hard truncation considered in Section~\ref{sec:hard-truncation} with a method of ``soft truncation''. This method, described in Section~\ref{sec:soft-truncation}, is somewhat less effective than hard truncation, but has the advantage that it can be made efficient; this is essential for its use in the algorithms presented in Section~\ref{sec:algorithms}. In Section~\ref{sec:cheby-efficient} we show that the Chebyshev polynomial AGSP introduced in Section~\ref{sec:Cheby} can also be made efficient. Our efficient  AGSP constructions for the (DG) and (LD) cases are provided in Section~\ref{sec:agsp-const-eff}. We conclude in Section~\ref{sec:FF-AGSP} with a more efficient construction specialized to the (FF) case; this last construction replaces the intricate AGSP constructions with a much simpler one based on the detectability lemma~\cite{ref:AharoAVZ2011-DL}. (The reader new to AGSP constructions may wish to start with the latter section.)

\subsection{Soft truncation}
\label{sec:soft-truncation}

We introduce a scheme of \emph{soft truncation} that reduces the norm of a local Hamiltonian $H$ in a certain region $J$ in a way that the truncated operator can be well-approximated by an MPO with small bond dimension. 
In hard truncation (Definition~\ref{def:hard-trunc}) the operator $\Pi_{\le \eps_0+t} H+
(\eps_0+t)\Pi_{> \eps_0+t})$ is used. This can be written
as $g_t(H)$, where $g_t(x)$ is defined by $g_t(x) \EqDef x$ for
$x\le \eps_0+t$ and $g_t(x)\EqDef t$ for $x>\eps_0+t$. The main idea of soft
truncation is to replace this non-smooth function by the infinitely
differentiable function
\begin{align}
  \label{def:ft}
 h_{t',t}(x) \EqDef t\Big(f_t(x)+ \frac{f_t(x)^2}{2} + \cdots + \frac{f_t(x)^{t'}}{t'}\Big),\quad \text{where}\quad f_t(x) \EqDef 1-e^{-x/t} \,,
\end{align}
which results in an operator $h_{t',t}(H)$ that closely approximates the hard-truncated Hamiltonian. Moreover, $h_{t',t}(H)$ can be given an efficient representation as an MPO by leveraging the truncated cluster expansion~\cite{hastings2006solving,KlieschGKRE14cluster} and its matrix product operator (MPO) representation from~\cite[Section~IV]{MolnarSVC15gibbsmpo}.

The following are basic properties of $h_{t',t}$.

\begin{lemma}
\label{lem:ft} 
  For any integers $t',t\geq 1$ and $x\ge 0$,
  \begin{equation*}
  \big| h_{t',t}(x)-x\big|\leq \frac{t}{t'}\Big(\frac{x}{t}\Big)^{t'},\qquad\text{and}\qquad \big|h_{t',t}(x)\big| \leq t \ln(t').
  \end{equation*}
\end{lemma}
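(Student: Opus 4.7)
The plan is to recognize that the polynomial defining $h_{t',t}$ is the truncated Mercator series for $-\ln(1-y)$ evaluated at $y = f_t(x) = 1 - e^{-x/t}$. Under this substitution $-\ln(1-y) = x/t$, so we have the formal identity $x = t\sum_{k=1}^{\infty} f_t(x)^k/k$, of which $h_{t',t}(x)$ is precisely the partial sum to degree $t'$. The approximation error $|h_{t',t}(x) - x|$ is therefore controlled by the tail of this series, which at leading order scales like $f_t(x)^{t'}$.

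The first step is to compute $\frac{d}{dx}h_{t',t}(x)$ in closed form by differentiating term by term. Summing the geometric series $\sum_{k=1}^{t'} f_t(x)^{k-1} = (1-f_t(x)^{t'})/(1-f_t(x))$ and invoking the identities $1 - f_t(x) = e^{-x/t}$ and $f_t'(x) = e^{-x/t}/t$, one obtains the clean formula
\[
  \frac{d}{dx} h_{t',t}(x) \;=\; 1 - f_t(x)^{t'}.
\]
Since $h_{t',t}(0)=0$, integrating yields
\[
  x - h_{t',t}(x) \;=\; \int_0^x f_t(s)^{t'}\,ds \;\geq\; 0,
\]
so $|h_{t',t}(x) - x|$ is exactly this integral.

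The second step bounds the integral via the elementary inequality $f_t(s) = 1 - e^{-s/t} \le \min(s/t, 1)$, splitting into two cases. When $x \le t$, on $[0,x]$ we have $f_t(s) \le s/t$, so $\int_0^x f_t(s)^{t'}\,ds \le \frac{x}{t'+1}(x/t)^{t'} \le \frac{t}{t'}(x/t)^{t'}$, using $x \le t$ and $t'+1 \ge t'$. When $x > t$, splitting the integral at $s=t$ gives the crude upper bound $\frac{t}{t'+1} + (x-t)$, and the target inequality reduces to $u^{t'}/t' \ge (u-1) + 1/(t'+1)$ for $u = x/t \ge 1$. This in turn follows from the facts that (i) at $u=1$ the two sides differ by the non-negative quantity $1/(t'(t'+1))$ and (ii) the derivative $u^{t'-1}$ of the left-hand side dominates the derivative $1$ of the right-hand side throughout $[1,\infty)$. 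For the uniform bound, since $0 \le f_t(x) < 1$ whenever $x \ge 0$,
\[
  |h_{t',t}(x)| \;\le\; t\sum_{k=1}^{t'} \frac{1}{k} \;=\; t\, H_{t'} \;\le\; t(1 + \ln t'),
\]
which gives the stated inequality up to a harmless additive constant inside the logarithm. The only mildly non-routine part is the $x > t$ regime, but it reduces to a one-variable calculus exercise; everything else follows directly from the closed-form derivative $h_{t',t}' = 1 - f_t^{t'}$.
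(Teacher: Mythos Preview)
Your proof is correct and rests on the same identification the paper makes: $h_{t',t}$ is the degree-$t'$ truncation of the Mercator series for $g_t(y)=-t\ln(1-y)$ evaluated at $y=f_t(x)$, with $g_t(f_t(x))=x$. The paper simply cites ``Taylor's theorem and $f_t(x)\le x$'' for the first bound; your route via the closed-form derivative $h_{t',t}'(x)=1-f_t(x)^{t'}$ and the integral $x-h_{t',t}(x)=\int_0^x f_t(s)^{t'}\,ds$ is a cleaner and more explicit way to reach the same estimate, and your case split $x\le t$ versus $x>t$ is what is actually needed to get the stated power $(x/t)^{t'}$ rather than $(x/t)^{t'+1}$. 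Your remark on the second inequality is also well taken: $\sum_{k=1}^{t'}1/k\le 1+\ln t'$, so the bound as literally written in the lemma is off by a harmless additive $t$ (and indeed fails at $t'=1$); the paper's one-line justification ``since $f_t(x)\le 1$'' leads to exactly the harmonic-sum bound you wrote.
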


\begin{proof}
Let $g_t(y) = -t\ln(1-y)$, so that $g_t(f_t(x)) = x$ for any $x\in [0,\infty)$. The function $h_{t',t}$ contains the first $t'$ terms of the Taylor expansion of $g_t$ around $0$, applied to $f_t(x)$, and the first inequality follows from Taylor's theorem and $f_t(x)\leq x$ for all $x$. The second inequality follows since $f_t(x)\leq 1$ for all $x$. 
\end{proof}

Recall our convention that a 1D local Hamiltonian acting on $n$ qudits numbered $1,\ldots,n$ decomposes as $H= \sum_{i=1}^{n-1} h_i$, where $0\leq h_i \leq \Id$ is the local term acting on qudits $\{i,i+1\}$. In addition to the truncation parameters $t$ and $t'$ the soft truncation construction is parametrized by a region $J\subseteq\{1,\ldots,n\}$ which specifies the set of local terms on which truncation is to be performed, and an energy $\eps'_J$ which is meant to be an approximation to the ground state energy of the restriction $H_J$ of $H$ to $J$. 

\begin{definition}[Soft truncation]
\label{def:soft-trunc} Let $H = H_J + H_{\overline{J}} $ be a 1D Hamiltonian, where $H_J = h_{j_0}+\cdots+h_{j_1-1}$ acts on a contiguous set $J=\{j_0,\ldots,j_1\}$ of qudits. Let $\eps_J$ be the ground energy of $H_J$, and $\eps'_J$ an approximation to $\eps_J$ satisfying $\eps_J-10\le \eps'_J\le \eps_J$. For given truncation parameters $t\geq t' \geq 1$, the \emph{soft truncation} of $H_J$
  is given by
  \begin{align*}
    \tilde{H}_J &\EqDef \eps'_J\Id + h_{t',t}(H_J - \eps'_J\Id),
  \end{align*}
  and the \emph{soft-truncated Hamiltonian} $H$ associated to region $J$ is
  \begin{align*}
    \tilde{H}_{t',t} \EqDef \tilde{H}_J+H_{\overline{J}}.
  \end{align*}
\end{definition}

The following lemma shows that for sufficiently large $t$ and $t'$, 
$\tilde{H}_{t',t}$ provides a good approximation to the lower part of the spectrum
of $H$.

\begin{lemma}
\label{lem:exp-approx}
Let $H = H_J + H_{\overline{J}}$ be a local 1D Hamiltonian. 
  Given truncation parameters $t\geq t' \geq 2$, the
  soft-truncated Hamiltonian $\tilde{H}_{t',t}$ satisfies $\tilde{H}_{t',t} \leq H$ and  for any eigenvector $\ket{\psi}$ of $H$ with energy $\lambda$ (resp. $\ket{\phi}$ of $\tilde{H}$ with energy $\mu\leq t$) it holds that 
	\begin{equation}\label{eq:exp-approx-0}
	\lambda- O\Big(\frac{(\lambda - \eps)^{t'}}{t't^{t'-1}}\Big) \leq \bra{\psi}\tilde{H}_{t',t}\ket{\psi} \leq \lambda \quad \text{and} \quad \mu \leq \bra{\phi}{H}\ket{\phi} \leq \mu + O\Big(\frac{(2(\mu - \eps))^{t'}}{t't^{t'-1}}\Big) ,
	\end{equation}
	where $\eps = \eps_{\overline{J}}+ \eps'_J$. In addition, if $H$ is gapped with gap $\spg$ then provided $t=\Omega(\spg^{-1})$, $\tilde{H}_{t',t}$ is gapped with gap $\spg/2\leq \tilde{\spg}\leq 2\spg$. 
	
	For $\eta>0$ let $T_\eta = H_{[\eps_0,\eps_0+\eta]}$ (resp. $\tilde{T}_\eta = \tilde{H}_{[\tilde{\eps}_0,\tilde{\eps}_0+\eta]}$) be the span of all eigenvectors of $H$ (resp. $\tilde{H}_{t',t}$) with associated eigenvalue in the indicated range. Then for any $\eta,\delta>0$ there is 
	$$\eta' \,=\, \eta+O\Big( \Big(\frac{\eta+10}{t}\Big)^{t'-1}\frac{1}{t'\sqrt{\delta}}\Big)$$
	such that the subspace  $\tilde{T}_{\eta'}$ is $\delta$-close to $T_\eta$ and $T_{\eta'}$ is $\delta$-close to $\tilde{T}_\eta$.

	\end{lemma}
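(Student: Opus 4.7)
The approach is to rewrite everything in terms of the commuting decomposition $H = \eps \Id + A + B$, with $A = H_J - \eps_J' \Id \geq 0$, $B = H_{\overline{J}} - \eps_{\overline{J}} \Id \geq 0$, and $\eps = \eps_J' + \eps_{\overline{J}}$, so that $\tilde{H}_{t',t} = \eps \Id + h_{t',t}(A) + B$. The functions $h_{t',t}$ and $x \mapsto x$ satisfy $0 \leq x - h_{t',t}(x) \leq \frac{x^{t'}}{t' t^{t'-1}}$ for $x \geq 0$ by Lemma~\ref{lem:ft} (since $h_{t',t}$ is a truncated Taylor expansion, with nonnegative coefficients, of $-t\ln(1-f_t(x)) = x$); by the functional calculus this yields the operator inequality $0 \leq A - h_{t',t}(A) \leq \frac{A^{t'}}{t' t^{t'-1}}$, which in particular gives $\tilde{H}_{t',t} \leq H$.

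For the first claim of~\eqref{eq:exp-approx-0}, take an eigenvector $\ket{\psi}$ of $H$ with eigenvalue $\lambda$: we have $\bra{\psi} \tilde{H}_{t',t} \ket{\psi} \leq \lambda$ and
\begin{equation*}
\lambda - \bra{\psi} \tilde{H}_{t',t} \ket{\psi} \,=\, \bra{\psi}(A - h_{t',t}(A)) \ket{\psi} \,\leq\, \frac{\bra{\psi} A^{t'} \ket{\psi}}{t' t^{t'-1}}.
\end{equation*}
Since $A, B \geq 0$ commute, $A^{t'} \leq (A+B)^{t'} = (H - \eps\Id)^{t'}$ as operators, and applied to $\ket{\psi}$ this evaluates to $(\lambda-\eps)^{t'}$. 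For the second claim, an eigenvector $\ket{\phi}$ of $\tilde{H}_{t',t}$ with eigenvalue $\mu$ decomposes in the common eigenbasis of $A$ and $B$ with support restricted to pairs $(a_i, b_j)$ satisfying $h_{t',t}(a_i) + b_j = \mu - \eps$; since $b_j \geq 0$, every $a_i$ in the support has $h_{t',t}(a_i) \leq \mu - \eps \leq t$. The lower bound $h_{t',t}(a) \geq a(1 - \tfrac{1}{t'}(a/t)^{t'-1})$ from Lemma~\ref{lem:ft} gives $h_{t',t}(a) \geq a/2$ when $a \leq t$ and $t' \geq 2$; combined with the (manifest) monotonicity of $h_{t',t}$ on $[0,\infty)$ this forces $a_i \leq 2(\mu - \eps)$ throughout the support of $\ket{\phi}$, whence $\bra{\phi} A^{t'} \ket{\phi} \leq (2(\mu-\eps))^{t'}$, giving the stated upper bound.

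The gap preservation is then immediate from the two estimates applied to the lowest two eigenvectors: for $\lambda \leq \eps_0 + O(\spg)$ with $t = \Omega(\spg^{-1})$ and $t' \geq 2$, the distortion $(\lambda-\eps)^{t'}/(t' t^{t'-1}) = o(\spg)$, so the order of low eigenvalues is preserved and $\tilde{\spg} \in [\spg/2, 2\spg]$. For the closeness of $T_\eta$ and $\tilde{T}_{\eta'}$ I would adapt the scheme of Lemma~\ref{lem:hard-trunc}: decompose $\ket{\psi} \in T_\eta$ into components supported on spectral windows of width $2h$ around eigenvalues $\lambda \leq \eps_0 + \eta$; the operator bound together with the commutativity argument yields $\|(H - \tilde{H}_{t',t}) \ket{\psi_{\text{window}}}\| = O((\eta/t)^{t'-1}/t')$; Markov's inequality then controls $\|\tilde{\Pi}_{>\eps_0 + \eta'} \ket{\psi}\|$, and summing over at most $\eta/(2h)$ windows and optimizing in $h$ yields the claimed $\eta'$, with the reverse direction obtained by swapping the roles of $H$ and $\tilde{H}_{t',t}$. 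The main obstacle is the bound $\bra{\phi} A^{t'} \ket{\phi} \leq (2(\mu-\eps))^{t'}$ for eigenvectors of $\tilde{H}_{t',t}$, where the operator-monotonicity trick used for eigenvectors of $H$ is unavailable; here the hypothesis $\mu \leq t$ is precisely what lets one invert the constraint on $h_{t',t}(a_i)$ to a pointwise bound on $a_i$ via the almost-linear regime of $h_{t',t}$, and this is also the source of the factor $2$ in the statement.
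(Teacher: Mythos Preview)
Your argument has a genuine gap at its very first step: $A=H_J-\eps'_J\Id$ and $B=H_{\overline J}-\eps_{\overline J}\Id$ do \emph{not} commute. The terms $h_{j_0-1}$ and $h_{j_1}$ belong to $H_{\overline J}$ but act on qudits $j_0$ and $j_1$, which lie in $J$; hence $[H_J,H_{\overline J}]\neq 0$ in general. You invoke this commutativity in three places, and each of them breaks: the operator inequality $A^{t'}\leq (A+B)^{t'}$ requires either commutativity or operator monotonicity of $x\mapsto x^{t'}$, and neither is available for $t'\geq 2$; the claim that an eigenvector $\ket\phi$ of $\tilde H_{t',t}$ ``decomposes in the common eigenbasis of $A$ and $B$'' is simply false, so your pointwise bound $a_i\leq 2(\mu-\eps)$ on the support has no meaning; and the closeness argument again appeals to ``the commutativity argument.''

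The paper's proof confronts exactly this issue. It introduces $\bar H_{\overline J}\EqDef H_{\overline J}-h_{j_0-1}-h_{j_1}$, which \emph{does} commute with $H_J$ since it acts only on qudits outside $J$. One then bounds
\[
(H_J-\eps'_J\Id)^{2t'}\ \leq\ \big(H_J-\eps'_J\Id+\bar H_{\overline J}+(2-\eps_{\overline J})\Id\big)^{2t'}\ \leq\ \big((H-\eps)\Id+10\,\Id\big)^{2t'},
\]
using that $\bar H_{\overline J}+(2-\eps_{\overline J})\Id\geq 0$ (the ``$+2$'' compensates for the two removed boundary terms of norm $\leq 1$) and that commuting positive operators allow the inequality to be raised to a power. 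This is precisely the origin of the ``$\eta+10$'' in the stated $\eta'$, and it is the missing idea in your write-up. Once you replace $B$ by $\bar B=\bar H_{\overline J}-\eps_{\overline J}\Id+2\Id$, your overall strategy (functional calculus on $A$, then domination by a commuting sum, then Markov on spectral windows) matches the paper's; but without the boundary-term removal the argument does not go through.
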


\begin{proof}
 From Definition~\ref{def:soft-trunc},
  \begin{align}
    \tilde{H}_{t',t}-H = h_{t',t}(H_J-\eps'_J\Id)-(H_J-\eps'_J\Id)\,.\label{eq:exp-approx-1b}
  \end{align}
  Using the first bound from Lemma~\ref{lem:ft}, we get
  that for any vector $\ket{\psi}$,
  \begin{equation}\label{eq:exp-approx-1}
    \big|\bra{\psi}\tilde{H}_t\ket{\psi}-\bra{\psi}H\ket{\psi}\big|
      \le \frac{1}{t't^{t'-1}}\bra{\psi}(H_J-\eps_J'\Id)^{t'} \ket{\psi} \,.
  \end{equation}
Furthermore,
  \begin{align*}
    H_J-\eps_J'\Id       &\le H_{\overline{J}}-\eps_{\overline{J}}\Id + H_J-\eps'_J\Id \\
      &= H - (\eps_{\overline{J}} + \eps'_J)\Id, 
			\end{align*}
which combined with~\eqref{eq:exp-approx-1} and $H_J-\eps'_J\Id\geq 0$ proves the first two inequalities in~\eqref{eq:exp-approx-0}; the other two are obtained in the same way using in addition $x\leq 2h_{t',t}(x)$ for $0\leq x \leq t$. The relations between the spectral gaps of $H$ and $\tilde{H}_{t',t}$ follow from these inequalities.  

Starting from~\eqref{eq:exp-approx-1b}, squaring both sides and using (the square of) the first bound from Lemma~\ref{lem:ft} we get the operator inequality 
\begin{equation}\label{eq:exp-approx-3a}
(\tilde{H}_{t',t}-H)^2 \leq \frac{1}{(t')^2 t^{2t'-2}} (H_J-\eps'_J\Id)^{2t'}.
\end{equation}
Let $\bar{H}_{\overline{J}} = H_{\overline{J}} - h_{j_0-1} - h_{j_1}$, so that $\bar{H}_{\overline{J}}$ and $H_J$ commute. Using $\bar{H}_{\overline{J}}+(2-\eps_{\overline{J}})\Id \geq 0$, 
\begin{align}
(H_J-\eps'_J\Id)^{2t'} &\leq (H_J-\eps'_J\Id +\bar{H}_{\overline{J}}+(2-\eps_{\overline{J}})\Id  )^{2t'}\notag\\
&\leq ((H -\eps)\Id + 10\Id)^{2t'}.\label{eq:exp-approx-3b}
\end{align}
Let $\ket{\psi}$ be supported on eigenvectors of $H$ with eigenvalues in the range $[\lambda-h,\lambda+h]$ with $\lambda\leq \eps_0+\eta$ and $h$ a small width parameter. Decompose $\ket{\psi} = \sum_i \alpha_i \ket{\phi_i}$, where for each $i$, $\ket{\phi_i}$ is an eigenvector of $\tilde{H}_{t',t}$ with associated eigenvalue $\tilde{\lambda}_i$. Thus
\begin{align*}
\Big(\sum_i |\alpha_i|^2 |\lambda-\tilde{\lambda}_i|^2\Big)^{1/2} &\leq \Big\| \sum_i \alpha_i (\lambda-\tilde{\lambda}_i) \ket{\phi_i}\Big\|+h\\
 &= \big\| (\tilde{H}_{t',t}-H) \ket{\psi}\big\|+h\\
&= \bra{\psi}(\tilde{H}_{t',t}-H)^2 \ket{\psi}^{1/2}+h\\
&\leq \frac{1}{t' t^{t'-1}} \bra{\psi}((H -\eps)\Id + 10\Id)^{2t'}\ket{\psi}^{1/2}+h\\
&\leq \frac{1}{t' t^{t'-1}}(\eta+10)^{t'}+h,
\end{align*}
where the inequality before last follows by combining~\eqref{eq:exp-approx-3a} and~\eqref{eq:exp-approx-3b}. Applying Markov's inequality it follows that for any $\delta>0$
	$$\big\|\tilde{\Pi}_{> \lambda+\delta}  \ket{\psi}\big\| \leq \frac{ \frac{1}{t' t^{t'-1}}(\eta+10)^{t'} + h}{\delta}.$$
 Any $\ket{\psi}$ in $T_\eta$ can be written as a linear combination $\ket{\psi} = \sum_j \beta_j \ket{h_j}$ with each $\ket{h_j}$ supported on eigenvectors of $H$ with eigenvalue in a small window of width $2h$, and the number of terms is at most $\lceil \frac{\eta-\eps_0}{2h}\rceil$. Thus
\begin{align*}
\big\|\tilde{\Pi}_{> {\eps}_0+ \eta'}  \ket{\psi}\big\| &\leq \sum_j |\beta_j|\big\|\tilde{\Pi}_{> {\eps}_0+ \eta'}  \ket{h_j}\big\| \\
&\leq \sqrt{\frac{\eta}{2h}}\frac{ \frac{1}{t' t^{t'-1}}(\eta+10)^{t'} + h}{\eta'-\eta} .
\end{align*}
Chosing $h  = \frac{1}{t' t^{t'-1}}(\eta+10)^{t'}$, we see that the choice of $\eta'$ made in the statement of the lemma suffices to ensure that this quantity is at most $\delta$, as desired.

\end{proof}

We end this section by showing that the soft-truncated Hamiltonian
$\tilde{H}_{t',t}$ can be approximated by an operator with polynomial bond dimension which can be computed efficiently. Our construction is based on the cluster expansion
from~\cite{hastings2006solving,KlieschGKRE14cluster} in the 1D case, with some small
adjustments. We first state the result. 

\begin{lemma}\label{lem:exp-approx-eff}
	Let $t$ and $t'< (\ln(2)/2) t $ be truncation parameters and $H$ a $n$-qudit local Hamiltonian. For any $\xi>0 $ there is an MPO representation $\tilde{H}'$ for the truncated Hamiltonian $\tilde{H}=\tilde{H}_{t,t'}$ such that $\|\tilde{H}- \tilde{H}'\|\leq \xi$  and $\tilde{H}'$ has bond dimension $\poly(t'2^{t'} n/\xi)$ across all bonds. Such an MPO can be constructed in time polynomial in its size. 
\end{lemma}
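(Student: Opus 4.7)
The plan is to reduce the approximation of $\tilde{H}_J = \eps'_J\Id + h_{t',t}(H_J-\eps'_J\Id)$ to the approximation of a small collection of matrix exponentials of $H_J-\eps'_J\Id$, each of which is then handled via the truncated cluster expansion of~\cite{hastings2006solving,KlieschGKRE14cluster} combined with its MPO encoding from~\cite[Section~IV]{MolnarSVC15gibbsmpo}. First I would expand $f_t(x)^k = (1-e^{-x/t})^k$ by the binomial theorem to obtain
\[
\tilde{H}_J \;=\; \eps'_J\Id \;+\; t\sum_{k=1}^{t'}\frac{1}{k}\sum_{j=0}^{k}(-1)^j \binom{k}{j}\, \exp\!\Big(-\tfrac{j}{t}(H_J-\eps'_J\Id)\Big),
\]
so it suffices to produce an MPO approximation of $U_j \EqDef \exp(-(j/t)(H_J-\eps'_J\Id))$ for each $j\in\{0,1,\ldots,t'\}$, in operator norm, with individual error at most $\xi' = \xi/(t\cdot 2^{t'+1})$; this is because the $\ell^1$ sum of the scalar coefficients appearing above is bounded by $t\sum_{k=1}^{t'}2^k/k = O(t\cdot 2^{t'}/t')$, so the triangle inequality will accumulate the error to at most $\xi$.

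For each $j$, the hypothesis $t'<(\ln 2/2)t$ forces $\beta_j \EqDef j/t < \ln(2)/2$, which places us in the absolute-convergence regime of the cluster expansion for 1D local Hamiltonians. The expansion writes $U_j = \sum_C W(C;\beta_j)$ as a sum over connected clusters $C$ of interaction terms $h_i$, $i\in J$; in 1D every cluster is supported on a contiguous window, hence contributes an operator of small Schmidt rank at every cut outside that window. Truncating to clusters of total size at most $m$ yields an approximation $\tilde{U}_j^{(m)}$ with $\|U_j-\tilde{U}_j^{(m)}\|\le 2^{-\Omega(m)}$ by standard tail bounds (cf.~\cite{KlieschGKRE14cluster}), so choosing $m = \Theta(\log(t\cdot 2^{t'}/\xi))$ brings the error below $\xi'$. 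The construction of~\cite{MolnarSVC15gibbsmpo} then converts the truncated cluster sum into an MPO $U_j'$ of bond dimension $\poly(n,m)$, computable in time polynomial in its size.

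The final MPO $\tilde{H}'$ is assembled by taking the appropriate signed linear combination of the $U_j'$ (implemented by a direct sum of the virtual indices, multiplying bond dimensions by at most $t'+1$) and then adding the unmodified standard MPO for $H_{\overline{J}}$ (of constant bond dimension). The triangle inequality, together with the choice of $\xi'$ above, bounds $\|\tilde{H}-\tilde{H}'\|\le \xi$, while the bond dimension is $O(t'\cdot \poly(n,m)) = \poly(t'\cdot 2^{t'}\cdot n/\xi)$, as claimed. The main technical obstacle is twofold and both parts are inherited from the cited machinery rather than genuinely new: first, one must verify the convergence bound for the 1D cluster expansion up to the critical inverse-temperature $\ln(2)/2$, which is exactly what fixes the constant in the hypothesis $t'<(\ln 2 /2)t$; second, one must carefully trace the bond-dimension accounting of~\cite{MolnarSVC15gibbsmpo} to confirm the $\poly(n,m)$ scaling in our 1D setting, after which the remaining arithmetic (binomial expansion, scalar multiplication, direct sum, addition) is standard MPO manipulation.
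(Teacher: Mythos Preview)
Your approach is essentially identical to the paper's: expand $h_{t',t}$ via the binomial theorem into a linear combination of $O(t'2^{t'})$ exponentials $e^{-\beta(H_J-\eps'_J\Id)}$ with $\beta\in\{0,1/t,\ldots,t'/t\}$, invoke the truncated cluster expansion (Theorem~\ref{thm:cluster-approx}) for each, and encode each truncated expansion as an MPO via Theorem~\ref{thm:cluster}.

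Two small technical corrections are in order. First, the cluster-expansion error bound in Theorem~\ref{thm:cluster-approx} is $e^{n^2(e^\beta-1)^r}-1\approx n^2(e^\beta-1)^r$, so the truncation level must absorb an $n^2$ factor: you need $m=\Theta(\log(n\cdot t\cdot 2^{t'}/\xi))$, not $\Theta(\log(t\cdot 2^{t'}/\xi))$. Second, the MPO bond dimension from Theorem~\ref{thm:cluster} is $m^2 d^m$, which is exponential in $m$ rather than $\poly(n,m)$; the final polynomial bound only emerges because $m$ is logarithmic in $t'2^{t'}n/\xi$, giving $d^m=\poly(t'2^{t'}n/\xi)$ for constant $d$. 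With these two adjustments your argument goes through unchanged.
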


\begin{proof}
The truncation $h_{t',t}(H)$ can be expressed as a linear combination of $O(t'2^{t'})$ terms of the form $e^{-\beta H}$ for values of $\beta$ in $\{1/t,\ldots,t'/t\}$; moreover the coefficients of the linear combination are at most $O(t'2^{t'})$ each. Using Theorem~\ref{thm:cluster-approx} and the assumption $t'/t\leq \ln(2)/2$ each $e^{-\beta H}$ can be approximated, in the operator norm, by an MPO of the form $M_r(H)$ with error less than $\xi/(t'2^{t'})^2$ as long as $r = \Omega( \ln((t')^22^{2t'}n^2/\xi))$. Finally, Theorem~\ref{thm:cluster} states that such an MPO with the claimed bond dimension can be found efficiently. 
\end{proof}

 Let $H=\sum_{i=1}^{n-1}h_i$ be a 1D, 2-local
Hamiltonian on $n$ qudits of dimension $d$, with $\norm{h_i}\le 1$
(but the $h_i$ are not necessarily non-negative), and let $\beta>0$
be an inverse temperature. We write the cluster expansion $e^{-\beta H}= \sum_w f(w)$,
where $w$ runs over all words on $\{1,\ldots,n-1\}$ and
$f(w)\EqDef\frac{(-\beta)^{|w|}}{|w|!} h_w$ with
$h_w\EqDef\prod_{i\in w} h_i$. For an integer $r>0$, let $S_{<r}$ be
the set of all those $w$ such that the support of $w$, the set of
qudits on which $h_w$ acts non-trivially, consists of connected
components of size smaller than $r$. Let $M_r(H)\EqDef\sum_{w\in
S_{<r}} f(w)$ be the ``truncated cluster expansion'' of $e^{-\beta
H}$. The following theorem follows from the proof of Lemma~2
in~\cite{KlieschGKRE14cluster}; we give the proof in Appendix~\ref{sec:cluster-mpo}.

\begin{theorem}\label{thm:cluster-approx}
  Let $\beta$ be such that $e^\beta-1<1$.  Then the following
  approximation holds in the operator norm:
  \begin{align*}
    \norm{e^{-\beta H} - M_r(H)} 
      \le e^{n^2 (e^\beta-1)^r}-1 \,.
  \end{align*}
\end{theorem}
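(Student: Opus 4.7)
The plan is to reorganize the difference $e^{-\beta H}-M_r(H)$ as a sum over edge subsets and exploit a factorization over connected components, following the classical cluster expansion. For each subset $A\subseteq\{1,\ldots,n-1\}$ set
\[ g(A) := \sum_{w\colon \mathrm{supp}(w)=A} f(w), \]
so that $e^{-\beta H}=\sum_A g(A)$ and $M_r(H)=\sum_{A\text{ good}}g(A)$, where ``good'' means every maximal sub-interval of $A$ has length strictly less than $r$. Hence $e^{-\beta H}-M_r(H)=\sum_{A\text{ bad}}g(A)$, and the task reduces to bounding this last sum in operator norm.

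The first ingredient is the factorization property: whenever $A=\bigsqcup_i C_i$ is the decomposition of $A$ into its maximal connected sub-intervals, one has $g(A)=\prod_i g(C_i)$, with factors acting on pairwise-disjoint qudit regions. This holds because local terms on non-adjacent edges commute, and the multinomial factor $|w|!/\prod|w_i|!$ counting shuffles of the corresponding sub-words exactly cancels the $1/|w|!$ in $f(w)$. The second ingredient is the per-component bound $\|g(C)\|\le(e^\beta-1)^{|C|}$ for each connected $C$, obtained from the standard surjection-counting rearrangement
\[ \sum_{w\colon \mathrm{supp}(w)=C}\frac{\beta^{|w|}}{|w|!} = \sum_{(m_i)_{i\in C},\,m_i\ge 1}\prod_{i\in C}\frac{\beta^{m_i}}{m_i!} = (e^\beta-1)^{|C|}. \]

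To extract the claimed tail bound I pivot on the leftmost large component: for any bad $A$, let $C=[c,c+s-1]$ be its leftmost maximal interval of size $s\ge r$, and decompose $A=L\sqcup C\sqcup R$ with $L\subseteq[1,c-2]$ good and $R\subseteq[c+s+1,n-1]$ arbitrary. Since $L$, $C$, $R$ sit on disjoint qudit ranges, $g(A)=g(L)g(C)g(R)$ and norms multiply. Summing over admissible $L$ and $R$ with $C$ fixed turns the left factor into $M_r(H^{\ast}_{[1,c-2]})$ (of operator norm at most $1+E_{c-2}$, where $E_m$ denotes the same error for an $m$-edge chain, obtained by writing $M_r=e^{-\beta H^\ast}+(M_r-e^{-\beta H^\ast})$) and the right factor into $e^{-\beta H^{\ast}_{[c+s+1,n-1]}}$ (of operator norm at most $1$ by positive semidefiniteness of $H$). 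Writing $x=e^\beta-1$, summing the geometric series $\sum_{s\ge r}x^s\le x^r/(1-x)$ and over at most $n-r+1$ admissible values of $c$ yields the linear recursion
\[ E_n \;\le\; \frac{x^r}{1-x}\Bigl(n+\sum_{m<n}E_m\Bigr). \]
A short induction using the telescoping $E_n+1\le(1+\alpha)(E_{n-1}+1)$ with $\alpha=x^r/(1-x)$ gives $E_n\le(1+\alpha)^n-1\le e^{n\alpha}-1$, which under the hypothesis $e^\beta-1<1$ is crudely majorized by $e^{n^2(e^\beta-1)^r}-1$.

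The principal obstacle lies in controlling the sum over bad $A$ without incurring a spurious factor of $e^{\beta n}$, which is what a naive triangle inequality on the cluster expansion of $e^{-\beta H}$ would produce. It is precisely the factorization $g(A)=\prod_i g(C_i)$ together with the positivity bound $\|e^{-\beta H^\ast_R}\|\le 1$ on the unconstrained right-side factor that tames this exponential into the polynomial-tail recursion above. Making the recursion clean also requires the ``leftmost large component'' convention to guarantee each bad $A$ is counted exactly once.
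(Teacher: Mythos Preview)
Your argument is correct in substance but takes a genuinely different route from the paper. The paper does not pivot on a leftmost large component; instead it uses inclusion--exclusion directly. It writes
\[
e^{-\beta H}-M_r(H)\;=\;-\sum_{m\ge 1}(-1)^m\rho_m,
\]
where $\rho_m$ collects all $f(w)$ whose support contains $m$ specified disjoint intervals of length $\ge r$ (and avoids their boundaries). For a fixed $m$-tuple $I=(I_1,\dots,I_m)$ it factors $\rho_I=e^{-\beta H_{\overline I}}\prod_j\eta(I_j)$, bounds $\|e^{-\beta H_{\overline I}}\|\le 1$ and $\|\eta(I_j)\|\le(e^\beta-1)^{|I_j|}$, and then sums crudely over the $\le n^{2m}/m!$ tuples to obtain $\sum_m n^{2m}(e^\beta-1)^{mr}/m!=e^{n^2(e^\beta-1)^r}-1$. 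Your ``first large component'' decomposition is equally legitimate and in fact yields the sharper bound $E_N\le e^{N(e^\beta-1)^r/(2-e^\beta)}-1$ for an $N$-edge chain, which is linear rather than quadratic in the exponent whenever $e^\beta$ is bounded away from $2$. The price is the extra recursion/induction machinery, whereas the paper's inclusion--exclusion gives a closed form in one line.

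One small gap: your last sentence (``crudely majorized by $e^{n^2(e^\beta-1)^r}-1$'') is not valid for all $e^\beta-1<1$. You need $n\alpha\le n^2(e^\beta-1)^r$, i.e.\ $1/(2-e^\beta)\le n$, which fails when $e^\beta-1$ is within $1/n$ of $1$. Either state your own (generally tighter) bound $e^{n(e^\beta-1)^r/(2-e^\beta)}-1$, or note that the paper's exact form is recovered only under the mild extra assumption $e^\beta\le 2-1/n$; in every application in the paper $\beta\le(\ln 2)/2$, so this is harmless.
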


The next theorem states that the operator $M_r(H)$ can be written
efficiently as an MPO. This encoding also
shows that the operator $M_r(H)$ has a low Schmidt rank. The proof,
which is given in Appendix~\ref{sec:cluster-mpo}, follows very closely the ideas of~\cite[Section~IV]{MolnarSVC15gibbsmpo}.

\begin{theorem}\label{thm:cluster}
  The $r^{\text{th}}$ order cluster expansion $M_r(H)$ of the
  operator $e^{-\beta H}$ can be written as an MPO of bond dimension
  $\le r^2d^r$ which can be computed in time $nd^{O(r)}$.
\end{theorem}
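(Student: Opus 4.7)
The plan is to express $M_r(H)$ via its cluster structure and realize the resulting sum as an MPO through a left-to-right automaton construction, closely following~\cite[Section~IV]{MolnarSVC15gibbsmpo}. The starting observation is that since $h_i$ and $h_j$ commute whenever $\{i,i+1\}\cap\{j,j+1\}=\emptyset$, a word $w$ whose support decomposes into disjoint connected components $C_1,\ldots,C_k$ contributes $h_w = h_{w_1}h_{w_2}\cdots h_{w_k}$ independently of the order of interleaving, where $w_j$ is the subword on $C_j$. The number of interleavings with fixed marginal lengths is $\binom{|w|}{|w_1|,\ldots,|w_k|}$, which exactly cancels the $1/|w|!$ prefactor in $f(w)$ after summing, yielding the multiplicative factorization
\begin{equation*}
M_r(H) \;=\; \sum_{\text{disjoint clusters } C_1,\ldots,C_k,\;|C_j|<r} \prod_{j=1}^k G(C_j), \qquad G(C) \EqDef \sum_{\substack{w:\mathrm{supp}(w)=C\\ C\text{ connected}}} \frac{(-\beta)^{|w|}}{|w|!}\, h_w.
\end{equation*}

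Next I would construct the MPO by a left-to-right sweep, interpreting the virtual bond indices at the cut between sites $p$ and $p+1$ as states of a finite automaton. The states encode (i) whether a cluster is currently open across the bond, and (ii) in the affirmative case, both the number $j\in\{1,\ldots,r-1\}$ of cluster sites already traversed and an \emph{interface} encoding of the partial cluster operator on those $j$ sites. Since every cluster occupies at most $r$ contiguous qudits, any partial operator that may later be completed into a full $G(C)$ admits a Schmidt decomposition of rank at most $d^r$ across the current bond, so its interface can be specified by at most $d^r$ values. An additional $O(r^2)$ bookkeeping factor accommodates the choice of $j$ together with a counter enforcing that two distinct clusters be separated by at least one gap site. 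This yields a total bond dimension at most $r^2 d^r$ as claimed.

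The main technical obstacle is the compression claim in point (ii): a partial product of $h_i$'s on $j$ sites is a priori a $d^{2j}$-dimensional object, and establishing that only $d^r$ values need to cross the bond requires carefully separating the ``completed'' left history from the ``right interface'' at every internal cut of a cluster. This is exactly the construction worked out in~\cite{MolnarSVC15gibbsmpo}, and I would adapt it to our truncated setting while verifying that the size truncation $|C_j|<r$ survives. Once the explicit tensors are written, each MPO site is assembled from a sum over the $O(r)$ clusters that might contain it, each contributing a $d^{O(r)}$-sized block, for a total construction time of $nd^{O(r)}$. I would defer the detailed tensor formulas and the verification of their sizes to Appendix~\ref{sec:cluster-mpo}.
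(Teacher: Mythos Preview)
Your proposal follows essentially the same route as the paper: the cluster factorization $M_r(H)=\sum_{\uI}\bigotimes_j \rho_{I_j}$ (your $G(C)=\rho_I$), followed by an automaton-style MPO whose bond indices record the state of the currently open cluster together with an interface index of size $\le d^r$. The paper makes the $d^r$ bound concrete by \emph{precomputing} each $\rho_I$ as a dense operator on fewer than $r$ qudits via the inclusion--exclusion recursion $\rho_I=e^{-\beta H_I}-\sum_{\uI'\subsetneq I}\bigotimes_j\rho_{I'_j}$, which immediately gives an MPO for $\rho_I$ of bond dimension $\le d^{|I|}\le d^r$; your Schmidt-rank argument reaches the same conclusion.

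Two small corrections to your bookkeeping. First, adjacent clusters \emph{are} allowed and must be included: the intervals are sets of \emph{sites}, so e.g.\ $I=\{5,6\}$ (using only $h_5$) and $I'=\{7,8\}$ (using only $h_7$) are disjoint clusters with no gap site between them, since the connecting term $h_6$ is simply absent. Enforcing a mandatory gap site would drop valid configurations and give the wrong operator. Second, the factor $r^2$ does not arise from a gap counter; in the paper it comes from carrying the pair $(\ell,k)$, where $\ell<r$ is the \emph{total length} of the current cluster and $k\le\ell$ the position within it. Knowing $\ell$ in advance is what lets you plug in the precomputed tensors of the correct $\rho_{I(\ell,\cdot)}$ at each site. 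With these adjustments your outline matches the paper's proof.
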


\subsection{The Chebyshev polynomial}
\label{sec:cheby-efficient}

For algorithmic purposes it is important that the Chebyshev AGSP can be constructed efficiently once one is given MPO representations for the truncated part of the Hamiltonian. The following proposition states that this is possible. 

\begin{proposition}
\label{prop:Cheby-AGSP-eff} 
Let $H$ be a Hamiltonian having a decomposition of the form described in~\eqref{eq:H-mid-cut}, $k$ an integer, and $K = P_k(H)$ the associated degree-$k$ Chebyshev AGSP as defined in Definition~\ref{def:Cheby-AGSP}. 
Assume that $H_M$ (but not necessarily $H_L$ or $H_R$) is specified by an MPO with bond dimensions at most $\tilde{B}$. 

Then there exists $D \leq (dk)^{\bigO{\ell + k/\ell}}$ such that a family of $D^2$ MPO $\{A_1,\ldots,A_{D^2}\}$ of bond dimension at most $\tilde{B}^k$ each such that there exists $B_1,\ldots,B_{D^2}$ with $K=\sum A_i\otimes B_i$ can be computed in time $nD^2\tilde{B}^{O(k)}$. Here the $A_i$ act on qudits $\{i_1,i_1+1,\ldots,i_2\}$ and the $B_i$ on the remaining qudits. 
 This computation does not require knowledge of $\eta_0,\eta_1$. 

Furthermore, if $H_L$ and $H_R$ are also given as MPO with bond dimension at most $\tilde{B}$ then the $B_i$ can be computed as well.
\end{proposition}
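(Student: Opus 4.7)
The proof proceeds in two parts: first, I bound the Schmidt rank of $K = P_k(H)$ at each of the designated cuts $(i_1,i_1+1)$ and $(i_2,i_2+1)$ by $D \leq (dk)^{O(\ell + k/\ell)}$ (which is also the content of item~\ref{item:small-sr} of Theorem~\ref{thm:Cheby-AGSP}, deferred to this proposition); second, I construct the $D^2$ MPOs $\{A_i\}$ efficiently via iterative MPO multiplication, avoiding any dependence on $\eta_0, \eta_1$.

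For the Schmidt rank bound, I focus on the cut $(i_1, i_1+1)$; the argument at the other cut is identical. Decompose $H = H_{\mathrm{far}} + H_{\mathrm{near}}$, where $H_{\mathrm{near}}$ collects the $O(\ell)$ local terms with support inside the buffer $\{i_1-\ell, \ldots, i_1+\ell\}$ and $H_{\mathrm{far}}$ collects the rest. Then $H_{\mathrm{far}}$ splits as $H_{\mathrm{far}}^L + H_{\mathrm{far}}^R$, with the two summands acting on disjoint sides of the cut and commuting, so $H_{\mathrm{far}}$ has Schmidt rank $1$ at the cut; and $H_{\mathrm{near}}$ is supported on $O(\ell)$ qudits near the cut, with Schmidt rank at most $d^{O(\ell)}$. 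Since $K \in \Span\{H^j : 0 \leq j \leq k\}$, it suffices to bound the Schmidt rank of each $H^j$. Expanding $(H_{\mathrm{far}} + H_{\mathrm{near}})^j$ and grouping terms by the number $m$ of $H_{\mathrm{near}}$ insertions, each resulting term has Schmidt rank at most $d^{O(m\ell)}$ at the cut, while the number of distinct canonical forms (arrangements of $H_{\mathrm{near}}$ insertions interspersed with commuting monomials in $H_{\mathrm{far}}^L, H_{\mathrm{far}}^R$) is $k^{O(m)}$. Balancing with $m \sim k/\ell$ yields $D \leq (dk)^{O(\ell + k/\ell)}$.

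For the efficient computation, I build an MPO for $H$ of bond dimension $O(\tilde{B})$ in the middle region $\{i_1+1, \ldots, i_2\}$ by combining the given MPO for $H_M$ with constant-bond contributions from the local boundary terms $h_{i_1-\ell}, \ldots, h_{i_1+\ell-1}, h_{i_2-\ell}, \ldots, h_{i_2+\ell-1}$ and the crossing terms $h_{i_1}, h_{i_2}$. I then iteratively compute MPOs for $H^0, H^1, \ldots, H^k$ by repeated multiplication, each step growing the middle bond dimension by a factor of $O(\tilde{B})$ and costing $O(n\tilde{B}^{O(k)})$ time. Performing SVDs on the two designated cut bonds of each intermediate MPO extracts the middle operators; by the bound above, the middle operators arising from $\{H^j\}_{j=0}^k$ collectively span a subspace of dimension at most $D^2$, a basis of which I output as $\{A_1, \ldots, A_{D^2}\}$, each an MPO on the middle region of bond dimension at most $\tilde{B}^k$ (inherited from the MPO of $H^k$). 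Since this construction only tracks the middle of the iterative multiplication, it never requires $H_L$ or $H_R$, and the coefficients of $P_k$ (depending on $\eta_0, \eta_1$) enter only in specifying the corresponding $B_i$'s. When MPOs for $H_L$ and $H_R$ are additionally provided, the same iterative procedure yields full MPOs for each $H^j$, from which the $B_i$ are extracted by analogous SVDs at the cut bonds.

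The main obstacle is the Schmidt rank bound $D \leq (dk)^{O(\ell+k/\ell)}$: although the decomposition $H = H_{\mathrm{far}} + H_{\mathrm{near}}$ and the commutativity of $H_{\mathrm{far}}^L, H_{\mathrm{far}}^R$ are natural, carefully counting the distinct canonical forms in the expansion of $H^j$---while controlling the non-commutativity of $H_{\mathrm{far}}$ with $H_{\mathrm{near}}$---is where the sharp exponent $\ell + k/\ell$ emerges and follows the combinatorial argument of \cite{ref:AKLV2013-AL}. The efficient MPO arithmetic (multiplication, SVD) is otherwise standard, and the total time $nD^2\tilde{B}^{O(k)}$ is dominated by the $k$ iterative multiplications and the final SVDs.
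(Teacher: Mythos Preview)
Your Schmidt-rank argument has a genuine gap at the ``balancing with $m\sim k/\ell$'' step. You sum over all $m\in\{0,\ldots,k\}$ the contribution of monomials with exactly $m$ factors of $H_{\mathrm{near}}$; even with your own estimates this sum is dominated by large $m$, giving $\sum_m k^{O(m)} d^{O(m)} = (dk)^{O(k)}$ rather than $(dk)^{O(\ell+k/\ell)}$. (Note also that the Schmidt rank of $H_{\mathrm{near}}$ at the cut is $O(d^2)$, not $d^{O(\ell)}$---only $h_{i_1}$ crosses---so your per-monomial bound $d^{O(m\ell)}$ is already loose in the wrong direction.) There is no legitimate ``balancing'': you cannot simply pick a favorable value of $m$ when you must account for every monomial.

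The idea you are missing is the pigeonhole/moving-cut argument of~\cite{ref:AKLV2013-AL}. Rather than grouping by the total number of buffer insertions, one fixes a single monomial in the expansion of $H^k$ into the $4\ell+3$ summands $H_L, h_{i_1-\ell},\ldots,h_{i_1+\ell-1},H_M,\ldots,H_R$ and, by pigeonhole over the $2\ell$ buffer positions near $i_1$, finds some $u$ with at most $j_1\le k/(2\ell)$ occurrences of $h_u$. At the cut $(u,u+1)$ only $h_u$ crosses, so that monomial has Schmidt rank $d^{O(j_1)}$ there; shifting the cut from $u$ to $i_1$ costs a further factor $d^{O(\ell)}$. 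The paper then packages all monomials sharing a given $(u_1,u_2,j_1,j_2)$ into operators $P_{u_1u_2,kj_1j_2}(Z)$ and counts the resulting decomposition, obtaining the bound
\[
\sum_{u_1,u_2}\sum_{j_1,j_2\le k/(2\ell)} d^{2(j_1+j_2)+4\ell}\binom{k-j_1+2\ell-2}{2\ell-2}\binom{k-j_2+2\ell-2}{2\ell-2}\binom{k+j_1+1}{2j_1+1}\binom{k+j_2+1}{2j_2+1}\le (dk)^{O(\ell+k/\ell)}.
\]

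This gap propagates to your efficient construction. Iterated MPO multiplication produces an MPO for $H^j$ whose bond dimension at the designated cuts is the product of the per-factor bond dimensions, i.e.\ $d^{O(j)}$, not $D$; and compressing it by SVD at those cuts would require the full MPO, including $H_L,H_R$, which you are explicitly not given. The paper avoids this by enumerating the $A_i$ directly from the combinatorial decomposition above: each $P_{u_1u_2,kj_1j_2}(Z)$ is cut explicitly at positions $u_1,u_2$ (yielding $d^{2(j_1+j_2)}$ pieces), then the middle pieces are padded to $\{i_1+1,\ldots,i_2\}$ (a further $d^{4\ell}$ blow-up). Since the only middle ingredient in these pieces is $H_M$ and the individual $h_u$, the $A_i$ are computable from the given MPO for $H_M$ alone, and the normalization constant $\tilde P_k(\eta_0)$ (which carries all dependence on $\eta_0,\eta_1$) is absorbed into the $B_i$.
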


\begin{proof}
 The proof follows from a close examination of the proof of~\cite[Lemma~4.2]{ref:AKLV2013-AL}.\footnote{To follow the ensuing argument it may be helpful to translate the notation used for the indices in~\cite[Lemma~4.2]{ref:AKLV2013-AL} to the notation used here as follows: $s\to 2\ell-2$, $\ell\to k$, $k\to j$.}
		Adapting to our setting (where there are two cuts to consider simultaneously) the argument made in~\cite{ref:AKLV2013-AL} shows that in order to obtain an MPO for $K$ it suffices to include in the set $\{A_1,\ldots,A_{D^2}\}$ MPO representations for operators $P_{u_1u_2, kj_1j_2}(Z)$ where $u_1\in \{i_1-\ell,\ldots,i_1+\ell-1\}$, $u_2\in \{i_2-\ell,\ldots,i_2+\ell-1\}$, $j_1,j_2\in\{0,\ldots,k+2\ell-2\}$ and $Z$ is an $(4\ell-4)$-tuple of complex variables which takes on ${k-j_1+2\ell-2 \choose 2\ell-2}{k-j_2+2\ell-2 \choose 2\ell-2}$ possible values. For our purposes, a random choice of such values, e.g. distributed uniformly on the unit circle, will lead to a correct construction with probability $1$ (i.e. only depending on the number of digits of accuracy). We argue below that for each $P_{u_1u_2,kj_1j_2}(Z)$ one can efficiently construct an explicit set of MPO $\{A_\alpha\}$, where $1\leq i\leq \binom{k+j_1+1}{2j_1+1}\binom{k+j_2+1}{2j_2+1}d^{2(j_1+j_2) + 4\ell} $, such that there exists $B_i$ for which $\sum A_i\otimes B_i$ is an MPO for $P_{u_1u_2,kj_1j_2}(Z)$. This will lead to the claimed bounds as 
  \begin{align*}
  \sum_{u_1=i_1-\ell}^{i_1+\ell-1}  \sum_{u_2=i_2-\ell}^{i_2+\ell-1} \sum_{j_1,j_2=0}^{\lfloor k/2\ell\rfloor} d^{2(j_1+j_2) +4\ell}
&        \cdot \binom{k-j_1+2\ell-2}{2\ell-2}\binom{k-j_2+2\ell-2}{2\ell-2}\\
&\quad\cdot\binom{k+j_1+1}{2j_1+1}\binom{k+j_2+1}{2j_2+1}
  \end{align*}
	can be crudely bounded by $(dk)^{O(\ell+k/\ell)}$.
	
Fix $u_1,u_2$ and recall that $P_{u_1u_2,kj_1j_2}(Z)$ is defined as the sum of those terms in the expansion of $(H_L + \cdots + H_{i_1} + \cdots + H_{i_2} + \cdots +  H_R)^k$ which contain exactly $j_1$ (resp. $j_2$) occurrences of $H_{u_1}$ (resp. $H_{u_2}$).  There are $\binom{r+j_1+1}{2j_1+1}\binom{r+j_2+1}{2j_2+1}$ such terms. By cutting to the left of $u_1$ and right of $u_2$ we can efficiently construct at most $d^{2(j_1+j_2)}$ MPO which, properly combined, would give an MPO for the corresponding product. Finally we cut these MPO further so as to make the separation be to the left of $i_1$ and right of $i_2$ (or complete them appropriately, depending on whether $u_1\leq i_1$ or $u_1 > i_1$, and similarly for $u_2$ with respect to $i_2$). This last step multiplies the number of MPO by at most $d^{4\ell}$ (where we use $|i_1-u_1|,|i_2-u_2|\leq \ell$), giving the claimed bound. 
\end{proof}

\subsection{Efficient AGSP constructions}
\label{sec:agsp-const-eff}

We combine the soft truncation scheme with the Chebyshev polynomial AGSP to show that matrix product operator representations for operators $\{A_i\}$ satisfying the conditions of Theorem~\ref{t:ais} and Theorem~\ref{t:aisgapless} can be computed efficiently (in polynomial and quasi-polynomial time respectively). The same procedure, \gen, underlies both constructions, merely requiring a different choice of parameters in the two cases. The procedure is summarized in Figure~\ref{fig:generate} (it is implicit that the procedure is passed as an argument which assumption $H$ satisfies). We state its properties for the (DG) case in Theorem~\ref{t:aisefficient}, and for the (LD) case in Theorem~\ref{t:aisgaplessefficient}. For the case of a Hamiltonian satisfying assumption (DG) with a ground energy $\eps_0$ and a unique ground state (assumption (FF) of frustration-freeness) the procedure can be made even more efficient, and the result is stated in Theorem~\ref{t:aisff}.

\begin{figure}[H]
\rule[1ex]{12.2cm}{0.5pt}\\
\gen$(H, M, \eps'_M,(\eta_1,\mu))$: $H$ a Hamiltonian, $M=\{i_1+1,\ldots,i_2\}$ a subset of qudits, $\eps'_M$ an energy estimate for $H_M$, and $(\eta_1,\mu)$ energy parameters used only in the (LD) case. 
\begin{enumerate}
\item {\bf Soft truncation:} Set $\ell$ as in~\eqref{eq:deflff} in case (FF),~\eqref{eq:deflgapped} in case (DG), and~\eqref{eq:deflktgapped} in case (LD). Set $J_M$ as in~\eqref{eq:defjgapped}. In case (FF), construct an MPO for the truncated Hamiltonian as in Definition~\ref{def:FF-tH}. In case (DG) and (LD), construct an MPO for the soft-truncated Hamiltonian $\tilde{H}_M$ via the cluster expansion (see Definition~\ref{def:soft-trunc} and Lemma~\ref{lem:exp-approx-eff}).
\item {\bf Chebyshev polynomial:} Compute MPO representations for operators $\{A_i\}$ acting on $M$ using the decomposition of the Chebyshev polynomial provided in Proposition~\ref{prop:Cheby-AGSP-eff}, using energy parameters specified in~\eqref{eq:defetaff} in case (FF),~\eqref{eq:defetagapped} in case (DG), and~\eqref{eq:defetald} in case (LD).
\end{enumerate}
Return the MPO representations for $\tilde{H}_M$ and for the $\{A_i\}$. \vspace{0.2cm}\\
\rule[2ex]{12.2cm}{0.5pt}
\caption{The \gen\ procedure.}
\label{fig:generate}
\end{figure}

\begin{theorem}[Efficient AGSP, (DG)] \label{t:aisefficient} 
Let $H$ be a local Hamiltonian satisfying Assumption (DG), $\{1,\ldots,n\} = L\cup M\cup R$, where $L=\{1,\ldots,i_1\}$, $M=\{i_1+1,\ldots,i_2\}$, and $R=\{i_2+1,\ldots,n\}$, a partition of the $n$-qudit space, and ${\eps}'_M$ an estimate for the minimal  energy $\eps_M$ of the restriction of $H$ to $\mH_M$ such that $|\eps_M -{\eps'}_M |\leq 10$. Then the 
procedure $\gen (H, M, \eps'_M)$ described in Figure~\ref{fig:generate} returns
\begin{itemize}
\item MPO representations for a collection of $D^2$ operators $\{A_i\}_{i=1}^{D^2}$ acting on $\mH_M$ and of bond dimension at most  $n^{\tilde{O}(\spg^{-2})}$ such that there exists a subspace  $\tilde{T}$ for which the conclusions of Theorem~\ref{t:ais} are satisfied; 
\item An MPO for an operator $\tilde{H}_M$ such that $\norm{\tilde{H}_M} = O(\gamma^{-1} \log \gamma^{-1})$ and the minimal energy $\tilde{\eps}_M$ of $\tilde{H}_M$ restricted to $\tilde{T}$  satisfies $|\eps_M-\tilde{\eps}_M|<1/2$.
\end{itemize}
Moreover, $\gen (H, M, \eps'_M)$ runs in time
$n^{\tilde{O}(\spg^{-2})}$.\footnote{Here and in all our estimates
on running times we suppress dependence on the local dimension $d$,
which is treated as a constant.}
 \end{theorem}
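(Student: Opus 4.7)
The plan is to mirror the proof of Theorem~\ref{t:ais} (via Corollary~\ref{c:goodagsphardtrunc}), but to replace the non-constructive hard truncation with the efficient soft truncation of Definition~\ref{def:soft-trunc}, and to instantiate the Chebyshev AGSP via Proposition~\ref{prop:Cheby-AGSP-eff} rather than abstractly. Concretely, I set $\ell = \tilde{\Theta}(\gamma^{-1})$, $k = \ell^2 = \tilde{\Theta}(\gamma^{-2})$, and choose truncation parameters $t = \Theta(\gamma^{-1}\log(n/\gamma))$ and $t' = \Theta(\log n)$, which satisfy the coupling constraint $t' < (\ln 2 / 2)t$ required by Lemma~\ref{lem:exp-approx-eff}. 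With these choices, Lemma~\ref{lem:ft} gives the operator-norm bound $\|\tilde{H}_M - \eps'_M \Id\| \leq t\ln t' + O(1) = O(\gamma^{-1}\log\gamma^{-1})$, which is the advertised bound on $\|\tilde{H}_M\|$ after absorbing the constant shift from $\eps'_M$.

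Step one is to invoke Lemma~\ref{lem:exp-approx-eff} to produce an MPO for $\tilde{H}_M$ with operator-norm error $\xi = n^{-C}$ for a sufficiently large constant $C$; this yields bond dimension $\poly(n)$. Step two applies Proposition~\ref{prop:Cheby-AGSP-eff} with the degree-$k$ Chebyshev polynomial to this MPO, producing MPOs for $D^2$ operators $\{A_i\}$ acting on $\mH_M$, each with bond dimension $\tilde B^{O(k)} = \poly(n)^{O(\gamma^{-2})} = n^{\tilde O(\gamma^{-2})}$, exactly as claimed. The total running time is dominated by $n D^2 \tilde B^{O(k)} = n^{\tilde O(\gamma^{-2})}$.

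Correctness follows by re-running the argument of Theorem~\ref{t:ais} in the soft-truncated world. Lemma~\ref{lem:exp-approx} (which now plays the role that Lemma~\ref{lem:hard-trunc} played for the existential result) provides: (a) $\tilde{H}$ is gapped with gap $\tilde\gamma \in [\gamma/2, 2\gamma]$, and (b) there is a choice of $\eta'$ only polynomially larger than $\gamma/10$ for which $\tilde T := \tilde{H}_{[\tilde\eps_0, \tilde\eps_0 + \gamma/10]}$ and $H_{[\eps_0,\eps_0+\gamma/10]} = T$ are mutually $0.005$-close. The Chebyshev analysis of Theorem~\ref{thm:Cheby-AGSP} then applies verbatim to $\tilde H$: with $\|\tilde H\| = O(\gamma^{-1}\log\gamma^{-1})$ and $k = \tilde\Theta(\gamma^{-2})$, we obtain $\Delta = e^{-\tilde\Omega(\gamma^{-1})}$, so $D^{12}\Delta < 10^{-5}$, and Lemma~\ref{lem:error-spectral} delivers the viability tradeoff required by the conclusions of Theorem~\ref{t:ais}. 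The energy bound $|\eps_M - \tilde\eps_M| < 1/2$ follows by applying the first part of~\eqref{eq:exp-approx-0} in Lemma~\ref{lem:exp-approx} restricted to $\tilde T$, combined with $|\eps_M - \eps'_M|\leq 10$, since our choice of $t, t'$ makes the pointwise error $(\lambda-\eps)^{t'}/(t' t^{t'-1})$ arbitrarily small for $\lambda$ in the low-energy window.

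The main obstacle is simultaneously satisfying four coupled constraints: (i) the bond dimension of the MPO for $\tilde H_M$ must remain $\poly(n)$, forcing $t'$ to be only logarithmic; (ii) the AGSP parameter $\Delta$ must shrink fast enough to overcome $D^{12}$, which requires $k\sqrt{\gamma/\|\tilde H\|}$ to be super-constant; (iii) the soft-truncation closeness between $\tilde T$ and $T$ must be below the $0.005$ threshold demanded by Theorem~\ref{t:ais}; and (iv) the running time must be $n^{\tilde O(\gamma^{-2})}$. The calculations above show that the parameter choices $k = \ell^2$, $\ell = \tilde\Theta(\gamma^{-1})$, $t = \Theta(\gamma^{-1}\log(n/\gamma))$, $t' = \Theta(\log n)$ thread all four needles; the bulk of the technical work is verifying the quantitative bounds in Lemma~\ref{lem:exp-approx} under this scaling, which is the soft-truncation analogue of Lemma~\ref{lem:hard-trunc} and requires tracking $t'$-th powers of energy differences rather than the exponentials available in the hard case.
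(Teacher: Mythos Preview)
Your overall strategy---replace hard truncation by soft truncation, then feed the resulting MPO into Proposition~\ref{prop:Cheby-AGSP-eff}---matches the paper exactly. The gap is in your choice of truncation parameters. You set $t=\Theta(\gamma^{-1}\log(n/\gamma))$ and $t'=\Theta(\log n)$, then assert $t\ln t'=O(\gamma^{-1}\log\gamma^{-1})$. That equality is false: with your scaling, $t\ln t'=\Theta\big(\gamma^{-1}\log(n/\gamma)\,\log\log n\big)$, which grows with $n$. This is not a cosmetic slip. The norm $\|\tilde H\|-\eta_0$ feeds into the Chebyshev shrinking factor via $\Delta=4e^{-4k\sqrt{\gamma/(\|\tilde H\|-\eta_0)}}$; with $\|\tilde H\|=\tilde\Theta(\gamma^{-1}\log n)$ and $k=\tilde\Theta(\gamma^{-2})$ you get $k\sqrt{\gamma/\|\tilde H\|}=\tilde\Theta(\gamma^{-1}/\sqrt{\log n})\to 0$, so $\Delta\to 1$ and the required $D^{12}\Delta<10^{-5}$ fails for large $n$. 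It also directly contradicts the advertised bound $\|\tilde H_M\|=O(\gamma^{-1}\log\gamma^{-1})$ in the theorem statement.

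The paper's fix is to take $t'$ \emph{constant} (specifically $t'=4$) and $t=\Theta(\ell)=\Theta(\gamma^{-1}\log\gamma^{-1})$. The point you may have missed is that the soft-truncation error in Lemma~\ref{lem:exp-approx} scales like $((\eta+10)/t)^{t'-1}$; since $t\gg\eta+10$, already $t'=4$ drives this to $O(\gamma^3)$, which is far below the $0.005$ closeness threshold and the $1/2$ energy accuracy required. There is no need for $t'$ to grow with $n$. With $t'=4$ the MPO bond dimension from Lemma~\ref{lem:exp-approx-eff} is still $\poly(n)$ (indeed $2^{t'}=O(1)$ now), and crucially $\|\tilde H\|=O(\ell+t\ln t')=O(\ell)$ is independent of $n$, restoring both the norm bound and the $D^{12}\Delta$ trade-off exactly as in your step-by-step outline.
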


\begin{proof}
We construct an AGSP $K$ from which the operators $\{A_i\}$ claimed in the theorem will be derived. The construction follows very closely the one employed in the proof of Theorem~\ref{t:ais}, replacing the use of hard truncation by soft truncation. 

The first step in \gen\ consists in truncating the Hamiltonian associated to each of the three regions. For this, introduce truncation parameters 
\begin{equation}\label{eq:deftgapped}
t=\Theta(\ell),\qquad t'=4,
\end{equation}
 a width parameter 
\begin{equation}\label{eq:deflgapped}
\ell = {\Theta}(\spg^{-1}\log \spg^{-1}),
\end{equation}
 and define a Hamiltonian $\tilde{H} = \tilde{H}_{t',t}$ by applying the soft truncation transformation described in Definition~\ref{def:soft-trunc} thrice, to the regions 
\begin{equation}\label{eq:defjgapped}
J_L = \{1,\ldots,i_1-\ell-1\},\quad J_M=\{i_1+\ell+1,\ldots,i_2-\ell-1\},\quad J_R=\{i_2+\ell+1,\ldots,n\}
\end{equation}
respectively (provided each region is non-empty). 
The resulting truncated Hamiltonian $\tilde{H}$ has norm $O(\ell+t\log t')=O(\ell)$. 
Note that the computation of the complete Hamiltonian $\tilde{H}$ requires estimates for the ground energies of the restriction of $H$ to each of the three regions that are being truncated. We will only need to efficiently compute an MPO for $\tilde{H}_M$, for which a rough estimate for the ground state energy of $H_M$, as provided as input to \gen, will be sufficient. 

The second step is to apply the Chebyshev polynomial from Definition~\ref{def:Cheby-AGSP} to $\tilde{H}$ to obtain the AGSP $K$. For this we make a choice of degree 
\begin{equation}\label{eq:defkgapped}
k = \ell^2
\end{equation}
and set the energy parameters $\eta_0$ and $\eta_1$ to
\begin{equation}\label{eq:defetagapped}
 \eta_0 = \eps_0 + \spg/10,\qquad \eta_1 = \eps_0 + 9\spg/10.
\end{equation}
We first verify that $K$ as defined is a spectral AGSP with the required properties, and then we show how it can be computed efficiently. 
By item 2. from Theorem~\ref{thm:Cheby-AGSP} the scaling parameter $\Delta$ is given by  
\begin{equation}\label{eq:gapless-delta-def}
\Delta \EqDef 4e^{-4k\sqrt{\frac{8\spg}{10(\|\tilde{H}\|-(\eps_0+\spg/10))}}}\,=\, e^{-\Omega\big( k \sqrt{\frac{\spg}{(\ell+t) }}\big)}.
\end{equation}
Furthermore, applying Theorem~\ref{thm:Cheby-AGSP} twice, once for the region centered at $i_1$ and once for the region centered at $i_2$, the bond parameter $D$ of $K$ across each of the cuts $(i_1:i_1+1)$ and $(i_2:i_2+1)$ is bounded by 
\begin{equation}\label{eq:gapless-D-def}
D \leq  (dk)^{O(\ell+k/\ell)} = \alone,
\end{equation} 
as desired. Moreover,
$$ D^{12}\Delta \,=\, e^{\spg^{-1}\tilde{O}(\log(\spg^{-1}))} e^{-\Omega(\spg^{-1}\log^{3/2}(\spg^{-1}))}$$
can be made smaller than $\ddelta$ by choosing the implicit constants appropriately. 

Next we apply Lemma~\ref{lem:exp-approx}  to evaluate the closeness between the low-energy subspaces of $H$ and $\tilde{H}$. Since $H$ has a spectral gap the subspace $T_{\spg/20} = H_{[\eps_0,\eps_0+\spg/20]}$ is the ground space $T$ of $H$. Setting $\delta = 0.05$ the lemma implies that $\tilde{H}_{[\tilde{\eps}_0,\tilde{\eps}_0+\spg/10]}$ is $\delta$-close to $T$ as long as the constant implied in the definition~\eqref{eq:deftgapped} of the truncation parameter $t$ is large enough. Conversely, we can write $T=T_{\spg/2} = H_{[\eps_0,\eps_0+9\spg/10]}$, in which case the lemma implies that $T$ is $\delta$-close to $\tilde{H}_{[\tilde{\eps_0},\tilde{\eps}_0+\spg/10]}$. Thus the two spaces are $\delta$-close. The claim on the ground state energies of $H_M$ and $\tilde{H}_M$ follows directly from Lemma~\ref{lem:exp-approx} and our choice of $t$. 

Finally we turn to efficiency, and verify that in time $n^{O(k)} = n^{\tilde{O}(\spg^{-2})}$  one can construct a set of at most $D^2$ MPO $A_1,\ldots,A_{D^2}$ acting on $\mH_M$ such that there exists $B_1,\ldots,B_{D^2}$ acting on $\mH_L\otimes \mH_R$  such that the AGSP $K$ can be represented as $K = \sum A_i \otimes B_i$. 
For this we first need to construct MPO representations for the truncated terms in the Hamiltonian. This is provided by Lemma~\ref{lem:exp-approx-eff} (applied to $H_M-\eps'_M\Id$), which given our choice of parameters $t,t'$  guarantees that an MPO providing inverse polynomial approximation (in the operator norm) to $\tilde{H}_M$ can be efficiently computed that has polynomial bond dimension across all cuts. Proposition~\ref{prop:Cheby-AGSP-eff} shows that an efficient construction of MPO for the $A_i$ follows. 
\end{proof}

\begin{theorem}[Efficient AGSP, (LD)] \label{t:aisgaplessefficient} 
Let $H$ be a local Hamiltonian satisfying Assumption (LD), parameters $\eta_1\leq \eta$ and  $\mu>0$, $\{1,\ldots,n\} = L\cup M\cup R$, where $L=\{1,\ldots,i_1\}$, $M=\{i_1+1,\ldots,i_2\}$, and $R=\{i_2+1,\ldots,n\}$, a partition of the $n$-qudit space, and ${\eps}'_M$ an estimate for the minimal  energy $\eps_M$ of the restriction of $H$ to $\mH_M$ such that $|\eps_M -{\eps'}_M |\leq 10$. Then the 
procedure $\gen (H, M, \eps'_M,(\eta_1,\mu))$ described in Figure~\ref{fig:generate} returns
\begin{itemize}
\item MPO representations for a collection of $D^2$ operators $\{A_i\}_{i=1}^{D^2}$ acting on $\mH_M$ and of bond dimension at most $e^{\tilde{O}(\log^3 n)}$ each such that there exists subspaces  $\tilde{T},\tilde{T}_-$ for which the conclusions of Theorem~\ref{t:aisgapless} are satisfied; 
\item An MPO for an operator $\tilde{H}_M$ such that $\norm{\tilde{H}_M} = \tilde{O}(\log(n)/\mu)$ and the minimal energy $\tilde{\eps}_M$ of $\tilde{H}_M$ restricted to $\tilde{T}_-$  satisfies $|\eps_M-\tilde{\eps}_M|<1/2$.
\end{itemize}
Moreover, $\gen (H, M, \eps'_M, (\eta_1,\mu))$ runs in time $e^{\tilde{O}(\log^3 n)}$.
\end{theorem}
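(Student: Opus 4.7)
The plan is to mirror the proof of Theorem~\ref{t:aisefficient} for the (DG) case, with the crucial substitution of the spectral gap parameter $\gamma$ by the effective gap $\mu/\log n$, and using the two-subspace bookkeeping from the proof of Theorem~\ref{t:aisgapless} to handle the absence of a true spectral gap. As in the (DG) case, the AGSP will be obtained by first soft-truncating the Hamiltonian in three regions and then applying a Chebyshev polynomial; efficiency then follows by invoking Lemma~\ref{lem:exp-approx-eff} and Proposition~\ref{prop:Cheby-AGSP-eff}.

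First I would fix the width parameter
\begin{equation}\label{eq:deflktgapped}
\ell \,=\, \tilde{\Theta}\!\Big(\frac{\log n}{\mu}\log\Big(\frac{\log n}{\mu}\Big)\Big),\qquad t = \Theta(\ell),\qquad t' = 4,\qquad k = \ell^2,
\end{equation}
and define the regions $J_L, J_M, J_R$ exactly as in~\eqref{eq:defjgapped}. The soft truncation of Definition~\ref{def:soft-trunc} is applied to the three corresponding Hamiltonians to produce $\tilde{H}$, using $\eps'_M$ for the center region and (analogous to the (DG) case) any rough energy estimates on the outer regions, which are only needed to define the global operator but not for the MPO on $M$. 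By Lemma~\ref{lem:exp-approx} the norm of $\tilde{H}_M$ is $O(t\ln t'+\ell)=\tilde{O}(\log n/\mu)$, the stated energy bound $|\eps_M-\tilde\eps_M|<1/2$ holds (by the same choice of $t$), and soft-truncation is close to hard-truncation enough that the closeness statements of Lemma~\ref{lem:hard-trunc} are preserved up to absorbed constants.

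Next I would apply the Chebyshev AGSP from Definition~\ref{def:Cheby-AGSP} to $\tilde{H}$ with energy parameters
\begin{equation}\label{eq:defetald}
\eta_0 \,=\, \tilde{\eps}_0+\eta_1-\frac{2\mu}{3\log n},\qquad \eta_1^{\mathrm{AGSP}} \,=\, \tilde{\eps}_0+\eta_1-\frac{\mu}{3\log n},
\end{equation}
so that the role of the spectral gap $\gamma$ is played by $\mu/(3\log n)$. From Theorem~\ref{thm:Cheby-AGSP} we obtain
\[
\Delta \,=\, e^{-\Omega\left(k\sqrt{\frac{\mu/\log n}{\ell}}\right)},\qquad D \,\leq\, (dk)^{O(\ell+k/\ell)} \,=\, \althree,
\]
and with the parameter choices in~\eqref{eq:deflktgapped} the inequality $D^{12}\Delta<\ddelta$ is easily checked for a sufficiently large implied constant. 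The two subspaces $\tilde{T}=\tilde{H}_{[\tilde{\eps}_0,\eta_0]}$ and $\tilde{T}_{-}=\tilde{H}_{[\tilde{\eps}_0,\eta_0-\mu/(3\log n)]}$ are then verified to satisfy the closeness conclusions of Theorem~\ref{t:aisgapless} via Lemma~\ref{lem:exp-approx} applied thrice (once for each truncated region), mirroring the (DG) computation. The AGSP action on viable sets for the pair $(\tilde{T},\tilde{T}_-)$ is handled exactly by the ``viewing $S$ as viable for the smaller $\tilde{T}_-$'' trick used at the end of the proof of Theorem~\ref{t:aisgapless}, which bypasses the lack of a real gap.

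Finally, for the efficient computation: Lemma~\ref{lem:exp-approx-eff} (with $t'=4$, so $t'<\ln(2)t/2$ automatically) produces an MPO for $\tilde{H}_M$ of polynomial bond dimension $\poly(n/\xi)$ for any target approximation $\xi$. Taking $\xi$ inverse polynomial is sufficient to preserve all spectral bounds. Then Proposition~\ref{prop:Cheby-AGSP-eff} applied to $\tilde{H}$ with this MPO on $M$ produces MPO representations for the $\{A_i\}$ with bond dimension at most $\tilde{B}^{k}=\poly(n)^{k}$ in time $nD^2\tilde{B}^{O(k)}$. Plugging in $k=\tilde{O}(\log^2 n/\mu^2)$ yields bond dimension and running time
\[
n^{\tilde{O}(\log^2 n/\mu^2)} \,=\, e^{\tilde{O}(\log^3 n)},
\]
as claimed. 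The main technical obstacle, as in the (DG) case, is that the parameter $\eps'_M$ only controls the ground energy of $H_M$ and not of the global truncated operator; this is handled by observing that only the MPO for $\tilde{H}_M$ (not the outer blocks) is required by the algorithm for subsequent steps, and any convenient estimate for the outer blocks can be used since $\gen$ only returns MPO on $M$.
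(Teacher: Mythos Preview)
The proposal is correct and follows essentially the same approach as the paper: soft-truncate in three regions with $\ell=\tilde{\Theta}((\log n/\mu)\log(\log n/\mu))$, $t=\Theta(\ell)$, $t'=4$, apply the degree-$k=\ell^2$ Chebyshev polynomial with energy parameters creating an effective gap of order $\mu/\log n$, then invoke Lemma~\ref{lem:exp-approx-eff} and Proposition~\ref{prop:Cheby-AGSP-eff} for efficiency. Your choice of Chebyshev energy window (widths $\mu/(3\log n)$, centered using $\tilde{\eps}_0$) differs cosmetically from the paper's ($\eta'_0=\eps_0+\eta_1-\mu/(2\log n)$, $\eta'_1=\eps_0+\eta_1$), but both yield the same $D$-$\Delta$ trade-off and the same running-time estimate $n^{O(k)}=e^{\tilde{O}(\log^3 n)}$.
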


\begin{proof}
The proof is similar to Theorem~\ref{t:aisefficient}, and the construction of $\tilde{H}$ and $K$ are the same except for a different choice of parameters. Here we choose 
\begin{equation}\label{eq:deflktgapped}
\ell = {\Theta}\Big( \frac{\log n}{\mu} \log \frac{\log  n}{\mu}\Big),\qquad k=\ell^2\qquad\text{and}\qquad t=\Theta(\ell),\qquad t'=4.
\end{equation}
The truncated Hamiltonian $\tilde{H}=\tilde{H}_{t,t'}$ is obtained as in the proof of Theorem~\ref{t:aisefficient}, by applying the soft truncation transformation described in Definition~\ref{def:soft-trunc} thrice. The AGSP $K$ is obtained by applying the Chebyshev polynomial from Definition~\ref{def:Cheby-AGSP} to $\tilde{H}$, with the  energy parameters $\eta'_0$ and $\eta'_1$ defined as
\begin{equation}\label{eq:defetald}
 \eta'_0 = {\eps}_0 + \eta_1 - \frac{\mu}{2\log n},\qquad \eta'_1 = {\eps}_0 + \eta_1
\end{equation}
respectively. As a result the parameters $D$ and $\Delta$ satisfy 
$$ D^{12}\Delta = e^{\frac{\log n}{\mu} \tilde{O}\big(\log  \frac{\log n}{\mu}\big)} e^{-\Omega\big( \frac{\log n}{\mu} \log^{1.5}( \frac{\log n}{\mu})\big)} = o(1),$$
which can be made less than $\ddelta$ by a proper choice of implied constants. The conditions on closeness of $T$, $T_-$ and $\tilde{T}$, $\tilde{T}_-$ follow from an application of Lemma~\ref{lem:exp-approx}, observing that our choice of truncation parameters $t,t'$ is sufficient to conclude closeness of the appropriate subspaces. The claim on the ground state energies of $H_M$ and $\tilde{H}_M$ follows directly from Lemma~\ref{lem:exp-approx} as well. 

Finally, applying Proposition~\ref{prop:Cheby-AGSP-eff} and Lemma~\ref{lem:exp-approx-eff} we see that an MPO for the part of $K$ acting on region $M$ can be computed in time $n^{O(k)} = e^{\tilde{O}(\log^2 n)}$. 
\end{proof}

\subsection{The frustration-free case}
\label{sec:FF-AGSP}

In this section we give a simpler construction of AGSP specialized to the case of a local Hamiltonian $H=\sum_i h_i$ that is frustration-free with a spectral gap $\spg>0$ and a unique ground state $\ket{\Gamma}$. Replacing each $h_i$ by the projection on its range preserves the
ground state and, given our usual normalization assumption $0\le h_i\le \Id$, can only increase the
spectral gap; thus we may without loss of generality assume that
each $h_i$ is a projection. 

We define a truncated version of $H$ based on the detectability
lemma from~\cite{ref:AharoAVZ2011-DL} as follows.

\begin{definition}[Truncated Hamiltonian in the frustration-free
  case] \label{def:FF-tH}  
  Suppose given a local Hamiltonian $H$ such that 
 $H = H_J + H_{\overline{J}}$ where $H_J = h_{j_0} + h_{j_0+1} +
  \ldots + h_{j_1-1}$ is a local Hamiltonian acting on a contiguous set of  
  qudits $J= \{j_0, j_0+1,\ldots, j_1 \}$.
	Let $J_e$ (resp. $J_o$) denote the subset of indices $i\in J$
  that are even (resp. odd). Define $H_{J,e}\EqDef\sum_{i\in J_e} h_i$ and
  $H_{J,o}\EqDef\sum_{j\in J_o} h_i$.
  Then the truncation of $H_J$ is given by
  $\tilde{H}_J \EqDef \tilde{H}_{J,e} + \tilde{H}_{J,o}$, where
  \begin{align}
    \label{def:HLeven-odd}
    \tilde{H}_{J,e} \EqDef \Id - \otimes_{i\in J_e}(\Id-h_i) \,, \qquad
    \tilde{H}_{J,o} \EqDef \Id - \otimes_{i\in J_o}(\Id-h_i)\,.
  \end{align}
The \emph{truncated Hamiltonian} $\tilde{H}$ associated to region $J$ is given by
  \begin{align}
    \label{eq:FF-tH}
    \tilde{H} \EqDef \tilde{H}_J + H_{\overline{J}}.
  \end{align}
\end{definition}

Clearly, $\tilde{H}_{J,e}$ and $\tilde{H}_{J,o}$ are projectors and
hence their norm is $1$. In addition, they are the sum of the
identity operator and a product of non-overlapping local terms, and
as such, their Schmidt rank is at most $d^2+1$ across any cut. We
show that $\tilde{H}$ has the same ground state as $H$, as well as a
large spectral gap.  This is done through the detectability lemma
and its converse stated below. 

\begin{definition}[The detectability lemma operator in 1D]
\label{def:DLO} 
  Let $H=h_1+\ldots+h_{n-1}$ be a 1D nearest-neighbor Hamiltonian such that each $h_i$ is a projector. Then the \emph{DL operator} of $H$ is defined by
  \begin{align*}
    \DL(H)\EqDef \otimes_i (\Id-h_{2i}) \otimes_i (\Id-h_{2i+1}).
  \end{align*}  
\end{definition}

Note that the operator $\DL(H)$ is in general not Hermitian. 
The usefulness of the definition comes primarily from the
detectability lemma:

\begin{lemma}[The detectability lemma]\label{lem:dl}
  Let $h_1, \ldots, h_m$ be projectors such that each
  $h_i$ commutes with all but at most $g$ other $h_j$, and
  let $H\EqDef\sum_i h_i$. 
  For any state $\ket{\psi}$ let $\ket{\phi}\EqDef
  \prod_i(\Id-h_i)\ket{\psi}$, where the product is taken in any order. Then
  \begin{equation}
    \norm{\ket{\phi}}^2 \le \frac{1}{\eps_\phi/g^2 + 1} \ ,\qquad\text{where}\qquad
    \eps_\phi \EqDef 
      \frac{1}{\norm{\ket{\phi}}^2}\bra{\phi}H\ket{\phi} \,.
  \end{equation}
\end{lemma}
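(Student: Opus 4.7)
The plan is to prove the equivalent inequality $\bra{\phi}H\ket{\phi}\leq g^2(1-\|\ket{\phi}\|^2)$, which rearranges directly into the stated bound on $\|\ket{\phi}\|^2$. First I would set up intermediate states: assuming without loss of generality that $\|\ket{\psi}\|=1$ (the statement is scale-invariant), let $\pi$ denote the ordering used in the product and define $\ket{\phi_k}\EqDef(\Id-h_{\pi(k)})\cdots(\Id-h_{\pi(1)})\ket{\psi}$, so $\ket{\phi_0}=\ket{\psi}$ and $\ket{\phi_m}=\ket{\phi}$. Because each $\Id-h_{\pi(k)}$ is an orthogonal projector onto $\ker(h_{\pi(k)})$, the Pythagorean theorem applied at each step yields the telescopic identity
\[
1-\|\ket{\phi}\|^2 \;=\; \sum_{k=1}^{m}\|h_{\pi(k)}\ket{\phi_{k-1}}\|^{2},
\]
which will serve as a ``norm-loss budget'' to be compared against $\bra{\phi}H\ket{\phi}=\sum_i\|h_i\ket{\phi}\|^2$.

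Next I would establish a per-projector bound on $\|h_i\ket{\phi}\|$. Fix an index $i$, let $k(i)$ be the step at which $h_i$ is applied, and note that $\|h_i\ket{\phi_{k(i)}}\|=0$ since $\ket{\phi_{k(i)}}=(\Id-h_i)\ket{\phi_{k(i)-1}}\in\ker(h_i)$. I would then track how $\|h_i\ket{\phi_k}\|$ evolves as $k$ grows beyond $k(i)$. If $h_{\pi(k+1)}$ commutes with $h_i$, the identity $h_i\ket{\phi_{k+1}}=(\Id-h_{\pi(k+1)})h_i\ket{\phi_k}$ shows the quantity is non-increasing. Otherwise $\pi(k+1)$ lies in the set $B_i$ of projectors applied after $h_i$ that fail to commute with it, and the triangle inequality gives
\[
\|h_i\ket{\phi_{k+1}}\| \;\leq\; \|h_i\ket{\phi_k}\| + \|h_ih_{\pi(k+1)}\ket{\phi_k}\| \;\leq\; \|h_i\ket{\phi_k}\| + \|h_{\pi(k+1)}\ket{\phi_k}\|.
\]
Telescoping from $k(i)$ up to $m$ yields $\|h_i\ket{\phi}\|\leq\sum_{j\in B_i}\|h_j\ket{\phi_{k(j)-1}}\|$. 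Since $|B_i|\leq g$ by hypothesis, Cauchy--Schwarz gives $\|h_i\ket{\phi}\|^2\leq g\sum_{j\in B_i}\|h_j\ket{\phi_{k(j)-1}}\|^2$.

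Finally I would sum over $i$ and swap the summations. The key combinatorial observation is that for each fixed $j$, the number of $i$ with $j\in B_i$ equals the number of predecessors of $j$ in the ordering that fail to commute with $h_j$, hence is at most $g$. Combined with the telescopic identity this yields
\[
\sum_{i}\|h_i\ket{\phi}\|^2 \;\leq\; g\sum_{i}\sum_{j\in B_i}\|h_j\ket{\phi_{k(j)-1}}\|^2 \;\leq\; g^{2}\sum_{j}\|h_j\ket{\phi_{k(j)-1}}\|^{2} \;=\; g^{2}\bigl(1-\|\ket{\phi}\|^{2}\bigr),
\]
which is the desired inequality. The main obstacle, though conceptually simple, is the per-$i$ tracking argument: one must notice that only the non-commuting projectors applied \emph{after} $h_i$ can re-introduce an $h_i$-component into $\ket{\phi_k}$ (those applied before are absorbed before $(\Id-h_i)$ acts), which confines the increments to indices in $B_i$ and, via the symmetric double-count on the non-commutation graph, delivers exactly the $g^2$ factor.
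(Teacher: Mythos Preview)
Your proof is correct and follows essentially the same argument as the one the paper defers to in~\cite{anshu2016simple}: the telescopic norm-loss identity from the Pythagorean theorem, the per-projector tracking of how the $h_i$-component can only re-emerge at non-commuting steps applied after $(\Id-h_i)$, and the symmetric double-count on the non-commutation graph to extract the factor $g^2$. The paper itself does not give a proof but cites that reference, and your write-up reproduces its content faithfully.
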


The version of the detectability lemma stated above is stronger and
more general than the one appearing in~\cite{ref:AharoAVZ2011-DL}.
It also has a much simpler proof, which is given in~\cite{anshu2016simple}.
In addition to the detectability lemma, we will use a converse
statement which gives a lower bound on the norm of
$\DL(H)\ket{\psi}$. The converse, and its proof, appear in~\cite{anshu2016simple}.

\begin{lemma}[Converse of detectability lemma]
\label{lem:dl-converse}
  Let $H=\sum_{i=1}^{n-1}h_i$ be a 1D nearest-neighbor Hamiltonian
  such that each $h_i$ is a projector. Then for any eigenvector
  $\ket{\psi}$ of $H$,
  \begin{align}
    \norm{\DL(H)\ket{\psi}}^2 \ge 1-4\eps'_\psi \ ,\,
    \text{where}\qquad \eps'_\psi \EqDef \bra{\psi}H\ket{\psi} \,.
  \end{align}
\end{lemma}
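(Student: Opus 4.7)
The plan is to prove the converse of the detectability lemma by a short calculation built on two ingredients: an algebraic identity expanding $\Id - \DL(H)^\dagger\DL(H)$ into manageable pieces, and an operator inequality relating the ``failure projector'' for $\DL(H)$ back to the Hamiltonian itself. First I would set $Q_e \EqDef \prod_i(\Id - h_{2i})$ and $Q_o \EqDef \prod_j(\Id - h_{2j+1})$, so that $\DL(H) = Q_e Q_o$. Within each parity class the individual projectors $h_{2i}$ (respectively $h_{2j+1}$) have pairwise disjoint supports, hence commute, so $Q_e$ and $Q_o$ are each themselves Hermitian projections. Define $P_e \EqDef \Id - Q_e$, $P_o \EqDef \Id - Q_o$, $H_e \EqDef \sum_i h_{2i}$ and $H_o \EqDef \sum_j h_{2j+1}$, and note $\|\DL(H)\ket{\psi}\|^2 = \bra{\psi} Q_o Q_e Q_o \ket{\psi}$.

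The first ingredient is the operator inequality $P_e \le H_e$ (and symmetrically $P_o \le H_o$), which I would establish by simultaneous diagonalization of the commuting family $\{h_{2i}\}$: on any joint eigenvector every $h_{2i}$ takes value in $\{0,1\}$, while $P_e$ is the $\{0,1\}$-valued indicator that at least one such value equals $1$, and $H_e$ is the integer-valued count of those which do. The second ingredient is the purely algebraic identity
\begin{equation*}
\Id - Q_o Q_e Q_o \,=\, P_o + Q_o P_e Q_o,
\end{equation*}
which I would derive by substituting $Q_e = \Id - P_e$ and using $Q_o^2 = Q_o$. Taking the expectation in $\ket{\psi}$ then gives
\begin{equation*}
1 - \|\DL(H)\ket{\psi}\|^2 \,=\, \bra{\psi}P_o\ket{\psi} + \|P_e Q_o \ket{\psi}\|^2.
\end{equation*}

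To finish I would bound each term. Writing $\alpha \EqDef \bra{\psi}H_e\ket{\psi}$ and $\beta \EqDef \bra{\psi}H_o\ket{\psi}$, so that $\alpha + \beta = \eps'_\psi$, the operator inequality from above gives $\bra{\psi}P_o\ket{\psi} \le \beta$ and $\|P_e\ket{\psi}\|^2 \le \alpha$. For the second piece I would substitute $Q_o = \Id - P_o$ and apply the triangle inequality together with the fact that $P_e$ is a contraction, getting $\|P_e Q_o\ket{\psi}\| \le \|P_e\ket{\psi}\| + \|P_e P_o \ket{\psi}\| \le \sqrt{\alpha} + \sqrt{\beta}$. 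Squaring and using $(\sqrt{\alpha}+\sqrt{\beta})^2 \le 2(\alpha+\beta) = 2\eps'_\psi$, the combined estimate is $1 - \|\DL(H)\ket{\psi}\|^2 \le \beta + 2\eps'_\psi \le 3\eps'_\psi \le 4\eps'_\psi$, which is even a little stronger than needed.

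The main conceptual obstacle is the operator inequality $P_e \le H_e$; everything else is bookkeeping with projections and the triangle inequality. It is worth noting that this proof never invokes the hypothesis that $\ket{\psi}$ is an eigenvector of $H$: only the value $\eps'_\psi = \bra{\psi}H\ket{\psi}$ enters, so the inequality actually holds for every unit vector. One could also tighten the constant by handling the cross term in $\|P_e Q_o \ket{\psi}\|^2$ via Cauchy--Schwarz rather than the triangle inequality, but this is not needed for the stated lemma.
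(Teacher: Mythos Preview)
Your proof is correct. The paper does not give its own proof of this lemma; it simply cites the reference~\cite{anshu2016simple}. Your argument is self-contained and in fact yields the slightly stronger bound $\norm{\DL(H)\ket{\psi}}^2 \ge 1-3\eps'_\psi$. Your observation that the eigenvector hypothesis is never used is also correct: only the expectation value $\eps'_\psi = \bra{\psi}H\ket{\psi}$ enters, so the inequality holds for any unit vector.

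One minor simplification you might consider: once you have the identity $\Id - Q_oQ_eQ_o = P_o + Q_oP_eQ_o$, you can bound the second term directly via the operator inequality $Q_oP_eQ_o \le Q_oH_eQ_o$ and then note $\bra{\psi}Q_oH_eQ_o\ket{\psi} = \|H_e^{1/2}Q_o\ket{\psi}\|^2$. But this does not obviously improve on your triangle-inequality route, and your version is already clean.
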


With these two lemmas at hand we show the following.

\begin{theorem}
\label{thm:FF-tH} 
  The truncated Hamiltonian $\tilde{H}$ from
  Definition~\ref{def:FF-tH} satisfies the following:
  \begin{enumerate}
    \item $\tilde{H}$ is frustration free
      and has the same ground state $\ket{\gs}$ as $H$.    
    \item The Schmidt rank of $\tilde{H}$ at every cut is at most 
      $d^2+2$.
    \item $\tilde{H}$ has a spectral gap $\tilde{\spg}=\bOmega{\spg}$.
  \end{enumerate}
\end{theorem}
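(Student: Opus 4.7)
The three items are proved in order. Item~1 follows by direct computation: frustration-freeness of $H$ gives $h_i\ket{\gs}=0$ for all $i$, so $(\Id-h_i)\ket{\gs}=\ket{\gs}$ for each $i$, and the tensor product $\otimes_{i\in J_{e/o}}(\Id-h_i)$ (whose factors have pairwise disjoint supports) also fixes $\ket{\gs}$. Hence $\tilde{H}_{J,e}\ket{\gs}=\tilde{H}_{J,o}\ket{\gs}=0$, and also $H_{\bar{J}}\ket{\gs}=0$, so $\tilde{H}\ket{\gs}=0$; positivity of every summand of $\tilde{H}$ then gives frustration-freeness with $\ket{\gs}$ as a ground state. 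For item~2 I would analyze the summands across an arbitrary cut: the tensor product $\otimes_{i\in J_e}(\Id-h_i)$ is composed of 2-local operators with pairwise disjoint supports, at most one of which crosses a given cut, so the product has Schmidt rank at most $d^2$, and $\tilde{H}_{J,e}=\Id-\otimes(\Id-h_i)$ has rank at most $d^2+1$; the parity-opposite summand contributes only rank $2$; combining these with $H_{\bar{J}}$ and consolidating identity components yields the claimed bound $d^2+2$.

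For item~3 the target is to show $\bra{\psi}\tilde{H}\ket{\psi}=\Omega(\spg)$ for every normalized $\ket{\psi}\perp\ket{\gs}$. An operator inequality of the form $\tilde{H}\ge c H$ fails pointwise (one has $\tilde{H}_{J,e}\le H_{J,e}$ because $\tilde{H}_{J,e}$ is a single projector while $H_{J,e}$ is a sum of projectors with the same kernel but larger top eigenvalues), so I would argue by contradiction. Suppose $\bra{\psi}\tilde{H}\ket{\psi}=\tilde{\lambda}$ is very small. Writing $P_{e/o}\EqDef\otimes_{i\in J_{e/o}}(\Id-h_i)$ for the projectors onto the common kernels $K_{e/o}=\bigcap_{i\in J_{e/o}}\ker h_i$, positivity of each summand of $\tilde{H}$ gives $\bra{\psi}(\Id-P_e)\ket{\psi},\bra{\psi}(\Id-P_o)\ket{\psi}\le\tilde{\lambda}$ and $\bra{\psi}H_{\bar{J}}\ket{\psi}\le\tilde{\lambda}$. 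Since $\Id-P_{e/o}$ are orthogonal projectors this yields $\|P_e\ket{\psi}\|^2,\|P_o\ket{\psi}\|^2\ge 1-\tilde{\lambda}$, and a triangle-inequality argument using $P_oP_e=P_o-P_o(\Id-P_e)$ then gives $\|P_oP_e\ket{\psi}\|\ge 1-O(\sqrt{\tilde{\lambda}})$. Hence $\ket{\psi}$ is $O(\sqrt{\tilde{\lambda}})$-close to the normalized vector $\ket{\phi}=P_oP_e\ket{\psi}/\|P_oP_e\ket{\psi}\|\in\ran(P_o)$, which by its construction also lies $O(\sqrt{\tilde{\lambda}})$-close to $\ran(P_e)$.

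To close the argument I would invoke the detectability lemma (Lemma~\ref{lem:dl}) and its converse (Lemma~\ref{lem:dl-converse}) applied to $H_J$. The DL bound $\|P_oP_e\ket{\psi}\|^2\le 1/(1+\bra{\phi}H_J\ket{\phi}/g^2)$ forces $\bra{\phi}H_J\ket{\phi}=O(\tilde{\lambda})$, so $\ket{\phi}$ is close to the intersection $K_e\cap K_o=\bigcap_{i\in J}\ker h_i$ in a quantitative sense. Combined with $\bra{\psi}H_{\bar{J}}\ket{\psi}\le\tilde{\lambda}$ and the closeness of $\ket{\psi}$ and $\ket{\phi}$, this gives $\bra{\phi}H\ket{\phi}=O(\tilde{\lambda})$. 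Because $\ket{\gs}$ is the unique ground state of $H$ and $H$ has spectral gap $\spg$, the vector $\ket{\phi}$ must then have overlap with $\ket{\gs}$ bounded below by $1-O(\tilde{\lambda}/\spg)$; tracing the approximation back to $\ket{\psi}$ contradicts $\ket{\psi}\perp\ket{\gs}$ unless $\tilde{\lambda}=\Omega(\spg)$, giving $\tilde{\spg}=\Omega(\spg)$. The main obstacle in this plan is the step from ``$\ket{\psi}$ is simultaneously close to $K_e$ and to $K_o$'' to ``$\ket{\psi}$ is close to the intersection $K_e\cap K_o$'': this is precisely the quantitative non-commutativity controlled by the detectability lemma, and invoking the sharpened versions of~\cite{anshu2016simple} (rather than a naive Cauchy--Schwarz or Jordan-lemma bound) is what prevents the loss of an $|J|$ factor and delivers the constant-ratio gap.
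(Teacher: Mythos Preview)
Your arguments for items~1 and~2 are correct and match the paper's. For item~3, however, you miss the one-line observation that drives the paper's proof: since $\Id-\tilde{H}_{J,e}=\otimes_{i\in J_e}(\Id-h_i)$ and similarly for the odd layer, the detectability-lemma operators of $H$ and $\tilde{H}$ coincide, $\DL(H)=\DL(\tilde{H})$. The paper then simply takes $\ket{\psi}\perp\ket{\gs}$ (say the first excited eigenvector of $\tilde{H}$), applies the detectability lemma to $H$ to get $\|\DL(H)\ket{\psi}\|^2\le 1/(\spg/4+1)$, applies the converse to $\tilde{H}$ to get $\|\DL(\tilde{H})\ket{\psi}\|^2\ge 1-4\tilde{\spg}$, and equates the two. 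No intermediate vector $\ket{\phi}$, no transfer step.

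Your route instead applies the DL only to $H_J$ via $P_oP_e$, and then tries to control $\bra{\phi}H_{\overline{J}}\ket{\phi}$ separately. The step ``combined with $\bra{\psi}H_{\overline{J}}\ket{\psi}\le\tilde{\lambda}$ and the closeness of $\ket{\psi}$ and $\ket{\phi}$, this gives $\bra{\phi}H\ket{\phi}=O(\tilde{\lambda})$'' does not go through as stated: you only have $\|\ket{\phi}-\ket{\psi}\|=O(\sqrt{\tilde{\lambda}})$, and while the interior of $H_{\overline{J}}$ commutes with $P_e,P_o$ (so that part transfers cleanly), the two boundary terms $h_{j_0-1},h_{j_1}$ do not, and the naive transfer for them costs $O(\sqrt{\tilde{\lambda}})$. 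Propagating this through your final contradiction yields only $\tilde{\spg}=\Omega(\spg^2)$, not $\Omega(\spg)$. The obstacle you flag (closeness to $K_e$ and $K_o$ implying closeness to $K_e\cap K_o$) is in fact handled fine by your DL invocation; the real leak is the boundary of $H_{\overline{J}}$, which disappears entirely once you work with the full $\DL(H)$ rather than its restriction to $J$.
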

\begin{proof}
  Property~1. follows from the definition. For property~2. note
  first that the Schmidt rank of every operator on two
  $d$-dimensional qudits is at most $d^2$. This implies that the
  Schmidt rank of $\tilde{H}$ at every cut in $\overline{J}$ is at
  most $d^2+2$: we get a $d^2$ contribution from the local term that
  is defined on the cut and the extra $2$ comes from terms to the
  right/left of the cut. Consider now a cut between $i,i+1$ for an
  even $i$ that is in ${J}$. Since $i$ is even $\tilde{H}_{J,e}$ will contribute at most $d^2$, and $\tilde{H}_{J,o}$ at most $1$. The terms in $H_{\overline{J}}$ contribute at most $1$ as well, giving the claimed bound of $d^2+2$. 
    
  To prove~3. let $\ket{\psi}$ be orthogonal to
  $\ket{\gs}$. By the detectability lemma applied to $H$, 
  $\norm{\DL(H)\ket{\psi}}\le \frac{1}{\spg/4 + 1}$. By the converse of the detectability lemma applied to
  $\tilde{H}$, $\norm{\DL(\tilde{H})\ket{\psi}}\ge
  1-4\tilde{\spg}$. Since by construction $\DL(H)=\DL(\tilde{H})$, this implies
  $$\tilde{\spg}\ge \frac{1}{4}\Big(1-\frac{1}{\spg/4+1}\Big),$$
  from which the claim follows.
\end{proof}

The following is an analogue of Theorem~\ref{t:aisefficient} which provides a more efficient construction for the frustration-free case.

 \begin{theorem}[Efficient AGSP, (FF)] \label{t:aisff} 
Let $H$ be a local Hamiltonian satisfying Assumption (FF) and $\{1,\ldots,n\} = L\cup M\cup R$, where $L=\{1,\ldots,i_1\}$, $M=\{i_1+1,\ldots,i_2\}$, and $R=\{i_2+1,\ldots,n\}$ a partition of the $n$-qudit space. Then the procedure $\gen (H, M)$ returns MPO representations for a collection of $D^2$ operators $\{A_i\}_{i=1}^{D^2}$ acting on $\mH_M$ such that the following hold:
\begin{itemize}
\item $D=2^{\tilde{O} (\gamma ^{-1}\log^3  d)}$;
\item There is $\Delta>0$ such that $D^{12} \Delta < \ddelta$ and for any $S\subseteq \mH_M$ that is $\delta$-viable for $\{\ket{\Gamma}\}$ it holds that $S'=\Span\{\cup_i A_i S\}$ is $\delta'$-viable for $T$ with $\delta'= \frac{\Delta}{(1-\delta)^2}$;
\item Each $A_i$ has bond dimension at most $2^{\tilde{O}(\spg^{-2}\log^5 d)}$.
\end{itemize}
Moreover, for constant $d$ and $\spg>0$ the procedure $\gen (H, M)$ runs in time $n^{(1+o(1))}$.
\end{theorem}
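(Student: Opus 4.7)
The plan is to mirror the construction used in the proof of Theorem~\ref{t:aisefficient}, but replace the soft truncation scheme with the much simpler detectability-lemma-based truncation of Definition~\ref{def:FF-tH}. The frustration-free assumption pays off in two ways: the truncated Hamiltonian already has Schmidt rank at most $d^2+2$ at every cut (so no cluster-expansion approximation is needed to turn it into an MPO), and its unique ground state is preserved \emph{exactly} rather than only approximately, which removes the need for any subspace-closeness bookkeeping.

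First, I would build a truncated Hamiltonian $\tilde{H}$ by applying Definition~\ref{def:FF-tH} independently to three regions $J_L$, $J_M$, and $J_R$, separated by boundary strips of width $\ell$ straddling the cuts $(i_1{:}i_1{+}1)$ and $(i_2{:}i_2{+}1)$ (where $\ell$ is a width parameter to be fixed later). By Theorem~\ref{thm:FF-tH} applied three times, $\tilde{H}$ (i) is frustration-free with the same unique ground state $\ket{\gs}$ as $H$, (ii) has Schmidt rank at most $d^2+2$ at every cut, and (iii) has spectral gap $\tilde{\spg}=\Omega(\spg)$. Since only the $O(\ell)$ boundary terms near the two cuts remain untruncated, $\|\tilde{H}\|=O(\ell)$. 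Because each $\tilde{H}_{J,e}$ and $\tilde{H}_{J,o}$ is an identity minus a tensor product of $(\Id-h_i)$ factors, an exact MPO for $\tilde{H}_M$ of bond dimension $O(d^2)$ can be assembled directly in time $O(n)$, with no approximation step at all.

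Next, I apply the Chebyshev polynomial construction of Definition~\ref{def:Cheby-AGSP} to $\tilde{H}$ with energy parameters $\eta_0=0$, $\eta_1=\tilde{\spg}$, and degree $k=\ell^2$. Theorem~\ref{thm:Cheby-AGSP} then yields a spectral AGSP $K$ with
\[
\Delta \,=\, e^{-\Omega\big(k\sqrt{\tilde{\spg}/\|\tilde{H}\|}\big)} \,=\, e^{-\Omega(\ell^{3/2}\sqrt{\spg})}
\qquad\text{and}\qquad
D \,=\, (d\ell)^{O(\ell)}
\]
at each of the two cuts of interest. Choosing $\ell=\tilde{\Theta}(\spg^{-1}\log^2 d)$ with a sufficiently large implied constant simultaneously gives $D=2^{\tilde{O}(\spg^{-1}\log^3 d)}$ and $D^{12}\Delta<\ddelta$, which is the required trade-off. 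The error-reduction guarantee then follows directly from Lemma~\ref{lem:error-spectral} applied to $K$, since $(\tilde{H},0,\tilde{\spg})$ has target subspace exactly $T=\Span\{\ket{\gs}\}$.

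The MPO representations for $\{A_i\}$ and the runtime bound are then extracted from Proposition~\ref{prop:Cheby-AGSP-eff} applied with $\tilde{B}=O(d^2)$ and $k=\ell^2$: each $A_i$ has bond dimension $\tilde{B}^{O(k)}=2^{\tilde{O}(\spg^{-2}\log^5 d)}$, and the total computation takes time $nD^2\tilde{B}^{O(k)}=n\cdot 2^{\tilde{O}(\spg^{-2}\log^5 d)}$, which is $n^{1+o(1)}$ for constant $d$ and $\spg$. The main delicate point in this plan is the joint balancing of $\ell$ and $k$: one must verify that $k\sqrt{\tilde{\spg}/\|\tilde{H}\|}$ dominates $\log D$ in the condition $D^{12}\Delta<\ddelta$, and checking that the balanced choice requires $\ell$ as large as $\tilde{\Theta}(\spg^{-1}\log^2 d)$ — driven by the $O(\ell)$ boundary contribution to $\|\tilde{H}\|$ — is precisely what locks in the claimed $\spg^{-1}\log^3 d$ scaling in the exponent of $D$.
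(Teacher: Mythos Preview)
Your proposal is correct and follows essentially the same route as the paper's proof: truncate via Definition~\ref{def:FF-tH} on three regions with width $\ell=\tilde{\Theta}(\spg^{-1}\log^2 d)$, apply the Chebyshev construction with $k=\Theta(\ell^2)$ and $(\eta_0,\eta_1)=(0,\tilde{\spg})$, then invoke Theorem~\ref{thm:Cheby-AGSP}, Lemma~\ref{lem:error-spectral}, and Proposition~\ref{prop:Cheby-AGSP-eff} with $\tilde{B}=O(d^2)$. Your accounting $\|\tilde{H}\|=O(\ell)$ from the $O(\ell)$ untruncated boundary terms is in fact more careful than the paper's passing claim of ``norm $O(1)$'', and your explicit balancing of $\ell^{3/2}\sqrt{\spg}$ against $\ell\log(d\ell)$ to justify the $\tilde{\Theta}(\spg^{-1}\log^2 d)$ choice is exactly the right computation.
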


\begin{proof}
We construct a suitable AGSP $K$ from which the operators $A_i$ will be derived.
The first step consists in truncating the Hamiltonian associated to each of the three regions. For this, introduce 
 a width parameter 
\begin{equation}\label{eq:deflff}
\ell = \tilde{\Theta}(\spg^{-1}\log^2 d),
\end{equation}
 and define a Hamiltonian $\tilde{H}$ by applying the truncation scheme described in Definition~\ref{def:FF-tH}   thrice, to the regions $J_L = \{1,\ldots,i_1-\ell-1\}$, $J_M=\{i_1+\ell+1,\ldots,i_2-\ell-1\}$ and $J_R=\{i_2+\ell+1,\ldots,n\}$ respectively (provided each region is non-empty). 
Based on Theorem~\ref{thm:FF-tH} the resulting truncated Hamiltonian $\tilde{H}$ has norm $O(1)$, the same ground state as $H$, and a spectral gap $\tilde{\spg} = \Theta(\spg)$. 

$K$ is obtained by applying Definition~\ref{def:Cheby-AGSP} to $\tilde{H}$ with 
\begin{equation}\label{eq:defetaff}
\eta_0=0,\qquad \eta_1=\tilde{\spg}
\end{equation}
and $k=\Theta(\ell^2)$. The bound on $D$ follows from Theorem~\ref{thm:Cheby-AGSP}, using which one can also verify that the desired trade-off $D^{12}\Delta < \ddelta$ will be achieved provided the right choice of constants is made in the choice of $\ell$. 

By Theorem~\ref{thm:FF-tH}  $\tilde{H}$ can be represented as an MPO with bond dimension $O(d^2)$, from which it follows that we can compute a decomposition $K = \sum L_i \otimes A_i \otimes R_i$ where each $A_i$ has bond dimension $O(d^k)= e^{\tilde{\Theta}(\gamma^{-2} \log^5 d)}$. 

The claim on the running time follows from the estimates provided in Proposition~\ref{prop:Cheby-AGSP-eff}.
\end{proof}

\section{Algorithms}
\label{sec:algorithms}

Equipped with the efficient construction of AGSP described in Section~\ref{sec:efficient}, we are  ready to turn \mpr\ into an efficient algorithm. The algorithm,  \low, follows the outline given in Section~\ref{sec:merge-process}, but requires additional ingredients. The first is the use of the procedure \gen\ described in Figure~\ref{fig:generate}, which creates MPO representations for the spectral AGSP required to perform error reduction. The second is an additional step of \emph{energy estimation}, which computes an energy estimate required by \gen.  

The complete algorithm is described in Figure~\ref{fig:algo1}. It takes as input a local Hamiltonian satisfying assumptions (FF), (DG) or (LD) (we assume the algorithm is told which assumption holds) and a precision parameter $\delta$, and returns MPS representations for a viable set that is $\delta$-close to the low-energy space $T$ of $H$.\footnote{The algorithm should also be provided a lower estimate for the gap $\spg$. If not, it can iterate for different values and return the lowest-energy states found.} 

\begin{figure}[H]
\rule[1ex]{12.2cm}{0.5pt}\\
\low $(H,\delta,(\eta,\mu))$: $H$ a local Hamiltonian acting on $\otimes_{i=1}^n \CC^d$, $n$ a power of two; $\delta$ an accuracy parameter; $(\eta,\mu)$ energy parameters for the (LD) case.
\begin{enumerate}
\item {\bf Initialization:} For $j\in\{1,2, \ldots n\}$ let $V^0_j$ contain a family of MPS representations for an (arbitrary) basis of  $\CC^d$, and $\eps_{0,j}'=0$. 
\item {\bf Iteration:} For $i$ from $1$ to $\log n$ do: \\[3mm]
For all $j\in\{1 ,\ldots, n/2^i\}$ do:
\begin{itemize} 
\item {\bf Generate:}    Let $M=\{(j-1)2^i, (j-1)2^i +1 , \dots ,  j2^i-1\}$ and $\eps_M' = \eps_{i-1,2j-1}' + \eps_{i-1, 2j}'$.\\
 Set $(\{A_i\},\tilde{H}_M)$= \gen $(H, M, \eps_M')$ in the (FF) and (DG) cases, and $(\{A_i\},\tilde{H}_M)$= \gen $(H, M, \eps_M',\eta-(i-1)\mu/\log n,\mu)$ in the (LD) case.
\item {\bf Merge:} Set $V^i_j =\mpr'( V^{i-1}_{2j-1}, V^{i-1}_{2j}, \{A_i\},s,(k,\xi))\subseteq \mathcal{H}_{[ (j-1)2^i +1,j2^i]} $, where $(s,k)$ are specified in~\eqref{eq:algo-sk-dg} and $\xi$ should satisfy~\eqref{eq:algo-xi-dg} for the case (DG) and (FF); in case (LD) the procedure \mpr\ can be used instead.
\item {\bf New Energy Estimation:} Form the subspace $V= \{A_i\}^{t} \cdot ( V^{i-1}_{2j-1}\otimes V^{i-1}_{2j})$, where $t = 4 \lceil \log \spg^{-1} \rceil$. Compute the smallest eigenvalue $\eps'_{i,j}$ of the restriction of $\tilde{H}_M$ to $V$. (This step is not needed in case (FF).)
\end{itemize}
\item[3.] {\bf Final step:}
Set $K=(\Id-H/\|H\|)$ and $\tau=10\|H\|\spg^{-1}\log(1/\delta)$. Choose an orthonormal basis $\{\ket{y_i^{(0)}}\}$ for $V^{\log n}_1$. Repeat for $t=1,\ldots,\tau$:
\begin{itemize}
\item Set $\{\ket{y_i^{(t)}}\} = \Trim_\xi( \Span\{K\ket{y_i^{(t-1)}}\} )$,
\end{itemize}
where $\xi$ is as previously in cases (DG) and (FF), and as in~\eqref{eq:algo-xi-ld} in case (LD).
\end{enumerate}
Return $\{\ket{z_i}\}$, the smallest $r$ eigenvectors  of $H$ restricted to $W=\Span\{\ket{y_i^{(\tau)}}\}$. \vspace{0.2cm}\\
\rule[2ex]{12.2cm}{0.5pt}
\caption{The \low\ algorithm.}
\label{fig:algo1}
\end{figure}

We note that the \low\ algorithm described in Figure~\ref{fig:algo1} already incorporates the modified procedure \mpr' sketched in Section~\ref{sec:merge-process}. As described in that section, \mpr' differs from \mpr\ by adding a step of bond trimming. The reason for the modification is that due to the logarithmic number of iterations, successive applications of \mpr\ may,  even if the $\{A_i\}$ can be applied efficiently, lead to MPS whose bond dimension eventually becomes super-polynomial. The procedure \mpr' is described and analyzed in detail in Section~\ref{sec:merge-prime}. In Section~\ref{sec:alg-dg}, Section~\ref{sec:alg-ff} and Section~\ref{sec:alg-ld} we build on the analysis of \mpr' and the efficient AGSP constructions from the previous section to show that \low\ leads to an efficient algorithm under assumptions (DG), (FF) and (LD) respectively.

\subsection{A modified \mpr\ procedure}
\label{sec:merge-prime}

The procedure \mpr' is described in Figure~\ref{fig:merge-prime}. It takes additional trimming parameters $k$ and $\xi$ as input ($k$ and $\xi$ will usually be of order $\log(n)$ and $\poly^{-1}(n)$ respectively).

\begin{figure}[H]
\rule[1ex]{12.2cm}{0.5pt}\\
\mpr' $(V_1, V_2, \{A_i\}, s, (k, \xi))$: Subsets $V_1\subseteq \mH_A$, $V_2\subseteq \mH_B$ of vectors (represented as MPS), operators $A_i$ acting on $\mH_1 \otimes \mH_2$ (represented as MPO), $s$ a dimension bound, $k\in\N$ and $\xi>0$ parameters for the trimming subroutine. 
\begin{enumerate}
\item {\bf Tensoring:} Set $W$ to be a set of MPS representations for an orthonormal basis for the space $\Span\{ V_1 \otimes V_2\}$.
\item {\bf Random Sampling:} Let $W'\subseteq W$ be a random $s$-dimensional subspace of $W$ obtained by applying a random orthogonal transformation to the vectors in $W$ and returning the first $s$ vectors obtained.
\item {\bf Error Reduction:} Set $V= W'$. Repeat $k$ times:
\begin{itemize}
\item  Set $V= \Trim_{\xi}(\Span\{\cup_{i} A_i W'\})$, where the trimming procedure $\Trim$ is described in Definition~\ref{def:trimming}. 
\end{itemize}
\end{enumerate}
Return MPS representations for the vectors in $V$. \vspace{0.2cm}\\
\rule[2ex]{12.2cm}{0.5pt}
\caption{The \mpr' procedure.}
\label{fig:merge-prime}
\end{figure}

Correctness of \mpr' (for an appropriate choice of $\xi$) relies on the area laws proven in Section~\ref{sec:arealaws} and on the analysis of the trimming procedure given in Section~\ref{ss:bt}. We give the analysis for the case of Hamiltonians satisfying assumption (DG) in the next section, for frustration-free Hamiltonians in Section~\ref{sec:alg-ff}, and for Hamiltonians satisfying assumption (LD) in Section~\ref{sec:alg-ff}.

\subsection{Degenerate Hamiltonians}
\label{sec:alg-dg}

The following theorem proves the correctness of algorithm \low\ for the case where the input Hamiltonian satisfies assumption (DG).  

\begin{theorem}  \label{t:alg1} Let $H$ be a local Hamiltonian satisfying Assumption (DG), $T$ its ground space, $r=\dim(T)$ and $\delta\geq \poly^{-1}(n)$. Then with probability at least $1- \frac{1}{n}$ the set of MPS returned by $\low(H,\delta)$ is $\delta$-viable for $T$.\footnote{The probability of success can be improved to $1  - \poly^{-1} (n)$ by scaling the parameter $s$ used in the algorithm by an appropriate constant.} The running time of the algorithm is  $n^{\tilde{O}(\spg^{-2})}$.
\end{theorem}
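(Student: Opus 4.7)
The approach is to proceed by induction on the merge level $i \in \{0, 1, \ldots, \log n\}$, maintaining the invariant that at the end of iteration $i$, each $V_j^i$ is a $\delta_0$-viable set for the global ground space $T$ (for some absolute constant, say $\delta_0 = 0.04$), of dimension $\tilde{q} = r\,\alone$, specified by MPS of bond dimension $\poly(n)$, and that the energy estimate $\eps_{i,j}'$ satisfies $|\eps_{i,j}' - \eps_{M,\min}| \leq 5$, where $\eps_{M,\min}$ is the ground energy of the restriction $H_M$ of $H$ to the current region $M$. The base case ($i=0$) is immediate since $V_j^0 = \CC^d$ is trivially $0$-viable and $H$ has no purely 1-local terms. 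In the inductive step, the energy estimate hypothesis implies $|\eps_M' - \eps_{M,\min}| \leq 10$, so Theorem~\ref{t:aisefficient} applies and the call to \gen\ returns, in time $n^{\tilde{O}(\spg^{-2})}$, MPO representations for AGSP operators $\{A_i\}$ implementing the object of Theorem~\ref{t:ais}, together with an MPO for $\tilde{H}_M$, each of $\poly(n)$ bond dimension.

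\paragraph*{Analysis of \mprtrim.}
One call to \mprtrim\ realizes a single round of the viable-set amplification scheme of Section~\ref{sec:viable-amplification}, augmented with trimming to control MPS complexity. By Lemma~\ref{l:t}, the tensor product $V_{2j-1}^{i-1}\otimes V_{2j}^{i-1}$ is $2\delta_0$-viable for $T$ of dimension $\tilde{q}^2$. Random sampling to dimension $s = \Theta(\tilde{q}/D^2)$ gives, by Lemma~\ref{l:rs}, a $(1 - \Omega(D^{-2}))$-viable set with probability at least $1 - n^{-2}$, so that a union bound over the $O(n)$ merges yields overall failure probability at most $1/n$. The inner error-reduction loop alternates applying the $D^2$ AGSP MPO's (Lemma~\ref{lem:error-spectral}, which shrinks the viability parameter by a factor $\sim D^2 \Delta < \ddelta^{1/6}$) with trimming (Lemma~\ref{lem:trimming}); a constant number $k = O(1)$ of such iterations restores viability to $\delta_0$ while keeping the dimension at $\tilde{q}$. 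The critical point is that trimming invokes the area law (Theorem~\ref{thm:al-dg}) to bound the MPS bond dimension $b$ of $T$ by $\poly(n)$; choosing $\xi = \poly^{-1}(n)$ sufficiently small that $\sqrt{\ell b \tilde{q}}\,\xi \ll \delta_0$ ensures that the trimming loss per round is negligible compared with the AGSP gain, while keeping the post-trim bond dimension $\tilde{q}/\xi^2 = \poly(n)$. This closes the induction on both viability and bond dimension.

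\paragraph*{Energy estimates, final amplification, runtime, and main obstacle.}
For the energy estimate, $t = 4\lceil \log \spg^{-1}\rceil$ AGSP applications to the tensored viable set reduce its viability for $\tilde{T}$ to roughly $(D^2\Delta)^t \ll \spg/\|\tilde{H}_M\|$; hence the minimum eigenvalue of $\tilde{H}_M|_V$ is within $o(1)$ of $\tilde{\eps}_M$, which by Theorem~\ref{t:aisefficient} is within $1/2$ of $\eps_M$, sustaining the invariant. After $\log n$ merges, $V_1^{\log n}$ is a $\delta_0$-viable set for $T$ on the whole chain. The final loop applies the simple AGSP $K = \Id - H/\|H\|$ (of constant Schmidt rank per cut) a total of $\tau = O(\spg^{-1}\|H\|\log \delta^{-1}) = \poly(n)$ times with interleaved trimming; the analysis is structurally identical to the error-reduction loop of \mprtrim. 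Extracting the $r$ smallest eigenvectors of $H$ restricted to the resulting $W$ yields, by standard spectral perturbation using the gap $\gamma$ between the $r$-th and $(r+1)$-st eigenvalues of $H$, an $r$-dimensional subspace $\delta$-close to $T$, hence $\delta$-viable as claimed. The total runtime is dominated by the $O(n)$ calls to \gen, each of cost $n^{\tilde{O}(\spg^{-2})}$, with all MPS manipulations (including $O(n\log n)$ trimming operations) at $\poly(n)$ bond dimension, giving overall $n^{\tilde{O}(\spg^{-2})}$. The main obstacle is the quantitative coordination of the parameters $(s, k, \xi)$: $\xi$ must be simultaneously small enough that the cumulative trimming error across $O(n\log n)$ invocations of $\Trim_\xi$ does not exceed $\delta_0$ (using the area-law bound on $T$'s MPS bond dimension) and large enough that the post-trim MPS bond dimension $\tilde{q}/\xi^2$ remains polynomial. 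Theorem~\ref{thm:al-dg}, proved non-constructively earlier in the paper, precisely supplies the bound that makes this balance possible.
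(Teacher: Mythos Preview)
Your proposal follows the paper's proof closely in structure: the same induction on merge level with the same invariants (viability, dimension, bond dimension, energy estimate), the same use of Theorem~\ref{t:aisefficient} for \gen, the same reliance on the area law (Theorem~\ref{thm:al-dg}) to justify trimming, and the same treatment of the energy-estimation and final-amplification steps. The high-level picture and the identification of the ``main obstacle'' are both right.

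There is, however, a genuine quantitative gap in your analysis of the random sampling and error reduction inside \mpr'. Sampling from the tensored set of dimension $\tilde q^{2}$ down to $s$ yields, by Lemma~\ref{l:rs}, an overlap $\mu_0=\Theta(s/\tilde q^{2})$, \emph{not} $\Omega(D^{-2})$; with your choice $s=\Theta(\tilde q/D^{2})$ this gives $\mu_0=\Theta(1/(D^{2}\tilde q))$, which depends on $\tilde q$ (and hence on $r$ and $n$). Moreover, Lemma~\ref{l:er} does not ``shrink the viability by a factor $\sim D^{2}\Delta$'' per iteration: one application gives $\delta'=\Delta/\mu_0^{2}$, which with the tiny $\mu_0$ above is $\Theta(\Delta D^{4}\tilde q^{2})$. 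For polynomial $r$ this can vastly exceed $1$, so a single iteration does not restore viability, and iterating the single-step bound does not help. The correct analysis, which the paper implicitly invokes by referring back to the bootstrapping argument, treats the $k$ applications of $\{A_i\}$ together as one application of $K^{k}$, a $(D^{k},\Delta^{k})$-spectral AGSP, giving $\delta_k=\Delta^{k}/\mu_0^{2}$. This is why the paper sets $s=\Theta(r\log r)$ and $k=\tfrac{1}{2}\lceil\log_D s\rceil$, so that $D^{2k}s$ lands in $[s^{2},Ds^{2}]$ and $\Delta^{k}$ is small enough to kill the $\mu_0^{-2}$ factor; for $r=\poly(n)$ and constant $D$ this $k$ is $\Theta(\log n)$, not $O(1)$. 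With your parameters $(s,k)=(\Theta(\tilde q/D^{2}),O(1))$ the viability restoration step simply fails when $r$ is polynomial in $n$.
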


\begin{proof} 
The proof is based on the same ingredients as the proof of the area law given in Theorem~\ref{thm:al-dg}. There are two main differences:  we must show that the addition of the trimming step in \mpr' does not affect the quality of the viable set returned, and we must verify that the energy estimation step is sufficiently accurate. 

We show by induction on $i=0,\ldots,\log n$ that for all $j\in \{1,\ldots,n/2^i\}$, (i) the set  
  $V^{i-1}_{j}$ is $.015$-viable for $T$ and satisfies 
	 $|V^{i-1}_{j}|\leq Ds^2$, for $D$ and $s$ to be specified below, and (ii) $ \eps_{i-1,j}'$ is within an additive $\pm 3$ of its true value (the ground state energy of the restriction of $H$ to the corresponding spaces). 
	
	Both conditions are satisfied for $i=0$: for each $j\in\{1,\ldots,n\}$, 
  $V^0_j$ is $0$-viable for $T$ with $|V^0_j|=d$, and the energy estimate is accurate since the restriction of the Hamiltonian to a single qudit is identically $0$. 
	
	Suppose the induction hypothesis verified for $i-1$, fix $j\in \{1,\ldots,n/2^i\}$, and let $M$ be the region defined in the algorithm. 
		Correctness of the energy estimates $ \eps_{i-1,2j-1}'$ and $ \eps_{i-1,2j}'$ at step $(i-1)$ implies that $\eps_M '$ is within $\pm 7$ of the correct value $\eps_M$. By Theorem~\ref{t:aisefficient},  \gen \ returns a set of $D^2$ operators $\{A_i\}$ with the properties stated in Theorem~\ref{t:ais}. 

At this stage we are exactly in the same setting as for the proof of Proposition~\ref{prop:albootstrap-dg}, except for the additional trimming step in \mpr'. Following that proof we conclude that, prior to the trimming step, the merged set $V^i_j$ is $.01$-viable for $T$ with probability  $1- e^{-\Omega(s)}\geq 1-\frac{1}{n^2}$ provided $s = \Omega(r\log (q/s)+\log n)$. We choose
\begin{equation}\label{eq:algo-sk-dg}
s\geq 1600 r(\log r +1)\qquad\text{and}\qquad k = \frac{1}{2}\lceil \log_D(s) \rceil.
\end{equation}
This choice of $k$ ensures $s^2 \leq D^{2k} s \leq D s^2$, so that the bound on the dimension of $V^i_j$ required to establish the induction hypothesis holds. 

It remains to verify the quality of $V^i_j$ as a viable set. Note first that Theorem~\ref{thm:al-dg} allows us to bound the bond dimension $b$ of any vector in $T$ by a polynomial, at the expense of replacing $T$ by a set that is $10^{-4}$-close to $T$. Then the analysis given in Lemma~\ref{lem:trimming} shows that the effect of the trimming can be incorporated by replacing the error reduction parameter $\Delta$ associated with the $\{A_i\}$ by $(\Delta +\sqrt{nrb}\xi)$. Choosing $\xi$ such that 
\begin{equation}\label{eq:algo-xi-dg}
\sqrt{nrb}\xi < 10^{-4} \delta D^{-12},
\end{equation}
the remaining calculation applies and yields that $V^i_j$ is $.015$-viable for $T$.

Once this has been established, an application of the third item from Theorem~\ref{t:ais} shows that given the choice of $t$ made in the algorithm the subspace $V$ obtained after the energy estimation step is $O(\spg^{2})$-viable for $\tilde{T}$. Using that $\|\tilde{H}_M\| = O(\spg^{-1}\log \spg^{-1})$ it follows that $\eps'_{i,j}$ is within an arbitrarily small constant of the minimal energy of $\tilde{H}_M$ restricted to $\tilde{T}$.  Using the guarantee from Theorem~\ref{t:aisefficient},  $\eps'_{i,j}$ is within $\frac{3}{2}$ of  the minimal energy $\eps_M$ of the restriction of $H$ to $\mH_M$. This completes the inductive step.

We have shown that the iterative step succeeds with probability at least $1- 1/n^2$; since there are a total of $n$ such merging steps, applying a union bound the set $V^{\log n}_1$ is $.015$-viable with probability at least $1- \frac{1}{n}$.

To conclude it remains to analyze the final error improvement step. 
Let $\ket{\psi}$ be an eigenvector of $H$ with eigenvalue $\eps_0$, and $\ket{v}\in V_1^{\log n}$ such that $\ket{v} = \alpha \ket{\psi} + \sqrt{1-|\alpha|^2} \ket{v^\perp}$, where $\alpha \geq 0.9$ and $\ket{v^\perp}$ is supported on eigenvectors of $H$ with eigenvalue at least $\eps+\spg$. Following the same analysis as given in the proof of Lemma~\ref{lem:error-spectral} it follows that after renormalization the overlap of $K\ket{v}/\|K\ket{v}\|$ with $\ket{v}$ has improved from $\alpha$ to 
$$ \frac{\alpha^2}{\alpha^2 + (1-\alpha^2)(1-\spg/\|H\|)} =  \frac{\alpha^2}{1-\spg(1-\alpha^2)/\|H\|}\geq \alpha^2\Big(1+\frac{\spg}{2\|H\|}\Big).$$
Thus the set $K\{\ket{y_i^{(1)}}\}$ is $0.9(1+\spg/(2\|H\|))$-viable for $T$. Assuming $\xi$ is chosen small enough (satisfying~\eqref{eq:algo-xi-dg} suffices), by Lemma~\ref{lem:trimming} the set $\{\ket{y_i^{(2)}}\}$ will remain $0.9(1+\spg/(3\|H\|))$-viable for $T$. Repeating this procedure $\tau$ times yields a set $W$ that is $\delta$-viable for $T$.
Finally, each of the $r$ vectors $\ket{z_i}$ returned by the algorithm must have energy at most $\eps_0+\delta\spg$, which using the spectral gap condition implies that $\Span\{\ket{z_i}\}$ and $T$ are mutually $\delta$-close.   

The algorithm requires only a polynomial number of operations on MPS representations of vectors.  Due to trimming, all these vectors have polynomial bond dimension and thus each operation can be implemented in polynomial time. The complexity is dominated by the complexity of the procedure \gen\ and the application of the operators $A_i$, which is  $n^{\tilde{O}(\spg^{-2})}$.
\end{proof}

\subsection{Frustration-free Hamiltonians with a unique ground state}
\label{sec:alg-ff}

The most computation-intensive step of the \low\ algorithm is the construction, via \gen, and subsequent application in \mpr', of the set of operators $\{A_i\}$. In the special case where the Hamiltonian $H$ satisfies Assumption (FF), i.e. $H$ is frustration-free and has a spectral gap, the operators $\{A_i\}$  can be constructed very efficiently, yielding improved bounds on the running time. 
The overall algorithm remains as described in Figure~\ref{fig:algo1}, with \gen\ instantiated with the efficient procedure described in Theorem~\ref{t:aisff}. 

\begin{theorem} \label{t:unique}  Let $H$ be a local Hamiltonian satisfying Assumption (FF), $\ket{\Gamma}$ the unique ground state of $H$, and $\delta= n^{-\omega(1)}$.  With probability at least $1- \frac{1}{n}$ the lowest-energy vector $\ket{z}$ returned by \low$(H,\delta)$ satisfies $|\braket {z}{\Gamma}|\geq 1 -\delta$.  Moreover the algorithm runs in time  ${O}(n^{1+o(1)}\Time(n))$, where $\Time(n) = O(n^{2.38})$ denotes matrix multiplication time. 
\end{theorem}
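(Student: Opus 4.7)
The plan is to adapt the proof of Theorem~\ref{t:alg1} to the (FF) setting, exploiting three simplifications that this special case affords: (i) the ground space is one-dimensional ($r=1$), so the dimension parameter $s$ in random sampling can be chosen as a constant (depending only on $d$ and $\gamma$) rather than scaling with $r$; (ii) no energy-estimation subroutine is needed, since the AGSP from Theorem~\ref{t:aisff} is constructed relative to the truncated frustration-free Hamiltonian whose ground energy is known to be zero; (iii) the operators $\{A_i\}$ returned by $\gen$ have bond dimension $2^{\tilde{O}(\gamma^{-2}\log^5 d)}$, which is constant for fixed $d$ and $\gamma$.

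For correctness, I would run the same induction on levels $i=0,\dots,\log n$ as in Theorem~\ref{t:alg1}, maintaining the invariant that each $V^{i-1}_j$ is $0.015$-viable for $\{\ket{\Gamma}\}$ and of dimension at most $Ds^2$. At each inductive step, the analysis of $\mprtrim$ proceeds identically: tensoring gives a $0.03$-viable set of dimension $\leq (Ds^2)^2$, random sampling produces an $s$-dimensional $(1-\delta')$-viable subspace with $\delta' = \Omega(1/D^2)$ (via Lemma~\ref{lem:viable-sample}, now applied with $r=1$ so that $s = O(1)$ suffices to ensure $\eta<1/n^2$), and $k=O(1)$ rounds of error reduction with the $\{A_i\}$ from Theorem~\ref{t:aisff} restore the viability to $0.015$. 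Correctness of the trimming step relies on the area law: Theorem~\ref{thm:al-dg} with $r=1$ guarantees that (a small perturbation of) $\ket{\Gamma}$ admits an MPS representation with bond dimension $b = \alone$, which for constant $d,\gamma$ is constant, so choosing $\xi = \poly^{-1}(n)$ as in~\eqref{eq:algo-xi-dg} keeps the viability degradation negligible. The final error-improvement step uses $K = \Id - H/\|H\|$ in place of the AGSP; since $\ket{\Gamma}$ has eigenvalue $1$ under $K$ and any orthogonal eigenvector has eigenvalue at most $1-\gamma/\|H\|$, the same geometric-improvement calculation as in Theorem~\ref{t:alg1} shows that after $\tau = O(\|H\|\gamma^{-1}\log \delta^{-1})$ iterations the surviving vector has overlap at least $1-\delta$ with $\ket{\Gamma}$.

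For the running time, the bookkeeping is the heart of the theorem. At level $i$, each of the $n/2^i$ blocks acts on a subsystem of size $2^i$, so the call to $\gen$ costs $(2^i)^{1+o(1)}$ by Theorem~\ref{t:aisff} (for constant $d,\gamma$). Since $D,s,k$ are constant, $\mprtrim$ performs $O(1)$ applications of constant-bond-dimension MPOs to the current basis; by Lemma~\ref{lem:trimming}, after each trimming step the MPS bond dimension is at most $s\xi^{-2} = n^{O(1)}$, and each trim costs $O(\ell\,\Time(dsq)) = \tilde{O}(\Time(n))$ where $\ell = 2^i$. Summing over blocks at level $i$ gives $(n/2^i)\cdot 2^i \cdot \Time(n)^{1+o(1)} = n^{1+o(1)}\Time(n)$, and summing over the $\log n$ levels retains the same bound. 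The final step performs $\tau = O(n\log \delta^{-1}) = n^{1+o(1)}$ iterations, each applying the constant-bond-dimension MPO $K$ and trimming, at cost $\tilde{O}(\Time(n))$ per iteration, giving $n^{1+o(1)}\Time(n)$ in aggregate. The total is therefore $O(n^{1+o(1)}\Time(n))$ as claimed.

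The main obstacle, and the place where care is required, is the simultaneous control of MPS bond dimension during the $k$ repeated applications of the $\{A_i\}$ inside $\mprtrim$. Naively applying a constant-bond-dimension MPO blows up the MPS bond dimension by a constant factor, but after $O(\log n)$ merge levels this compounds to $n^{O(1)}$ and, without interleaved trimming, could in principle escape to super-polynomial size during the error-reduction loop. The interleaving of $\Trim_\xi$ between each application of the $A_i$, combined with the area-law guarantee that a good viable set with bond dimension at most $s\xi^{-2}$ exists at every stage, is what keeps the bond dimension at $n^{O(1)}$ throughout and makes the $\Time(n)$ factor in the running time tight rather than raised to a power growing with $\log n$.
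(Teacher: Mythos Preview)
Your correctness argument mirrors the paper's: defer to the induction of Theorem~\ref{t:alg1}, use $r=1$ so that $s$ and $k$ are $O(1)$, omit energy estimation, and invoke Theorem~\ref{t:aisff} for constant-bond-dimension $\{A_i\}$.

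The running-time analysis, however, has a gap at exactly the step you yourself flag as delicate. You write that $\xi=\poly^{-1}(n)$ gives trimmed bond dimension $s\xi^{-2}=n^{O(1)}$ and then claim each trim costs $\tilde O(\Time(n))$. But $\Time(n^{c})$ is not $\Time(n)$ for $c>1$: with a generic inverse-polynomial $\xi$ you only obtain a polynomial-time algorithm, not the stated $n^{1+o(1)}\Time(n)$. The paper's proof opens by stressing that ``it is important to choose the trimming parameter $\xi$ to be as large as possible.'' Using that the area law gives $\ket{\Gamma}$ a $\poly^{-1}(n)$-accurate MPS approximation with \emph{sub-linear} bond dimension $b=n^{o(1)}$, the constraint~\eqref{eq:algo-xi-dg} is already met by $\xi=n^{-(1/2+o(1))}$; only with this choice is the trimmed bond dimension $s\xi^{-2}=n^{1+o(1)}$ quasi-linear, so that $\Time(dsq)=n^{o(1)}\Time(n)$ and the per-level cost telescopes to $n^{1+o(1)}\Time(n)$. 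You in fact have a cleaner route available---your correct observation that $b$ is \emph{constant} for a constant-accuracy approximation already permits $\xi=\Theta(n^{-1/2})$---but you never cash it in, and the jump from ``bond dimension $n^{O(1)}$'' to ``cost $\Time(n)$'' is unjustified as written.
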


\begin{proof} 
The proof follows very closely the proof of Theorem~\ref{t:alg1}, and we only indicate the main differences. To ensure the algorithm is efficient, it is
important to choose the trimming parameter $\xi$ to be as large as
possible. It follows from the area law for 1D gapped
systems~\cite{ref:AKLV2013-AL} (see also Theorem~\ref{thm:al-dg} for $r=1$) that the ground state
$\ket{\Gamma}$ of $H$ can be approximated up to accuracy
$\poly^{-1}(n)$ by a matrix product state with sub-linear bond
dimension.  Thus by Lemma \ref{lem:trimming}, using that $r,s$ are
both constant, and treating $d,\spg$ as constants, a choice of 
$\xi= n^{-(1/2 + \omega(1))}$ satisfying~\eqref{eq:algo-xi-dg}
 will suffice
to ensure the error remains negligible, while also maintaining the
property that all MPS manipulated have quasi-linear bond
dimension. The essential operations on such vectors required in the
algorithm, such as multiplication by an MPO $A_i$ of constant bond
dimension, or writing in canonical form, can all be computed in time
$\tilde{O}(n \Time(B))$ where $\Time(B)$ is matrix multiplication
time for $B\times B$ matrices and $B$ is an upper bound on the bond
dimension of the MPS being manipulated; $\Time(B)$ corresponds to
the cost of performing individual singular value decompositions on
the tensors that form each of the MPS. The claim on the running time
follows since the number of iterations of the algorithm is
logarithmic.  
\end{proof}

\subsection{Gapless Hamiltonians}
\label{sec:alg-ld}

We extend the analysis of the \low\ algorithm to the case of gapless Hamiltonians satisfying the (LD) assumption. The main obstacle, of course, consists in dealing with a gapless system. 
What makes it possible to tackle this case are the strong properties of a viable set. Suppose that $S$ is a viable set for $T$, the set of states of energy at most $\eta$. Then $S$ is also a viable set for $T'$, the set of states of energy at most $\eta - \mu$ for an arbitrary choice of $\mu$. Now, if we apply an AGSP which amplifies the norm of states with energy less than $\eta - \mu$, and decreases the norm of states with energy greater than $\eta$, this is guaranteed to improve the quality of the viable set. This is because by Lemma \ref{l:2}, for each state in $T'$ the viable set $S$ 
contains an approximation to that state that is guaranteed to have no projection onto the orthogonal complement of $T'$ in $T$. In this sense, regarding $S$ as a viable set for $T'$ 
creates a virtual spectral gap $\mu>0$. 

Due to the absence of a constant spectral gap, and our introduction of an ``artificial'' gap of order $1/\log n$, the procedure now runs in quasipolynomial time  $e^{\tilde{O}(\log^3 n)}$. The result is stated in the following theorem. 

\begin{theorem}\label{t:alg2}  Let $H$ be a local Hamiltonian satisfying Assumption (LD), $\eta>0$ the associated energy parameter, $\mu = \Omega(1/\log n)$ and $\delta \geq \poly^{-1}(n)$. With probability at least $1- \frac{1}{n}$ the set $\{ \ket{z_i}\}$ returned by \low$(H,\delta,(\eta,\mu))$ is an orthonormal set of $r$ states each having energy at most $\eps_0 + \eta - \mu + \delta$ with respect to $H$.  The algorithm runs in time  $2^{\poly\log(n)}$.
\end{theorem}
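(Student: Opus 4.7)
The proof plan mirrors the template of Theorem~\ref{t:alg1}, with two structural changes: the AGSP calls are replaced by the low-density efficient construction of Theorem~\ref{t:aisgaplessefficient}, and the target subspace is allowed to contract by $\mu/\log n$ at each level of the merge tree, replacing the role of the physical spectral gap by an artificial one. I would prove by induction on $i\in\{0,\ldots,\log n\}$ the invariant that for every $j\in\{1,\ldots,n/2^i\}$ the set $V^i_j$ is $0.015$-viable for (the restriction to the relevant region of) $T_i = H_{[\eps_0,\eps_0+\eta-i\mu/\log n]}$, it consists of at most $Ds^2$ MPS of quasi-polynomial bond dimension, and $\eps'_{i,j}$ is within $\pm 3$ of the true local ground energy $\eps_M$. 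The base case is immediate. For the inductive step, Lemma~\ref{l:t} turns two $0.015$-viable sets for $T_{i-1}$ into a $0.03$-viable set for $T_{i-1}$ on the merged region $M$; feeding $\eps'_M = \eps'_{i-1,2j-1}+\eps'_{i-1,2j}$ (within $\pm 7$ of $\eps_M$) into $\gen(H,M,\eps'_M,(\eta-(i-1)\mu/\log n,\mu))$ returns operators $\{A_\alpha\}$ and a pair $\tilde T \supseteq \tilde T_-$ with $\tilde T$ being $0.005$-close to $H_{[\eps_0,\eps_0+\eta-(i-1)\mu/\log n]}\supseteq T_{i-1}$ and $\tilde T_-$ being $0.005$-close to $T_i$. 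Lemma~\ref{l:close} then makes the tensored set $0.04$-viable for $\tilde T$.

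Inside \mpr', I would run the bootstrapping analysis of Proposition~\ref{prop:albootstrap-ld}: random sampling (Lemma~\ref{l:rs}) with $s=\Theta(r\log(q/s)+\log n)$ reduces the dimension while preserving enough overlap, and $k=\tfrac12\lceil\log_D s\rceil$ rounds of error reduction by $\{A_\alpha\}$ --- each converting viability for $\tilde T$ into viability for $\tilde T_-$ with parameter $\Delta/(1-\delta)^2$ per Theorem~\ref{t:aisgapless} --- restore $0.01$-viability for $\tilde T_-$, hence $0.015$-viability for $T_i$ via robustness. To absorb the trimming error I would invoke the area law from Theorem~\ref{thm:al-ld}, which bounds (after moving to a nearby subspace) the bond dimension $b$ of vectors in $T_i$ by a quasi-polynomial; Lemma~\ref{lem:trimming} then permits the choice
\begin{equation}\label{eq:algo-xi-ld}
\sqrt{n r b}\;\xi \;<\; 10^{-4}\,\delta\,D^{-12},
\end{equation}
so that each trimming step folds neatly into the $\Delta$ budget. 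The energy estimation step carries through as in the (DG) case: applying $\{A_\alpha\}$ a sufficient number $t$ of times drives the viability well below $\|\tilde H_M\|^{-1}$, whereupon the Rayleigh-Ritz minimum over $V$ is within $\pm 3/2$ of the minimal energy of $\tilde H_M|_{\tilde T_-}$, which by Lemma~\ref{lem:exp-approx} and the choice of truncation parameters in \gen\ is itself within $\pm 3/2$ of $\eps_M$.

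After $\log n$ merges, a union bound over the $n-1$ merge operations yields that $V^{\log n}_1$ is $0.015$-viable for $T_{\log n}=H_{[\eps_0,\eps_0+\eta-\mu]}$ with probability at least $1-1/n$. For the final amplification I exploit the same artificial-gap idea used in Theorem~\ref{t:aisgapless}: although $H$ has no constant spectral gap, regarding $V^{\log n}_1$ as a viable set for the strictly smaller $T'=H_{[\eps_0,\eps_0+\eta-\mu-\mu/\log n]}$ exposes an effective gap $\mu/\log n$ against the orthogonal complement of $T_{\log n}$, so iterating $K=\Id-H/\|H\|$ for $\tau = \Theta(\|H\|(\log n/\mu)\log\delta^{-1})$ rounds --- with \Trim$_\xi$ applied after each round using the $\xi$ fixed in~\eqref{eq:algo-xi-ld} --- drives the viability of the iterated set down to $\delta$, by exactly the same calculation as in Theorem~\ref{t:alg1}. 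The final Rayleigh-Ritz step then returns $r$ orthonormal eigenvectors of $H|_W$ whose energies are within $\delta$ of eigenvalues in $[\eps_0,\eps_0+\eta-\mu]$, giving the claimed bound $\eps_0+\eta-\mu+\delta$.

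The principal obstacle is controlling the amplification by iterated $K$ without a true spectral gap: the artificial gap of $\mu/\log n$ forces $\tau$ and hence the MPS bond dimensions produced during the final loop to be quasi-polynomial, and only the area law of Theorem~\ref{thm:al-ld} ensures that the trimming threshold in~\eqref{eq:algo-xi-ld} remains compatible with the required accuracy. The runtime bound $2^{\poly\log n}$ then follows from the $e^{\tilde O(\log^3 n)}$ cost of each \gen\ call (Theorem~\ref{t:aisgaplessefficient}) together with the quasi-polynomial bond dimensions maintained throughout, all multiplied by the $\tilde O(n)$ merge operations in the tree and by $\tau$ in the final loop.
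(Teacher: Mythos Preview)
Your overall plan is on target and matches the paper's strategy: follow the template of Theorem~\ref{t:alg1}, swap in the efficient (LD) AGSP of Theorem~\ref{t:aisgaplessefficient}, let the target subspace contract by $\mu/\log n$ at each merge level, and invoke the (LD) area law (Theorem~\ref{thm:al-ld}) to control trimming in the final step. However, the paper takes a simpler route at one key point: it uses \mpr\ (a single tensor, a single random sample, and a \emph{single} AGSP application per level) rather than \mpr', explicitly noting that ``there is no need to introduce \mpr' (since the final running time we are obtaining is already $n^{\poly\log n}$ anyways).'' Because quasi-polynomial bond dimension is acceptable here, trimming during the merge tree is unnecessary; only the final step uses $\Trim_\xi$, with the simpler choice $\xi = e^{-\log^{1+\omega(1)} n}$.

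Your decision to run \mpr' with $k=\tfrac12\lceil\log_D s\rceil$ rounds of error reduction introduces a genuine gap. You assert that ``each [round] convert[s] viability for $\tilde T$ into viability for $\tilde T_-$,'' but this is only justified for the \emph{first} round. The guarantee of Theorem~\ref{t:aisgapless} requires as input a set that is $\delta$-viable for the larger space $\tilde T$, precisely so that the approximating vectors furnished by Lemma~\ref{l:2} can be taken with orthogonal component lying entirely above energy $\eta_1$ --- this is the ``virtual gap'' mechanism spelled out in the proof of Theorem~\ref{t:aisgapless}. After one application of the $\{A_\alpha\}$ you have viability only for $\tilde T_-$; viability for the strictly larger $\tilde T$ need not survive, since the Chebyshev polynomial can vanish on eigenvectors with energy in $(\eta_0,\eta_1)$. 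For rounds $2,\ldots,k$ the orthogonal component of your approximants may therefore live in $\tilde T\setminus\tilde T_-$, where the AGSP gives no shrinkage, and the bound $\delta'=\Delta/(1-\delta)^2$ does not follow. (You could regenerate the $\{A_\alpha\}$ at each round with a shifted energy window, but that multiplies the per-level energy loss by $k$.) The paper avoids the issue entirely by applying the AGSP once per merge --- which is exactly what Proposition~\ref{prop:albootstrap-ld}, the result you cite, actually analyzes --- and simply absorbing the resulting bond-dimension growth into the quasi-polynomial runtime.
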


\begin{proof}
The proof follows closely that of Theorem~\ref{t:alg1} with the following simple modifications: Theorem \ref{t:aisefficient} is replaced by Theorem \ref{t:aisgaplessefficient}, there is no need to introduce \mpr' (since the final running time we are obtaining is already $n^{\poly \log n}$ anyways), and finally Theorem~\ref{thm:al-dg} is replaced by Theorem~\ref{thm:al-ld}; as a consequence  any choice of $\xi$ for the final step that is of order 
\begin{equation}\label{eq:algo-xi-ld}
\xi = e^{-\log^{1+\omega(1)} n}
\end{equation}
will suffice to guarantee that trimming induces an error that is negligible compared to $\delta = \poly^{-1}(n)$. 
\end{proof}

We note that we cannot make the stronger conclusion that the $r$ vectors $\ket{z_i}$ returned by the algorithm are low-energy eigenstates; while it does hold that each must have energy at most $\eps_0+ \eta - \mu + \delta$ (since the  closest vectors to $H_{[\eps_0, \eps_0+ \eta - \mu]}$ in $W$ will have this property), in the absence of a spectral gap for $H$ the $\ket{z_i}$ may still be constituted of a mixture of low-energy eigenstates with energy slightly higher than $\eps_0+ \eta - \mu + \delta$.

\section{Discussion}
\label{sec:discussion}

We have introduced a framework for designing algorithms (and proving area laws) by combining procedures for efficiently manipulating viable sets. The scope and efficiency of the resulting algorithms depend upon the efficiency of these procedures. The central limiting component is the efficiency of the underlying AGSP constructions: any substantial improvement of the parameters of our constructions would almost automatically lead to improved area laws, faster algorithms, possibly for scenario that we are currently unable to handle. This naturally leads to a program of determining the 
ultimate limits for these parameters and efficiency bounds, and in particular to the following questions:
\begin{enumerate}
\item What is the best $D-\Delta$ trade-off achievable for an AGSP, depending on assumptions placed on the local Hamiltonian from which it originates? Currently our trade-offs take the form $2^{\log^{3/2 - \mu} D} \Delta < 1$ for arbitrarily small constant $\mu$. Is a better tradeoff achievable, with a larger exponent than $3/2$? Note that currently we only make use of trade-offs of the form $D^c\Delta < 1$ for constant $c$, which is already implied by the above with exponent $1$ instead of $3/2$. Improving the exponent could help make progress towards an area law for 2D systems. For a given trade-off, a related question asks for the smallest value  
of $D$ for which $D\Delta < 1$: this value is important for the efficiency of the algorithm, and also directly enters the parameters obtained for the area law by the bootstrapping argument. Currently, our constructions achieve $ D \simeq \exp(\log^3 d /\gamma)$. 
\item The soft truncation procedure used for the AGSP construction for our algorithms achieves a $\poly(n) $ bond dimension at all cuts. Could that dimension be lowered, perhaps to polylogarithmic at all cuts? 
\item Is it possible to construct an AGSP with a favorable $D-\Delta$ trade-off, not only at one, two, or a constant number of pre-specified cuts, but simultaneously at every (or a constant fraction of) cut?
\item Our trimming procedure for viable sets is not completely satisfactory, and its dependence on the number of cuts as well as on the viable space dimension could potentially be improved. Could the more simple trimming procedure of~\cite{ref:LVV2013-1Dalg} also be applied in this setting?
\item For the case of an MPS approximation to a unique ground state, the parallel  trimming procedure used in  \cite{ref:LVV2013-1Dalg} yields a bound on the trimmed bond dimension that depends inverse-linearly on the desired approximation error, multiplied by the number of bonds trimmed.  It is not impossible that the same procedure would be more effective than proven, with a cost that does not scale with the number of bonds. Such a procedure could yield a nearly-linear time algorithm for the frustration-free case.   
\item What implications can be drawn from our results for the challenging scenario of ground states of local Hamiltonians in higher dimension --- e.g. on 2D lattices? Difficulties such as the efficiency of contracting 2D PEPS networks present significant obstacles to any algorithmic procedure; nevertheless it could be that our bootstrapping arguments could be ported to yield mild area laws in higher dimensions. 
\item The tensor network picture of our algorithm may have an interesting interpretation in terms of the bulk-boundary correspondence in AdS/CFT (see e.g.~\cite{maldacena2003eternal,pastawski2015holographic}). Specifically, the physical qubits would constitute the ``boundary", 
and are acted on directly by the AGSP, while the bulk degrees of freedom are the ones that are subject to the random sampling. 

\end{enumerate}

\paragraph{Acknowledgements}  
We thank Andras Molnar for comments on an earlier draft of this paper, and Christopher T. Chubb for comments and the permission to include the suggestive pictures representing the tensor network structure of the isometry produced by our algorithms. We thank the anonymous referees for valuable comments that greatly helped improve the presentation of the paper.

I. Arad's research was partially performed at the Centre for Quantum Technologies, funded by the Singapore Ministry of Education and the National Research Foundation, also
through the Tier 3 Grant random numbers from quantum processes. 
Z. Landau and U. Vazirani acknowledge support by ARO Grant 
W911NF-12-1-0541, NSF Grant CCF-1410022 and Templeton Foundation Grant 52536
T. Vidick was partially supported by the IQIM, an NSF Physics Frontiers
Center (NFS Grant PHY-1125565) with support of the Gordon and Betty
Moore Foundation (GBMF-12500028).  

\appendix

\section{Constructing an MPO for the cluster expansion}
\label{sec:cluster-mpo}

 In this appendix we provide 
the proof of Theorem~\ref{thm:cluster-approx} and Theorem~\ref{thm:cluster} from Section~\ref{sec:soft-truncation}.

\begin{proof}[Proof of Theorem~\ref{thm:cluster-approx}]
  For an integer $m\geq 1$ we let $\rho_m$ be the summation of
  $f(w)$ over all words $w$ such that there exists $m$ disjoint
  intervals, each of length at least $r$, such that the support of
  $w$ contains each interval but does not contain the two qudits
  that lie immediately to the left and right of the interval (we
  call these two qudits the ``boundary'' of the interval). Using the
  inclusion-exclusion principle one can verify that 
  \begin{equation}
  \label{eq:cluster-1}
    e^{-\beta H} -M_r(H) = -\sum_{m=1}^\infty (-1)^m \rho_m.
  \end{equation}
  We bound the operator norm of each $\rho_m$ individually. Write
  $\rho_m = \sum_{I=\{I_1,\ldots,I_m\}}\rho_I$, where the summation
  is over all $m$-tuples of disjoint intervals $I_1,\ldots,I_m$ of
  length at least $r$, and each $\rho_I$ contains all those $h_w$
  for which the support of $w$ contains each of the intervals $I_i$
  but not its boundary and is arbitrary everywhere else. Very
  roughly, the summation is over at most $n^{2m}/(m!)$ terms. Using
  that the boundaries are excluded, it is not hard to see that
  $\rho_I = e^{-\beta H_{\overline{I}}} \prod_{j=1}^m \eta(I_j)$,
  where $H_{\overline{I}}$ contains all terms in the Hamiltonian
  that do not act on the qudits in the boundary of $I_j$ and
  $\eta(I_j)$ is the sum of all $f(w)$ such that the support of $w$
  is exactly $I_j$. Using $\|e^{-\beta H_{\overline{I}}}\|\leq 1$ we
  can bound
  \begin{align*}
    \norm{\rho_I} &\leq \prod_j \norm{\eta(I_j)} \\
    &\leq \prod_j\Big(\sum_{w:\,supp(w)=I_j} 
      \frac{(-\beta)^{|w|}}{|w|!}\Big)\\
    &= \big(e^{\beta}-1\big)^{\sum_j|I_j|}. 
  \end{align*}
  Combining with~\eqref{eq:cluster-1}, 
  \begin{align*}
    \norm{e^{-\beta H} -M_r(H)} 
      &\leq \sum_{m=1}^\infty\norm{\rho_m}\\
      &\leq \sum_{m=1}^\infty 
        \sum_{I=\{I_1,\ldots,I_m\}}\norm{\rho_I}\\
    &\leq \sum_{m=1}^\infty \frac{n^{2m}}{m!} 
      \big(e^{\beta}-1\big)^{mr}\\
    &=e^{n^2(e^{\beta}-1)^{r}}-1 \,,
  \end{align*}
  where for the third line we used that $\beta$ is such that
  $e^{\beta}-1 <1$.
\end{proof}

\begin{proof}[Proof of Theorem~\ref{thm:cluster}]
  The $r^{\text{th}}$ expansion of $e^{-\beta H}$ is given by 
  \begin{align*}
    M_r(H) \EqDef \sum_{w\in S_r}\frac{(-\beta)^{|w|}}{|w|!} h_w \,,
  \end{align*}
  where $w$ is a word on the alphabet of local Hamiltonian terms
  $\{1, \ldots, n-1\}$, $h_w\EqDef \prod_{i\in w}h_i$, and $S_r$ is
  the set of words in which all connected components have a support
  of size at most $r-1$. Let $\uI=(I_1, I_2, \ldots, I_m)$ be a
  collection of disjoint segments on the line, and $\max(\uI)$
  denote the length of the largest segment in $\uI$. We write $w\in
  \uI$ to mean that the connected components of $w$ matches the
  segments specified by $\uI$. Using this notation, $M_r(H)$ can be
  rewritten as
  \begin{align*}
    M_r(H) = \sum_{\max(\uI)<r} 
      \sum_{w\in \uI} \frac{(-\beta)^{|w|}}{|w|!} h_w \,.
  \end{align*}
  A rather straightforward combinatorial argument shows that 
  for a given $\uI=(I_1, \ldots, I_m)$,
  \begin{align*}
    \sum_{w\in \uI} \frac{(-\beta)^{|w|}}{|w|!} h_w
      = \prod_{j=1}^m \sum_{w\in I_j} 
        \frac{(-\beta)^{|w|}}{|w|!} h_w \,,
  \end{align*}
  where the notation $w\in I_j$ means that the support of the word
  $w$ has a single connected component whose support is $I_j$.
  Therefore, if we define for each segment $I$
  \begin{align}
    \rho_I \EqDef \sum_{w\in I} 
        \frac{(-\beta)^{|w|}}{|w|!} h_w \,,
  \end{align}
  then
  \begin{align}
  \label{eq:rho-expansion}
    M_r(H) = \sum_{\max(\uI)<r}
      \rho_{I_1}\otimes\rho_{I_2}\otimes\cdots \otimes \rho_{I_m}\,.
  \end{align}
  We use \Eq{eq:rho-expansion} as the basis for an efficient MPO
  representation of $M_r(H)$.

  \paragraph{1st step: creating a table of $\rho_I$} The first step
  is a pre-processing step, which can be run performed before the
  start of the algorithm. Its goal is to create a table of MPO
  representations of \emph{all} $\rho_I$ that appear in
  \Eq{eq:rho-expansion}. This can be done in $nd^{O(r)}$ time.
  Indeed, note first that the total number of intervals $I$ to
  consider is at most $nr$. The associated MPO can be computed
  iteratively, starting with $I=\emptyset$ for which
  $\rho_\emptyset=\Id$. Assuming all $\rho_I$ with $|I|<s$ have been
  determined, compute an MPO for $\rho_I$, for any $I$ such that
  $|I|=s$, as follows.  Clearly, 
  \begin{align*}
    \rho_I = e^{-\beta H_I} - \sum_{\uI'} 
      \rho_{I'_1}\otimes\rho_{I'_2}\cdots \otimes\rho_{I'_m},
  \end{align*}
  where the summation runs over all disjoint subsets $\uI'=(I'_1,
  I'_2, \ldots I'_m)$ included in $I$ and with $m\geq 2$. An MPO for
  the first term can be obtained in time $d^{O(s)}$ by direct matrix
  exponentiation. The second term is expressed as the sum of most
  $2^s$ terms, for each of which an MPO was computed in a previous
  iteration. Altogether $\rho_I$ can therefore be computed in time
  $d^{O(s)}$ and stored in memory as an MPO of bond dimension at
  most $d^r$. 
  
  \paragraph{2nd step: creating the MPO of $M_r(H)$} We
  follow the expansion~\Eq{eq:rho-expansion}, using a signaling
  mechanism through which every site tells the site to its {right}
  to which $\rho_I$ it belongs. This ensures that every non-vanishing
  contraction of the virtual indices corresponds to exactly one
  product $\rho_{I_1}\otimes\rho_{I_2}\otimes\cdots$ from
  \Eq{eq:rho-expansion}. 
  
  Virtual bonds are indexed by triples $(\ell,k,\alpha)$. The
  virtual bond across sites $a, a+1$ describes the segment $I$ to
  which $a$ belongs: $\ell\in \{0,1,\ldots, r-1\}$ denotes the width
  of $I$, $k\in \{1,\ldots, r-1\}$ denotes the position of the site
  $a$ within $I$, and $\alpha$ corresponds to the index of the
  virtual bond in the MPO expansion of $\rho_I$.  For example,
  suppose that site $a$ is in third position in the support of
  $\rho_I$, where $|I|=8$. Then it transmits to site $a+1$ the
  indices $\ell=8, k=3$. Site $a+1$ will then transmit to $a+2$ the
  indices $\ell=8, k=4$ and so on. When the last site in $\rho_I$ is
  reached, in our example site $a+5$, it transmits to $a+6$ the
  indices $(k=8,\ell=8)$. Then $a+6$ could either be an empty site,
  transmitting $\ell=k=0$ to the right, or start a new segment $I$
  with any $\ell>0$. See \Fig{fig:transmission} for an illustration
  of this signaling mechanism.
  \begin{figure}
    \begin{center}
      \includegraphics[scale=1]{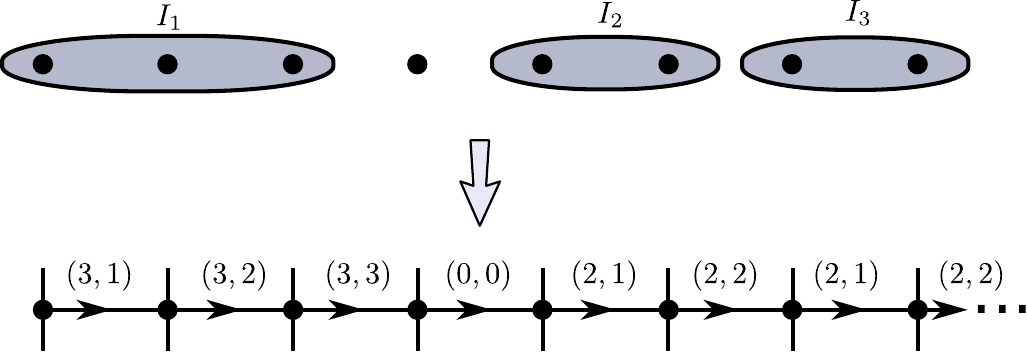}
      \caption{\label{fig:transmission}
      An example of the $(\ell,k)$ indices that give rise to the
      configuration of segments $I_1=(1,2,3); I_2=(5,6); I_3=(7,8)$.}
    \end{center}
  \end{figure}
  
  To write a formal definition of the MPO, let us use
  $[A^{(a)}(I)]^{i,j}_{\alpha_1,\alpha_2}$ to denote the tensor
  associated with $\rho_I$ at site $a\in I$. In order to simplify
  notation, when the site $a$ is the left-most (resp. right-most)
  site in $I$ we use the convention that
  $[A^{(a)}(I)]^{i,j}_{\alpha_1,\alpha_2}$ is non-vanishing only
  when $\alpha_1=1$ (resp. $\alpha_2=1$).  Finally, we denote each
  segment $I$ by $I(\ell, a)$ where $\ell$ is the width of the
  segment and $a$ is its first site. For a non-extremal site $a$,
  the tensor $A^{(a)}$ of $M_r(H)$ is given by
  \begin{align}
   &[A^{(a)}]^{i,j}_{(\ell_1, k_1, \alpha_1), (\ell_2, k_2, \alpha_2)}\EqDef \notag\\
    &  
			\begin{cases}
        \big[A^{(a)}\big(I(\ell_1,a-k_1+1)\big)\big]^{i,j}_{\alpha_1, \alpha_2} 
          \  & \text{for $k_1<\ell_1$ and 
            $\ell_1=\ell_2$, and $k_2=k_1+1$,} \\
        \big[A^{(a)}\big(I(\ell_2,a)\big)\big]^{i,j}_{\alpha_1, \alpha_2} 
          \  & \text{for $k_1=\ell_1$ and $0<\ell_2\le n-a+1$
             and $k_2=1$,} \\
        \delta_{i,j}
          \  & \text{for $k_1=\ell_1$ and $\ell_2=k_2=0$ and
          $\alpha_1=\alpha_2=1$,}\\
        0
          \  & \text{otherwise.}
      \end{cases} \label{eq:Atensor}
  \end{align}
  The first case corresponds to a site $a$ in the interior of the
  segment $I = I(\ell_1,a-k_1+1)$.  The second case corresponds to
  an $a$ that is the first site of a new segment $I=I(\ell_2, a)$.
  Note that the condition $\ell_2\le n-a+1$ guarantees that this
  segment does not exceed the right side of the chain. Finally, the
  third case corresponds to an empty site $a$. 
  
  To complete the definition it remains to specify $A^{(1)}$ and
  $A^{(n)}$. Just as the tensors for $\rho_I$, we keep both left and
  right indices but make them non-zero only when $\ell=k=0$ and
  $\alpha=1$. Then $A^{(1)}$ is defined as $A^{(a)}$ with the
  additional requirement that it is non-vanishing only when
  $\ell_1=k_1=0$ and $\alpha_1=1$. The tensor $A^{(n)}$ is defined
  directly by \eqref{eq:Atensor}. In that case, for every
  $(\ell_1,p_1,\alpha_1)$ there is at most one triple
  $(\ell_2,p_2,\alpha_1)$ for which $A^{(n)}$ is non-vanishing, and
  so without loss of generality we can map it to $\ell_2=k_2=0$ and
  $\alpha_2=1$.
  
  To finish the proof note that the vritual bond dimension is 
  bounded by $r(r-1)d^r<r^2d^r$, and therefore the second step can
  be done in time $nd^{\bigO{r}}$ since it only involves local
  assignments.
\end{proof}

\bibliography{algo1D}

\end{document}